\newtheorem{prediction}{Prediction}
\newtheorem{assumption}{Assumption}
\newtheorem{proposition}{Proposition}
\newtheorem{lemma}{Lemma}
\newtheorem{corollary}{Corollary}
\long\def\comment#1{}
\title{Revisiting Granular Models of Firm Growth}
\author[1,2,3]{Jos\'{e} Moran}
\affil[1]{Macrocosm Inc., Brooklyn NY}
\affil[2]{Institute for New Economic Thinking at the Oxford Martin School, University of Oxford}
\affil[3]{Complexity Science Hub, Vienna}
\author[4]{Angelo Secchi}
\affil[4]{PSE - Universit\'e Paris 1 Panth\'eon-Sorbonne}
\author[5,6,7]{Jean-Philippe Bouchaud}
\affil[5]{Capital Fund Management, Paris}
\affil[6]{Chair of Econophysics and Complex Systems, \'{E}cole Polytechnique}
\affil[7]{Académie des Sciences, Paris}
\titleformat{\section}[block]{\flushleft\large\bfseries}{\thesection.}{0.5em}{}
\titleformat{\subsection}[block]{\flushleft\bfseries}{\thesubsection.}{0.5em}{}
\titleformat{\subsubsection}[runin]{\normalsize\itshape}{\bfseries\thesubsubsection.}{0.5em}{}[.--\:]
\renewcommand{\thesubsubsection}{\arabic{section}.\arabic{subsection}.\alph{subsubsection}}
\titlespacing{\section}{0ex}{6ex}{3ex}
\titlespacing{\subsection}{0in}{3ex}{1.5ex}
\titlespacing{\subsubsection}{0mm}{2ex}{0.5em}
\renewcommand{\linespread}[1]{\setstretch{1}}
\definecolor{gp2}{HTML}{318CE7}
\definecolor{gp1}{HTML}{318CE7}
\newcommand{\dint}{\mathrm{d}}
\newcommand{\be}{\begin{equation}} \newcommand{\ee}{\end{equation}}
\newcommand{\ben}{\begin{enumerate}} \newcommand{\een}{\end{enumerate}}
\newcommand{\bc}{\begin{center}} \newcommand{\ec}{\end{center}}
\newcommand{\bi}{\begin{itemize}} \newcommand{\ei}{\end{itemize}}
\newcommand{\iu}{\mathrm{i}\mkern1mu}
\begin{document}

\maketitle

\begin{abstract}

  We revisit granular models that represent the size of a firm as the sum of the sizes of multiple constituents or sub-units. Originally developed to address the unexpectedly slow reduction in volatility as firm size increases, these models also explain the shape of the distribution of firm growth rates.

We introduce new theoretical insights regarding the relationship between firm size and growth rate statistics within this framework, directly linking the growth statistics of a firm to how diversified it is. The non-intuitive nature of our results arises from the fat-tailed distributions of the size and the number of sub-units, which suggest the categorization of firms into three distinct diversification types:  well-diversified firms with sizes evenly distributed across many sub-units, firms with many sub-units but concentrated size in just a few, and poorly diversified firms consisting of only a small number of sub-units.

Inspired by our theoretical findings, we identify new empirical patterns in firm growth.  Our findings show that growth volatility, when adjusted by average size-conditioned volatility, has a size-independent distribution, but with a tail that is much too thin to be in agreement with the predictions of granular models. Furthermore, the predicted Gaussian distribution of growth rates, even when rescaled for firm-specific volatility, remains fat-tailed across all sizes. Such discrepancies not only challenge the granularity hypothesis but also underscore the need for deeper exploration into the mechanisms driving firm growth.
\end{abstract}

\newpage

\section{Introduction} 

What statistical laws does the growth of firms abide to? The growth
dynamics of firms depends of course on a host of both common and
idiosyncratic time-dependent factors and hence a purely statistical
approach to this question might appear somewhat futile. Still the research of the past three decades has made it
increasingly evident that changes in macro aggregates like GDP are
best understood in light of the statistics of business activities at
the firm level~\citep{haltiwanger_1997}. Moreover, the questions raised by
company growth statistics are deeply related not only to macroeconomic
fluctuations, but also to the growth of individual wealth
or the growth of city sizes, topics that have also received a large
amount of attention in an interdisciplinary
literature~\citep{Gabaix2009}. Further to this, and in spite of emerging from very heterogeneous statistical samples, many statistical facts pertaining to firm growth have been shown to be robust and to some extent universal, in that they hold across different countries, time periods and levels of
aggregation, and in that they do so independently of the size proxy
that is used \citep{buldyrev2020rise,dosi2023foundations}.

Among the best known statistical properties of firm growth is the fact
that both the firm size distribution and the growth rate distribution are
non-Gaussian \citep{axtell2001,stanley1996scaling} and feature heavy tails. As a forceful
illustration of the latter, we represent in Figure~\ref{fig:motiv-evid} (left-panel,
purple line) the empirical density of year-on-year sales growth rates
computed over a large sample of publicly traded companies in the
United States.\footnote{This figure uses data on publicly traded US
  companies drawn from COMPUSTAT. Details on the data are reported in
  Section~\ref{sec:data} and in the Notes under the figure.} As is
well documented in the literature, the distribution is to a large
extent symmetric, which in itself is somewhat unexpected, and features
a ``cusp'' in the central part along with slowly decaying tails. This
implies relatively frequent episodes of extreme growth or extreme
decline: the probability to observe sales multiplied or divided by
more than 150 is as large as one in one thousand, an event nearly
impossible if the growth rates follow a Gaussian distribution.

However, characterizing this distribution when the growth rates of
different firms are pooled together is clearly not a neutral
choice. Implicitly, it assumes that all the observations are drawn
from the same, firm-independent and time-independent
distribution. This is clearly unreasonable, in particular with regard
to another well-known statistical property, first established by
\cite{hymer_pashigian_1962}, suggesting a decline of growth
volatility with size that looks inconsistent with a simple
diversification argument. In the mid-nineties, thanks to the
availability of comprehensive balance-sheet data,
\cite{stanley1996scaling} and \cite{amaral1997scaling} provided a
quantitatively accurate characterization of this scaling relation. The
standard deviation of growth rates is shown to decrease with a firm's
size $S_i$ as a power-law $S_i^{-\beta}$ where $\beta$ is found
approximately equal to $0.2$.\footnote{Interestingly, and in spite of
  being a quantity that does not \textit{a priori} follow the same
  dynamics as firm sales, one finds the same dependence of the
  volatility of the returns of traded stocks as a function of their
  market capitalisation. Such a coincidence was also noted by \cite{herskovic2020firm}.} This decay is significantly slower than the one a simple diversification argument would suggest, which would correspond to $\beta=0.5$.

This paper addresses the question of explaining the emergence of these two empirical
laws. We consider a concise market economy comprising firms with
exogenous production and operating in independent markets. Because of
the lack of direct competition among firms, these models are known as
``island models'' \citep{Sutton_1997}. In this framework, reminiscent
of the one introduced by~\cite{gibrat1931inegalites} to explain the
firm size distribution, a firm's size evolves according to a
multiplicative process, modeling its growth rate as a random
variable. 

The simplest idea to explain the two empirical laws under scrutiny
would then be to assume that idiosyncratic growth shocks are Gaussian,
but with a firm-dependent volatility $\sigma_i$ that only depends on
the size $S_i$ of firm $i$ as in $\sigma_i = \sigma_0
S_i^{-\beta}$. This is equivalent to saying that a firm's growth
volatility is uniquely determined by its size, and imposing a
black-box type relation on the many reasons explaining why larger
firms, possibly because of diversification, have less volatile growth. The resulting distribution of growth rates can be seen as a \textit{Gaussian
  mixture}, as it is the superposition of Gaussian distributions with
heterogeneous variances. The family of ``scale mixtures of Gaussian
distributions'' is known to have generically fatter tails than a
Gaussian distribution, and it is versatile enough to generate a wide
class of continuous, unimodal and symmetric
distributions~\citep{west1987,andrewsmallows1974}.\footnote{Among the
  distributions that can be represented as Gaussian mixtures, one
  finds the Student-$t$, Logistic, Laplace, L\'evy-stable and
  Power-exponential distributions, all of which have been considered
  as good candidates to describe the empirical distribution of growth
  rates.} 
  
Unfortunately, as is illustrated in Figure~\ref{fig:motiv-evid}, this simple reasoning still falls short of explaining the empirical distribution of growth rates: although the resulting Gaussian mixture is indeed heavy tailed (dark-green line), as seen from comparing it with the Gaussian benchmark (black line), it misses its empirical target (purple line). In addition to this, the right panel of the same figure shows that although the volatility-size scaling relation holds in a wide range, it does not describe the volatility of very large corporations very well.\footnote{Evidence pointing in the same direction has been found for French firms in \cite{fonase2024}.} A more careful analysis of the data also suggests important fluctuations around the average value of growth volatility conditional on size.

To overcome this problem, the literature has considered the possibility that
firms with similar size may have additional sources of variability of
their growth trajectories. A simple way to obtain this result is to
enrich basic island models with two intertwined features. The first is that
firms can operate in more than one market. As a consequence, each firm
can be seen as an ensemble of sub-units, with each functioning in one
of these distinct and independent markets. The second is that these
sub-units may not necessarily be of equal size, so that a firm's
aggregate size can be concentrated in only a ``granular'' handful of
them. \cite{Sutton_2002}, \cite{Fu_et_al_2005},
\cite{Buldyrev_et_al_2007} and \cite{schwarzkopf_et_al_2010} provide
examples of growth models adopting this strategy. While these
\textit{compositional models} all share the feature that the
distribution of growth rates they generate is a Gaussian mixture that
can reproduce the shape of the empirical distribution quite well they
fail in capturing the size-volatility relation
\citep{moranriccaboni}.\footnote{Gaussian mixtures may emerge also in non-compositional models.
\cite{bose2006rand} presents an example where a firm's growth
rate is defined as a sum of random variables whose number is itself
random. Under proper assumptions, pooling together different firms induce a Laplace asymptotic distribution of their growth rates."} Among the same family, two notable exceptions
are presented in \cite{wyart2003statistical} and
\cite{Gabaix_2011}.\footnote{Another exception would be
  \cite{Sutton_2002}. The model presented therein leads to an average
  growth volatility conditional on size that scales with an exponent
  equal to $1/4$, close to the empirical value 0.2. However, the model
  is developed assuming that all partitions of a firm's size into
  sub-units are equiprobable (microcanonanical hypothesis), an
  hypothesis difficult to justify in our context, see also the discussion in \cite{wyart2003statistical}.}  Their two models
are indeed able to accommodate the slow decay of growth volatilities
with size and the non-Gaussian distribution of growth rates allowing
for the so-called ``granular'' effects. For this reason, we name them
``granular models of firm growth'', and they represent the key
antecedents with respect to which the present work makes its main
contributions.\footnote{For a version of the model where ``granularity'' comes from supplier-customer network effects, see \cite{herskovic2020firm}.}

\begin{figure}[t]
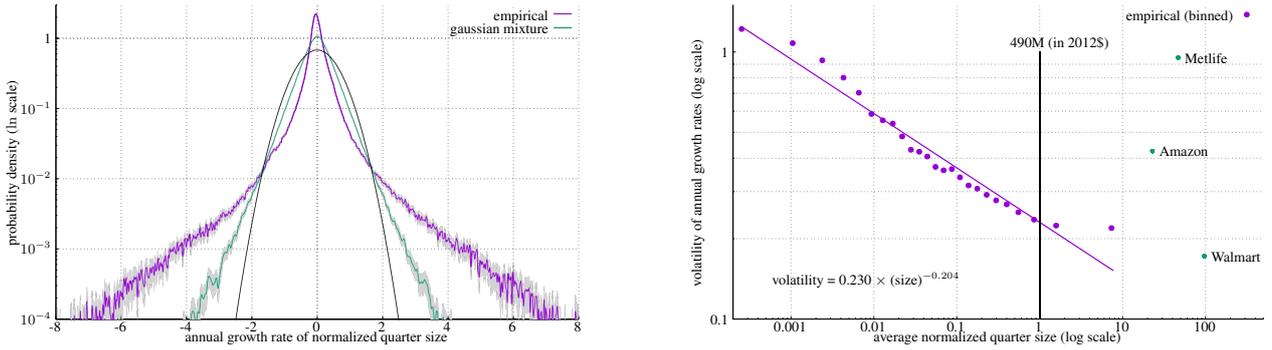

  \caption{Distribution of the annual growth rate and its size dependent volatility}
  \label{fig:motiv-evid}
  \begin{center}
    \begin{minipage}[t]{0.47\linewidth}
      \scalebox{0.6}{
        \input{./figures/fig_gr_norm_centered_empirical.tex}
      }
    \end{minipage} 
    \hfill
    \begin{minipage}[t]{0.47\linewidth}
      \scalebox{0.6}{
        \input{./figures/fig_mad_norm_scaling_firm.tex}
      }
    \end{minipage}
  \end{center}
  {\scriptsize {\it Notes}: \underline{Left-panel} reports the kernel
    estimate of the unconditional empirical density of the growth
    rates after removing the mean (purple line). The estimate is
    computed using the entire sample pooling annual (log) growth rates
    across firms and year-quarter with a Gaussian kernel function and
    a bandwidth chosen according to the normal reference
    rule-of-thumb. The estimate is evaluated on a 10,000 points
    regular grid defined over the entire empirical range. We also
    report the 95\% confidence interval and the Gaussian parametric
    fit (dotted line). We add the empirical density of synthetic
    growth rates (green line) obtained by mixing Gaussian random
    shocks with growth volatilities bootstrapped from those estimated
    according to $\sigma(S) \propto S^{-\beta}$, as suggested by the right panel, and a Normal fit
    (black line). \underline{Right-panel} displays on a double
    log-scale the binned relation between firms growth volatility and
    average normalized size. Each point represents the average size
    and the average volatility of the growth rates of firms belonging
    to the same size bin where volatility is proxied with the mean
    absolute deviation adjusted by the factor $\sqrt{\pi/2}$. We also
    report a power-law fit $\sigma(S)\propto S^{-\beta}$ where the
    exponent $\beta$ is found approximately equal to $-0.204$ with an
    asymptotic standard error of $0.01$. Note however that the
    rightmost part of the data decays even more slowly. Data source:
    Compustat. \par}
\end{figure}

We propose a unified theoretical framework in which \cite{Gabaix_2011}
and \cite{wyart2003statistical} only differ through the assumptions
they make on the statistics of the number of sub-units composing each firm: a number that is roughly proportional to the firm's size in the first case, and a random variable with a skewed
distribution for the latter. We refine the analyses of the two models
and provide formal proofs to some of the conjectures they raised,
while also correcting some of their results. In particular, we show
that growth volatility is not only driven by firm size but by a more
complicated interplay between a firm's size and its structure. This is
explained by showing that these models imply the existence of three
different types of firms: those with sales evenly distributed among a
large number of sub-units, those with a large number of sub-units but
with their sales concentrated only on a few of them, and lastly those
made up of only a handful of sub-units. Such a description, in spite
of being well adapted to the description of large firms, paints very
different pictures of how well diversified they are, an issue also
raised in the international trade literature by~\cite{Kramarz2020}. We
show that the co-existence of these three types of firms has a direct
impact on the shape of the growth volatility distribution and we
derive non-trivial results on the scaling laws of its first four
moments. This allows us to discuss in details the distribution of the
growth rates predicted by the model since in this framework it simply
results from mixing Gaussian random variables using the volatility
distribution as a mixing function.

In addition to these theoretical results, we provide a twofold empirical contribution. Our first contribution is the determination of new
empirical facts about the growth dynamics of business firms. Guided by
the theoretical considerations above, we show that the distribution of
the growth rate volatility, once rescaled by its average
conditional on size, is to a good approximation independent of firm
size, and is well approximated by an Inverse Gamma distribution, which
features a power-law right tail. We also document that the first four
moments of growth volatility, and not only the average, scale down
with size as a power-law. Our second contribution is to show that
although these observations are to a first approximation in good
qualitative agreement with granular models of growth, a deeper
investigation unveils three major inconsistencies. The first of these
is that the tail of the empirical distribution of the growth
volatility is much too thin to be compatible with the tail a granular model would predict. This means that we do not
observe large volatilities in the data as often as predicted by a calibrated granular model. The second empirical inconsistency is that moments higher than order one do not all decrease with size at the same scaling rate, at odds with what the model predicts.  Finally, the last inconsistency is that the two
granular models we have analyzed imply that a firm's growth rate,
when rescaled by the firm-dependent volatility, should be distributed
as a Gaussian random variable. We expect this to be true in particular
for large firms: since in the two models their growth rates are the sum
of a large number of (possibly non-Gaussian) independent random variables,
the Central Limit Theorem predicts that they should be partially
Gaussian. However, their empirical distribution deviates from a
Gaussian benchmark, and we quantify this deviation by proposing a
generic family of distributions that features a Gaussian central
region of varying width and exponentially ``stretched''
tails. Somewhat surprisingly, parametric fits of these distributions show that the tails do not
gradually disappear when considering larger firms, even though they do
(at a slow rate) when considering longer time horizons in building
growth rates.

Putting it all together, our analysis shows that compositional growth
models, and in particular the granular growth models analyzed in this
paper, are not able to account for the rich statistical features
characterizing the growth of business firms. This ultimately suggests
that the overarching mechanisms driving their evolution are not yet
satisfactorily understood. Possible mechanisms to alleviate this are
suggested in the concluding section.

\section{Modeling firm growth: a rigorous statistical approach}
\label{sec:theory}

In this section, we present a statistical framework to describe the
dynamics of firms' growth and how they are modulated by their
size. Our primary objective is to investigate the extent to which this
framework is compatible with the two empirical facts motivating this
paper, namely (i) the non-Gaussian shape of the growth rates
distribution and (ii) the slow decay with size of the volatility of
firms growth rates. After defining the basic set-up, we consider two
instances of the statistical framework featuring one and two sources
of granularity so revisiting the models presented in
\cite{Gabaix_2011} and \cite{wyart2003statistical} respectively. We
refine their analyses and correct some of their results, deriving new
and more precise predictions that can be directly compared with data.

\subsection{Set-up}
\label{sec:set-up}

We analyze a concise market economy comprising $N$ firms, which are
characterized by an exogenous production. 
We posit that each sufficiently large firm
consists of a specific number of sub-units, which can be construed as
departments or production units, each of which functions within
distinct and independent sub-markets. These units might alternatively thought of as representing lines of business with different customers, in the spirit of \citet{herskovic2020firm}.

The size of firm $i$ at time
$t$ is denoted by $S_{it}$ and satisfies
$S_{it}=\sum_{j=1}^{K_i} s_{jit}$, where $K_i$ represents the number
of sub-units and $s_{jit}$, $j=1,\ldots,K_i$, denotes their respective
sizes. The evolution of each sub-unit's size over time is assumed to
follow a multiplicative process with growth rates of finite variance. This definition becomes the following assumption.
\begin{assumption}
  \label{hp:gibrat}
  The time evolution of each sub-unit's size is characterized by a
  multiplicative process
  \begin{equation}
    \label{eq:sub-growth}
    s_{jit+1}=(1+\sigma_{0}\eta_{jit}) s_{jit}\;\;\;,
  \end{equation}
  where $\eta_{jit}$ are independent random shocks with zero mean
  and unit variance. The time invariant parameter $\sigma_0$ defines
  the order of magnitude of the growth fluctuations at the level
  of a sub-unit. For definiteness, the unit of $t$ is given in
  years.
\end{assumption}
Assumption~\ref{hp:gibrat} implies that the absolute growth of each
sub-unit is proportional to its size, something known in the
litterature as the ``Law of Proportionate Effects'' or ``Gibrat's
Law''. The assumption that all the sub-unit shocks $\eta_{ji}$ have a standard
deviation $\sigma_0$ that is independent of $i$ and $j$ is a simplification
needed to make the model analytically tractable.\footnote{Note that
  heterogeneous volatilities can be resorbed into the definition of
  $s_{ij}\to \sigma_{ij}s_{ij}$. The discussion on firm size
  statistics still depends on the distribution of $s_{ij}$ alone, but
  growth rate statistics can be understood using the same theoretical
  machinery by considering the variable $\sigma_{ij}s_{ij}$ rather
  than $s_{ij}$ alone.} Using Equation~\eqref{eq:sub-growth}, we can
then write a firm's year-on-year growth rate as
\begin{equation}
  \label{eq:growth-def}
  g_{it} := \frac{S_{it+1} - S_{it}}{S_{it}} = \frac{1}{S_{it}}\sum_{j=1}^{K_i} s_{jit} \eta_{jit}\;\;\;.
\end{equation}

Within this framework, one could further assume that all the sub-unit
are of approximately the same size, say $\bar{s}_i$. In this case, a
firm's size and its number of sub-units would be directly
proportional, that is $S_i \approx K_i \bar{s}_i$. A direct application
of the Central Limit Theorem implies that for large firms,
and therefore firms with a large number of sub-units, the growth rates
$g_i$ are well described by a Gaussian distribution with a standard
deviation of $\sigma_0 S_i^{-\beta}$, with $\beta=1/2$. Thus, larger
firms would appear less volatile than smaller firms because they have
a higher number of sub-units, and are therefore better diversified.

This conclusion is, however, at odds with empirical observations,
which note that growth rates are markedly non-Gaussian, even when
large firms are studied. Further, the decay in volatility with respect
to firm size is much slower than in the anticipated
$\sigma_0 S^{-1/2}$, meaning that large firms are more volatile than
one would anticipate if all sub-unit sizes were approximately
equal. In the remaining of this section, we investigate the inadequacy
of this model in explaining the observed phenomena by considering two
versions of our statistical framework. The first one, due to
\cite{Gabaix_2011}, features a single source of granularity, namely
the size distribution of the sub-units composing a firm. The second,
due to \cite{wyart2003statistical}, adds another source of granularity
through the distribution of the number of such sub-units.

\subsection{Single granularity hypothesis [Gabaix' model]}
\label{sec:single_gran}

In this section, we depart from the assumption that sub-units are characterized by
relatively similar sizes $s_{ji}$ by positing that the
firms size distribution features a heavy tail following the
idea proposed in \cite{Gabaix_2011}.\footnote{Originally this
  granularity scenario was proposed to explain the anomalously large
  fluctuations of GDP in economies populated by firms whose size could
  be extremely large. Here we reinterpret a firm in the original paper
  as a sub-unit and a country as a firm. The remaining argument
  remains the same.} In line with the original model, we assume the
following.
\begin{assumption}
  \label{hp:su_size}
  The size of a sub-unit $s_{ji}$ is a random variable assumed to be
  larger than a lower bound $s_0$, representing the minimal size of a
  sub-unit. Its distribution is Pareto over the interval
  $[s_0;\infty)$:
  \begin{equation*}
    P(s_{ji}) = \frac{\mu}{s_0}\left(\frac{s_{ji}}{s_0}\right)^{-(1+\mu)}\;\; s_0>0,\;\; 1<\mu<2\;\;.
  \end{equation*}
The distribution $P(s_{ji})$ is taken to be time-invariant. Note that since $\mu > 1$, the average sub-unit size $\mathbb{E}[s_i]:=\bar{s}_i$ exists and is
finite.
\end{assumption}

To stay close to Gabaix's original model, where no explicit assumption is made on the statistics of sub-units number $K_i$, we assume that $K_i$ is close to $S_i/\bar{s}_i$ for large firms:
\begin{assumption}
  \label{hp:su_n_gabaix} For firms of size $S \gg \bar{s}$ and for $\mu > 1$, the number of sub-units $K$ is given by\footnote{More precisely
  $\big(\sum_{j=1}^{K} s_{ji} - K\bar{s}_i\big)/K^{1/ \mu}$ converges
  for large $K$ to a Lévy-stable random variable, see
  \citet{gnedenko1953}. This means that the typical fluctuations of
  $\sum_{j=1}^{K} s_{ji} - S$ are of order $K^{1/\mu}$. Since
  $S$ and $K$ are asymptotically proportional, this implies that
  the fluctuations of $S$ are of order $S^{1/\mu}$, and thus negligible except when $\mu \downarrow 1$. While mathematically convenient, this hypothesis looks at odds with the exiting evidence \citep{bose2006icc}. See Section~\ref{subsec:WB} for a version of the same model where $K$ is assumed to be itself a random variable. } 
  \[ 
  K = \frac{S}{\bar{s}} + o(S).
  \]
\end{assumption}
In our revisiting of Gabaix' model, we first discuss the relation between a
firm's size and the Herfindahl-Hirschman index (HHi) measuring its
concentration of sales among sub-units. We shall then study the distribution of the
growth rate volatility conditional on size, and how its shape
ultimately determines that of the growth rates when we pool together
heterogeneous firms.
\medskip

{\bf Variance of growth rates.} Now, since we assume that sub-units fluctuate independently, the variance of the growth rate of firm $i$, denoted as
${\sigma}^2_i$, is given by
\begin{equation}
  \label{eq:g_var}
  {\sigma}^2_i = \sigma_0^2 \frac{\sum_{j=1}^{K} s_{ji}^2}{\left(\sum_{j=1}^{K} s_{ji}\right)^2} := \sigma_0^2 \mathcal{H}\;\;\;, 
\end{equation}
where $\mathcal{H}$ represents the Herfindahl-Hirschman index (HHi) of
the sub-unit sizes of firm $i$. As noted by Gabaix, equation \eqref{eq:g_var} states that
the variance of the growth rates is proportional to an index measuring
the concentration of a firm's sales across its different sub-units. 
\medskip

{\bf Anomalous scaling behavior of $\mathcal{H}$}. The core result in
\cite{Gabaix_2011} can now be restated in our framework by saying that
when $1<\mu<2$ and $K \to \infty$, the typical value (i.e. the modal
value) of the Herfindahl-Hirschman index behaves as
$\mathcal{H}_{\text{typ}} \sim
\mathcal{O}(K^{2(1-\mu)/\mu})$.\footnote{See
  Proposition~\ref{lem:typical_herfindahl} in
  Appendix~\ref{app:proofs} which replicates Proposition 2 pp. 740 in
  \cite{Gabaix_2011} after noting that $\sigma=\sqrt{\mathcal{H}}$ and
  that $S$ is proportional to $K$. Throughout this paper, as in the
  orignal work, the symbol $\approx$ means approximately equal to, the
  symbol $\sim$ means that the ratio of between the left- and
  right-hand side terms tends to a positive real number as the
  argument tends to infinite. This can be considered as an asymptotic
  version of the $\propto$ symbol, representing proportionality
  between two terms.} An implication of this is that
$\mathcal{H}_{\text{typ}}$ decays to $0$ much slower than what one
would get in the case where $\mu>2$, where $\mathcal{H}_{\text{typ}}$
would behave as $\mathcal{O}(K^{-1})$. This last case leads to a
growth rate volatility that is proportional to the square root of the
HHi, $\sigma \propto \sqrt{\mathcal{H}}$, and therefore to the
scaling $S^{-1/2}$ described above.  Hence, in this granular setting
diversification is not as effective, since the existence of
particularly large and ``granular'' entities impedes the standard
diversification argument with a scaling in $S^{-1/2}$ and gives instead the
slow decay of the Herfindahl-Hirschman index described
above.\footnote{To complete the cases explored in the original paper,
  note that $\mathcal{H}_{\text{typ}} = \mathcal{O}(1/\ln K)$ when
  $\mu=1$. However in this case the assumption that $K = S/\bar{s}$ breaks down.}

The full study of the case where $1<\mu<2$ is in fact more subtle. The
problem, overlooked by ~\cite{Gabaix_2011}, lies in the fact that the
average value of $\mathcal{H}$ is much larger than the typical value,
$\mathbb{E}\left[\mathcal{H}\right] \gg \mathcal{H}_{\text{typ}}$. We
obtain this result by extending the approach proposed in
\cite{derridaRandomWalksSpin1997}, which we leverage to obtain the
asymptotic value of all integer moments
$\mathbb{E}[\mathcal{H}^k]$ with $k\in\mathbb{N}$, and to
characterize the probability distribution of the Herfindahl-Hirschman
index conditional on the number of sub-units, giving the following
proposition.

\begin{proposition}
  \label{prop:herfindahl_gabaix}
  For fixed $K\gg1$, the distribution of the Herfindahl-Hirschman
  index takes the following form
  \begin{equation}
    \label{eq:Herf}
    P(\mathcal{H}\vert K) = K^{2(\mu-1)/\mu} F\left(\mathcal{H}K^{2(\mu-1)/\mu}\right)\left(1-\sqrt{\mathcal{H}}\right)^{\mu-1}\;\;\;, 
  \end{equation}
  with $F(x) \sim \frac{1}{x^{1+\mu/2}}$ when $1 \ll x \lesssim K^{2(\mu-1)/\mu}$.
\end{proposition}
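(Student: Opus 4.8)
The plan is to split the problem into a ``bulk'' contribution governed by the generalized central limit theorem and an ``edge'' contribution near $\mathcal{H}\to 1$ governed by the single largest sub-unit; the product form in~\eqref{eq:Herf} then emerges from matching the two. Throughout I write $X=\sum_{j=1}^K s_{ji}$ and $Z=\sum_{j=1}^K s_{ji}^2$, so that $\mathcal{H}=Z/X^2$. Since $\mu>1$ the denominator concentrates, $X=K\bar{s}\,(1+o(1))$ with fluctuations of order $K^{1/\mu}=o(K)$, so to leading order $\mathcal{H}\approx Z/(K\bar{s})^2$ and the whole problem reduces to the law of $Z$.

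First I would extract the scaling function $F$. The summands $s_{ji}^2$ carry a Pareto tail of index $\mu/2\in(1/2,1)$, namely $P(s^2>u)=(s_0^2/u)^{\mu/2}$, so $Z$ lies in the domain of attraction of a one-sided $(\mu/2)$-stable law. Concretely, expanding the Laplace transform $\avg{e^{-\lambda s^2}}=1-c_\mu\lambda^{\mu/2}+o(\lambda^{\mu/2})$ for small $\lambda$, with $c_\mu=s_0^\mu\,\Gamma(1-\mu/2)$, gives $\avg{e^{-\lambda Z}}=\exp(-Kc_\mu\lambda^{\mu/2}(1+o(1)))$, which is the transform of $K^{2/\mu}$ times a standard one-sided stable variable of density $f_{\mu/2}$. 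Changing variables to $\mathcal{H}=Z/(K\bar{s})^2$ then produces exactly the prefactor $K^{2(\mu-1)/\mu}$ and identifies $F$ as a rescaling of $f_{\mu/2}$. The announced tail $F(x)\sim x^{-(1+\mu/2)}$ is the standard power-law tail of a one-sided stable density, which can equivalently be read off from the ``one big jump'' estimate $P(Z>z)\approx K\,P(s^2>z)$; I would check that, including the prefactor, this reproduces the correct $K^{1-\mu}$ dependence of the tail density.

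Next I would derive the edge factor $(1-\sqrt{\mathcal{H}})^{\mu-1}$, a finite-$K$ correction invisible to the bulk argument precisely because the latter froze $X$ at $K\bar{s}$. In the regime $\mathcal{H}\to 1$ a single sub-unit of size $m=\max_j s_{ji}$ dominates both sums; writing $X=m+X'$ and $Z=m^2+Z'$ with $X'\approx(K-1)\bar{s}=:c$ and $Z'\ll m^2$ gives $\sqrt{\mathcal{H}}\approx m/(m+X')$, hence $1-\sqrt{\mathcal{H}}\approx c/m$ and $m\approx c/(1-\sqrt{\mathcal{H}})$. Feeding the large-$m$ density of the maximum, $P(m)\approx K\mu s_0^\mu m^{-(1+\mu)}$, through this change of variables yields a density proportional to $(1-\sqrt{\mathcal{H}})^{\mu-1}$, which is exactly the claimed factor.

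Finally I would assemble the two pieces by matched asymptotics. The edge factor is $\approx 1$ throughout the bulk, where $\sqrt{\mathcal{H}}\ll 1$, and only bites as $\mathcal{H}\uparrow 1$, whereas the stable scaling function controls the typical region and the intermediate tail; writing $P(\mathcal{H}\mid K)$ as their product is consistent precisely in the overlap window $1\ll x\lesssim K^{2(\mu-1)/\mu}$, equivalently $K^{2(1-\mu)/\mu}\ll\mathcal{H}\ll 1$, which is also the range over which the pure power law $F(x)\sim x^{-(1+\mu/2)}$ is asserted. As a consistency check I would recompute the integer moments $\avg{\mathcal{H}^k}$ from this density and match them against the stable-law moments obtained by the method of~\cite{derridaRandomWalksSpin1997}. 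The main obstacle is exactly this matching step: the bulk computation treats $X$ as deterministic, an approximation that fails in the very regime responsible for the edge factor, so the delicate point is to control the joint fluctuations of $(X,Z)$ and to verify that the bulk scaling form and the boundary correction combine multiplicatively without altering the tail exponent in the crossover window.
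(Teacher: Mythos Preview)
Your argument is sound and reaches the right conclusion, but it follows a different route from the paper. The paper works backwards from the \emph{integer moments}: it first establishes (Lemma~\ref{lem:integer_herfindahl_moments}) that $\mathbb{E}[\mathcal{H}^k]\approx K^{1-\mu}\mu\,\mathbb{E}[s]^{-\mu}\,\Gamma[\mu]\Gamma[2k-\mu]/\Gamma[2k]$ via the Laplace-transform representation $1/x^{2k}=\Gamma[2k]^{-1}\int_0^\infty\lambda^{2k-1}e^{-\lambda x}\,\dint\lambda$, then recognises the ratio of Gamma functions as a Beta integral and, after the substitution $h=\sqrt{x}$, reads off the density $P(\mathcal{H})\sim\mathcal{H}^{-1-\mu/2}(1-\sqrt{\mathcal{H}})^{\mu-1}$ directly. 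The scaling prefactor is then supplied by the known typical value $\mathcal{H}_{\text{typ}}\sim K^{2(1-\mu)/\mu}$. So in the paper the edge factor $(1-\sqrt{\mathcal{H}})^{\mu-1}$ and the tail exponent $-1-\mu/2$ appear together out of the same Beta identity, rather than from separate bulk/edge analyses. Your probabilistic derivation has the advantage of explaining \emph{why} each factor is there---$F$ is identified concretely as a one-sided $(\mu/2)$-stable density, and the boundary factor is traced to the single-large-sub-unit mechanism---whereas the paper's moment-matching is more opaque on this point. Conversely, the paper's route delivers the exact moment constants that are reused immediately in Proposition~\ref{prop:herfindahl_gabaix_moments}, and sidesteps the matched-asymptotics step whose delicacy you yourself flag (the joint fluctuations of $(X,Z)$ in the crossover). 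Both arguments operate at the same heuristic level of rigour.
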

\begin{proof}
  See Appendix~\ref{ssec:proof_prop_gabaix1}.
\end{proof}
Proposition~\ref{prop:herfindahl_gabaix} defines the structure of the
conditional distribution of the HHi. The function $F(\cdot)$ is a scaling
function that is peaked approximately at $\mathcal{O}(1)$, ensuring
that the mode of the distribution $P(\mathcal{H}\vert K)$ is
$\mathcal{H}_{\text{typ}} \sim \mathcal{O}(K^{2(1-\mu)/\mu})$. Next,
using the distribution in Eq.\eqref{eq:Herf}, we can compute the
asymptotic behavior of all moments
$\mathbb{E}\left[\mathcal{H}^q | K \right]$ for any $q>0$, obtaining
the following proposition.
\begin{proposition}
  \label{prop:herfindahl_gabaix_moments}
  For all $q>0$, the moments of $\mathcal{H}$ conditional on $K$ are
  given, to the leading order for $K \gg 1$, by
  \begin{equation}
    \label{eq:gen_moments_body}
    \mathbb{E}\left[\mathcal{H}^q|K \right]\approx C_1 K^{1-\mu}+C_2K^{2q\frac{1-\mu}{\mu}} +\mathcal{O}\left(K^{\min\left(1-\mu, 2q \frac{1-\mu}{\mu}\right)}\right),
  \end{equation}
  where $C_1$ and $C_2$ are two constants. 
\end{proposition}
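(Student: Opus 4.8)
The plan is to start directly from the explicit conditional law obtained in Proposition~\ref{prop:herfindahl_gabaix} and write the moment as
\[
\mathbb{E}[\mathcal{H}^q \mid K] = \int_{1/K}^{1} \mathcal{H}^q\, P(\mathcal{H}\mid K)\, \mathrm{d}\mathcal{H},
\]
with $P(\mathcal{H}\mid K)$ given by \eqref{eq:Herf} (recall that $\mathcal{H}$ ranges over $[1/K,1]$). The natural move is to pass to the scaling variable $x=\mathcal{H}\,K^{2(\mu-1)/\mu}$ dictated by the form of \eqref{eq:Herf}: writing $\mathcal{H}=x\,K^{2(1-\mu)/\mu}$, the prefactor $K^{2(\mu-1)/\mu}$ multiplying $F$ is exactly cancelled by the Jacobian, while $\mathcal{H}^q$ factors out an overall $K^{2q(1-\mu)/\mu}$. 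This reduces the problem to
\[
\mathbb{E}[\mathcal{H}^q\mid K] = K^{2q(1-\mu)/\mu}\int x^q\,F(x)\,\bigl(1-\sqrt{x}\,K^{(1-\mu)/\mu}\bigr)^{\mu-1}\,\mathrm{d}x,
\]
the $x$-range being $\bigl[K^{(\mu-2)/\mu},\,K^{2(\mu-1)/\mu}\bigr]$, whose endpoints tend to $0$ and $\infty$ respectively as $K\to\infty$.

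I would then split this $x$-integral at a large fixed constant $A$ into a \emph{bulk} piece ($x=O(1)$, i.e.\ firms near the typical index $\mathcal{H}_{\mathrm{typ}}$) and a \emph{tail} piece ($\mathcal{H}=O(1)$, the strongly concentrated ``granular'' firms). On the bulk piece the cutoff factor $\bigl(1-\sqrt{x}\,K^{(1-\mu)/\mu}\bigr)^{\mu-1}\to 1$ and $F$ is integrable against $x^q$ around its peak, so $\int_0^A x^q F(x)\,\mathrm{d}x$ converges to a $K$-independent constant; this produces the term $C_2\,K^{2q(1-\mu)/\mu}$. On the tail piece I insert the power law $F(x)\sim x^{-(1+\mu/2)}$ and undo the scaling through $u=\sqrt{\mathcal{H}}$, which turns the contribution into a Beta integral: up to the tail amplitude of $F$ one obtains $2\int_0^1 u^{2q-1-\mu}(1-u)^{\mu-1}\,\mathrm{d}u = 2\,\mathrm{B}(2q-\mu,\mu)$, finite precisely when $q>\mu/2$, the edge factor $(1-\sqrt{\mathcal{H}})^{\mu-1}$ securing convergence at $\mathcal{H}=1$. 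Tracking the powers of $K$ then shows that this piece contributes $C_1\,K^{1-\mu}$, where the explicit cancellation of the prefactor $K^{2q(1-\mu)/\mu}$ against the $K$-dependence generated by the substitution leaves the clean exponent $1-\mu$.

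Comparing the two exponents $1-\mu$ and $2q(1-\mu)/\mu$, the crossover sits at $q=\mu/2$: for $q<\mu/2$ the tail integral diverges at its lower end and is cut off at the bulk scale $x=O(1)$, merely reinforcing the $K^{2q(1-\mu)/\mu}$ behaviour, whereas for $q>\mu/2$ the tail dominates and yields $K^{1-\mu}$. Keeping both contributions reproduces exactly \eqref{eq:gen_moments_body}, with the remainder controlled by the subdominant scale $K^{\min(1-\mu,\,2q(1-\mu)/\mu)}$. I expect the delicate point to be the matching across the crossover region $x\sim O(1)$, i.e.\ $\mathcal{H}\sim\mathcal{H}_{\mathrm{typ}}$, where the two asymptotic regimes meet: one must verify that the contributions from outside the window $1\ll x\lesssim K^{2(\mu-1)/\mu}$ in which $F(x)\sim x^{-(1+\mu/2)}$ is valid are genuinely subleading and can be absorbed into the constants, and that at the marginal value $q=\mu/2$ the two power laws coalesce into a single scale, possibly dressed by a logarithmic factor.
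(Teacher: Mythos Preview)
Your proposal is correct and follows essentially the same route as the paper: both start from the scaling form \eqref{eq:Herf}, pass to the variable $x=\mathcal{H}/\mathcal{H}_{\mathrm{typ}}$, and extract the $K^{1-\mu}$ contribution from the power-law tail of $F$ via the Beta integral $\mathrm{B}(2q-\mu,\mu)$ after the substitution $u=\sqrt{\mathcal{H}}$. The only cosmetic difference is that the paper uses the tail approximation $F(x)\sim x^{-1-\mu/2}$ from $x\approx 1$ onward and recovers the bulk term $C_2\,\mathcal{H}_{\mathrm{typ}}^q$ as the lower-endpoint correction of the incomplete Beta function $\mathrm{B}_{\sqrt{\mathcal{H}_{\mathrm{typ}}}}(2q-\mu,\mu)\approx \mathcal{H}_{\mathrm{typ}}^{q-\mu/2}/(2q-\mu)$, whereas you split bulk and tail explicitly at a fixed scale $A$ and integrate the full $F$ over the bulk; both bookkeepings yield the same two exponents and the same crossover at $q=\mu/2$.
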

\begin{proof}
See Appendix~\ref{ssec:proof_prop_gabaix2}.
\end{proof}
This proposition states that the conditional moments of the HHi
contain two leading contributions: the first behaving as $K^{1-\mu}$
and the second one as $K^{2q\frac{1-\mu}{\mu}}$, that is as
$\mathcal{H}^{q}_{\text{typ}}$. The former is dominant whenever
$1-\mu>2q\frac{1-\mu}{\mu}$, i.e. $q>\mu/2$, while the latter is dominant
when $q < \mu/2$. 

Since we work under the assumption that $1<\mu<2$,
the first term in Eq.~\eqref{eq:gen_moments_body} is dominant for
$q=1$, showing that $\mathbb{E}\left[\mathcal{H}|K \right]$ is larger
than its typical value $\mathcal{H}_{\text{typ}}$. When $q=1/2$, on
the contrary, it is the second term that dominates and
$\mathbb{E}[\sqrt{\mathcal{H}}|K]$ is driven by the typical value of
the HHi. Hence, in summary, we find that 
\begin{equation}
  \label{eq:sigma_scal_H}
  \sqrt{\mathbb{E}[\mathcal{H} \vert K]} \sim K^{(1-\mu)/2} \qquad \qquad
  \mathbb{E}[\sqrt{\mathcal{H}} \vert K] \sim K^{(1-\mu)/\mu}\;\;.
\end{equation}
In plain English, this means that the scaling relation one observes is
different if one considers the square-root of the average HHi or the
average of the square-root of the HHi. The economic interpretation
behind this unexpected result is that the model describes a market
where two types of large firms coexist. The first type is composed by
poorly diversified firms with an abnormally high HHi
($\mathcal{H}\approx 1$), namely firms with one or two dominant
sub-units in terms of sales. These high values of the HHi drive the
behavior of $\sqrt{E[\mathcal{H}|K]}$. The second type is composed, on
the contrary, by well diversified large firms representing the typical
case. These are dominant in driving
$\mathbb{E}[\sqrt{\mathcal{H}} \vert K]$. The existence of the first
type of large firms, overlooked in \cite{Gabaix_2011}, is at the
origin of the anomalous scaling exponents. Further it has important
consequences on the behavior of the growth volatilities $\sigma_i$ and
ultimately on the shape of the growth rates distribution. Two aspects
we investigate below.\medskip

{\bf Conditional distribution of the growth rate volatility}. Since
the volatility of the growth rate is equal to
$\sigma_0 \sqrt{\mathcal{H}}$, and since $K$ and $S$ are proportional
($S \approx K \overline{s}$) all the results concerning the
Herfindahl-Hirschman index naturally extend to the growth volatility.
The anomalous scaling in Eq.~\eqref{eq:sigma_scal_H} now becomes
\begin{equation}
  \label{eq:sigma_scal}
  \mathbb{E}[\sigma \vert S] \approx \sigma_0 \left(\frac{S}{\bar{s}}\right)^{(1-\mu)/\mu} \qquad
  \sqrt{\mathbb{E}[\sigma^2 \vert S]}  = \sigma_0 \left(\frac{S}{\bar{s}}\right)^{(1-\mu)/2} \;\;\;,
\end{equation}
showing again two different values of the exponent in the scaling
relation depending on whether one considers the average volatility of
growth rates or the square-root of the average squared volatility.

Similarly, the conditional distribution of volatilities can be deduced
from Eq. \eqref{eq:Herf} by substituting
$\sigma = \sigma_0 \sqrt{\mathcal{H}}$ to obtain
\begin{equation}
  \label{eq:Herf2}
  P(\sigma \vert S) =
  \frac{1}{\bar{\sigma}(S)}
  G\left(\frac{\sigma}{\bar{\sigma}(S)}\right) \left(1 -
    \frac{\sigma}{\sigma_0}\right)^{\mu - 1} , \quad\text{with} \quad
  G(x)=2x F(x^2),
\end{equation}
where $F(\cdot)$ is the same function defined in Eq. \ref{eq:Herf} above and $\bar{\sigma}(S):=\mathbb{E}[\sigma \vert S]$

Note that Eq.~\eqref{eq:Herf2} implies that $P(\sigma|S)$ is a
truncated power-law: indeed,
$P(\sigma|S)\approx G\left(\frac{\sigma}{\bar{\sigma}(S)}\right)$ when $\sigma \ll \sigma_0$, and therefore
$P(\sigma|S)\propto \sigma^{-1-\mu}$ in the regime
$\bar{\sigma}(S)\ll \sigma \ll \sigma_0$. This is of course
valid up to $\sigma=\sigma_0$, indicating the extreme case where the
sales of a firm are concentrated in a single subunit, leading to
$P(\sigma_0|S)=0$ and establishing the cut-off of the power-law. This
reflects the two types of firms: those close to the mode which have
$\sigma\approx \bar{\sigma}(S)$ and those which are in the
tail and are therefore poorly diversified.  The entire distribution of
the volatilities $P(\sigma)=\int \dint S~P(S)P(\sigma|S)$ can be
recovered from this, and will retain the truncated power-law nature of
the conditional distributions. A qualitative representation of
$P(\sigma|S)$ is reported in Figure~\ref{fig:sigma_qualitative}
(left-panel), highlighting the two types of firms that shape the
distribution.

 \medskip


\begin{figure}[t]
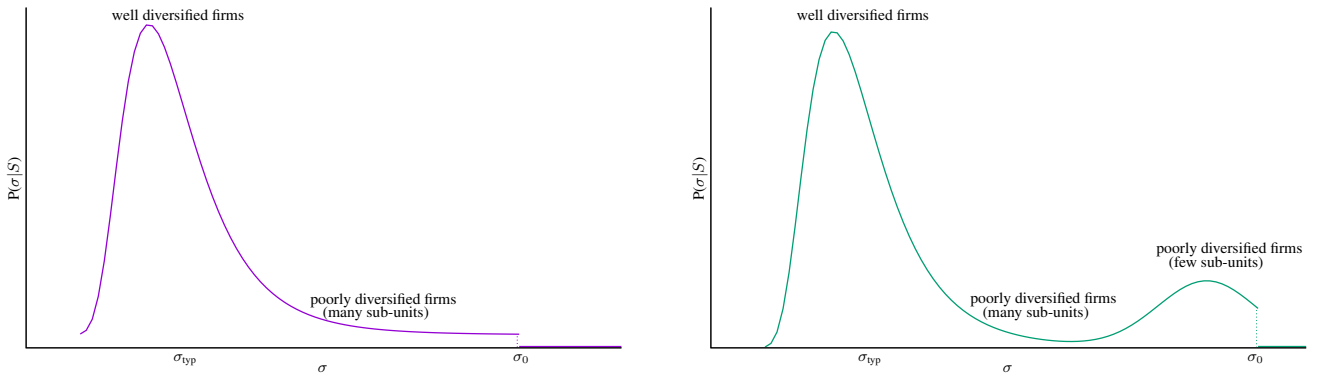

  \caption{Qualitative behavior of $P(\sigma|S)$}
  \label{fig:sigma_qualitative}
  \begin{center}    
    \begin{minipage}[t]{0.47\linewidth}  
    \scalebox{0.65}{
      \input{./figures/fig_PsigmaS_gabaix.tex}
      }
    \end{minipage}
    \hfill
    \begin{minipage}[t]{0.47\linewidth}  
    \scalebox{0.65}{
      \input{./figures/fig_PsigmaS_wb.tex}
      }
    \end{minipage}
  \end{center}
  {\scriptsize {\it Notes}: \underline{Left-panel} reports the
    qualitative behavior of the distribution of the growth rate
    volatility $P(\sigma|S)$ with a single granularity source defined
    in Eq.~\ref{eq:Herf2}. The bulk of the distribution is constituted
    by firms that are well diversified with an HHi around
    $\sigma_{\text{typ}}$. The power law right tail up to
    $\sigma \approx \sigma_0$ is generated by poorly diversified firms
    with a very unequal internal size distribution across
    sub-units. At $\sigma_0$ the power law tail is
    truncated. \underline{Right-panel} reports the qualitative
    behavior of the distribution of the growth rate volatility
    $P(\sigma|S)$ with two granularity sources in
    Eq.~\eqref{eq:cond_vol}. In this case there is an additional
    contribution, a peak close to $\mathcal{H}\approx 1$ of relative
    total weight of order $S^{\alpha-\mu}$, corresponding in that case
    to large firms that are made up of a very small number of
    sub-units. In both panel the qualitative
    behavior has been reproduced using an Inverse Gamma
    distribution. \par}
\end{figure}

\subsection{Double granularity hypothesis [Wyart and Bouchaud's model]}
\label{subsec:WB}
In this section we extend our statistical framework adding a second
source of granularity associated with the number of sub-units
following the idea introduced in \cite{wyart2003statistical} (WB
hereafter). While Assumptions~\ref{hp:gibrat} and \ref{hp:su_size}
remain the same, we modify Assumption~\ref{hp:su_n_gabaix} by stating
that the number of sub-units is itself a random variable distributed
as a Pareto. Formally, this is stated as follows.
\begin{assumption}
  \label{hp:3}
  The number of sub-units $K_i$ is random variable assumed to be distributed according to Pareto over the interval
  $[1;\infty)$,
  \begin{equation*}
    P(K_{i}) = {\alpha} K_{i}^{-(1+\alpha)}\,;\quad 1<\alpha<\mu<2.
  \end{equation*}
  The distribution $P(K)$ is taken to be time-invariant.
\end{assumption}
Note first that, while the number of sub-units $K_i$ is an integer, we
model it as a continuous random variable for mathematical
convenience. The results we provide in this section do not depend on
the minute details of the distributions of $K_i$, but only on the
behavior of its tail.
In the following, we take Assumption \ref{hp:3} to be true for all of
our derivations, which lets us clarify and generalize the results
found in \cite{wyart2003statistical}.

\medskip

{\bf Firm size distribution}. We start our analysis by focusing on the
firm size distribution, an aspect that could not be investigated
within the single granularity set-up because of the lack of a specific
assumption on $P(K_i)$. The behavior of the firm size distribution in
a model with granularity both in the size and number of sub-units is
presented in the following proposition.
\begin{proposition}
  \label{prop:fsd}
  The firm size distribution behaves, for large $S$, as the sum of the
  two Pareto distributions
  \begin{equation} 
    P(S_i) \sim \frac{C_\alpha}{S_i^{1+\alpha}} + \frac{C_\mu}{S_i^{1+\mu}},
  \end{equation}
  where $C_\alpha = \left(\frac{\mu}{\mu-1}\right)^\alpha$ and
  $C_\mu=\frac{\alpha}{\alpha-1}$.
\end{proposition}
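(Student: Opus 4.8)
The plan is to characterize the tail of the randomly stopped sum $S_i=\sum_{j=1}^{K_i}s_{ji}$, in which \emph{both} the number of terms $K_i$ (Assumption~\ref{hp:3}) and the individual terms $s_{ji}$ (Assumption~\ref{hp:su_size}) are heavy-tailed, and to show that the law of $S_i$ carries two power-law tails, one inherited from the tail of $K$ and one from the tail of a single $s_j$. The cleanest route is through the Laplace transform. Writing $\psi(\lambda):=\mathbb{E}[e^{-\lambda s}]$ and $g(u):=\mathbb{E}[e^{-uK}]$, the independence of the sub-units gives $\mathbb{E}[e^{-\lambda S}]=g(-\ln\psi(\lambda))$. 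For a Pareto tail of index $1<\gamma<2$ with finite mean, the small-argument expansion of the transform has a single non-analytic term, so that $\psi(\lambda)=1-\bar s\,\lambda+\Gamma(1-\mu)s_0^{\mu}\,\lambda^{\mu}+o(\lambda^{\mu})$ and $g(u)=1-\mathbb{E}[K]\,u+\Gamma(1-\alpha)\,u^{\alpha}+o(u^{\alpha})$, the powers $\lambda^{\mu}$ and $u^{\alpha}$ being the transform-space signatures of the two fat tails.

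The next step is to compose these expansions. Substituting $-\ln\psi(\lambda)=\bar s\,\lambda-\Gamma(1-\mu)s_0^{\mu}\,\lambda^{\mu}+o(\lambda^{\mu})$ into $g$ and retaining the leading singular terms yields
\begin{equation*}
\mathbb{E}[e^{-\lambda S}]=1-\mathbb{E}[S]\,\lambda+\Gamma(1-\alpha)\,\bar s^{\alpha}\,\lambda^{\alpha}+\mathbb{E}[K]\,\Gamma(1-\mu)\,s_0^{\mu}\,\lambda^{\mu}+\cdots,
\end{equation*}
with $\mathbb{E}[S]=\mathbb{E}[K]\bar s$, and where the mixed corrections are of order $\lambda^{\alpha+\mu-1}$, hence negligible against $\lambda^{\mu}$ precisely because $\alpha>1$. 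Since $1<\alpha<\mu<2$, the two non-analytic powers $\lambda^{\alpha}$ and $\lambda^{\mu}$ are the two leading singularities, with $\lambda^{\alpha}$ dominant. Reading each singular term back into real space through the Tauberian correspondence (a term $c\,\Gamma(1-\gamma)\lambda^{\gamma}$ matches a tail $\mathbb{P}(S>x)\sim c\,x^{-\gamma}$) recovers the two Pareto contributions with the stated coefficients, since $\bar s^{\alpha}=(\mu/(\mu-1))^{\alpha}s_0^{\alpha}=C_\alpha$ and $\mathbb{E}[K]\,s_0^{\mu}=\tfrac{\alpha}{\alpha-1}s_0^{\mu}=C_\mu$ after normalizing $s_0=1$; the density form quoted in the statement then follows.

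It is worth recording the probabilistic content, which makes the result intuitive and provides an independent check. Conditioning on $K$, the event $\{S>x\}$ is realized through two nearly disjoint mechanisms: either one sub-unit is anomalously large, whose conditional weight is $K\,\mathbb{P}(s>x)$ and, averaged over $K$, gives $\mathbb{E}[K]\,x^{-\mu}$ (the poorly diversified ``few large sub-units'' firms, generating the $x^{-\mu}$ tail), or the firm has an anomalously large number of \emph{typical} sub-units, in which case the law of large numbers gives $S\approx K\bar s$ and $\mathbb{P}(S>x)\approx\mathbb{P}(K>x/\bar s)=\bar s^{\alpha}x^{-\alpha}$ (the $x^{-\alpha}$ tail). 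Both routes reproduce the same constants.

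The main obstacle is rigor rather than bookkeeping. One must justify that the two mechanisms are asymptotically additive and non-interfering across the \emph{entire} range of $K$, from $K=O(1)$ up to $K=O(x)$, and that the crossover region contributes only lower-order terms; concretely this requires uniform control of the conditional tail $\mathbb{P}(S>x\mid K)$ together with a single-big-jump estimate for regularly varying summands. The transform route packages this difficulty into the validity of the Tauberian inversion, but one still has to establish the non-analytic expansions of $\psi$ and $g$ from Assumptions~\ref{hp:su_size} and~\ref{hp:3}. This is exactly where the hypotheses $1<\alpha<\mu<2$ are indispensable: $\alpha,\mu<2$ ensure that the singularities $\lambda^{\alpha},\lambda^{\mu}$ genuinely dominate the analytic $\lambda^{2}$ term, while $\alpha,\mu>1$ ensure the finite means $\mathbb{E}[K]$ and $\bar s$ that anchor the expansions and fix the constants $C_\alpha$, $C_\mu$.
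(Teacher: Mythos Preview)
Your proposal is correct and follows essentially the same route as the paper: the authors compute the Laplace transform of $S$ via the composition $\mathbb{E}[e^{-\lambda S}]=\sum_K p(K)\psi(\lambda)^K$, expand both the sub-unit and sub-unit-number transforms to isolate the two singular terms $\lambda^\alpha$ and $\lambda^\mu$ (their Corollary~1), and then invert term-by-term to recover the two Pareto tails with the constants $C_\alpha=\mathbb{E}[s]^\alpha$ and $C_\mu=\mathbb{E}[K]$. Your additional probabilistic reading via the single-big-jump dichotomy and your explicit remark that the cross term is $O(\lambda^{\alpha+\mu-1})=o(\lambda^\mu)$ are welcome clarifications but do not change the argument.
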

\begin{proof}
  See Appendix~\ref{app:fsd}.
\end{proof}
Since $\alpha < \mu$, this proposition implies that the first term
dominates and determines the tail of the size distribution. This case
corresponds to a scenario where large firms are most likely made up of
a large number of sub-units $K \sim S/\overline{s}$ and the tail of
$P(S)$ simply mirrors that of the distribution $P(K)$. The second
term, on the other hand, corresponds to much rarer cases where large
firms are made up of a handful of large sub-units (hence explaining
why the tail behavior has an exponent $\mu$), and are thus poorly
diversified. The relative probability that large firms are in this
second category, we denote with $\pi(S_i)$, behaves as
$(C_\mu/C_\alpha) \, S_i^{\alpha - \mu}$ and goes to $0$ for large
$S_i$. Although it does not provide an explanation to why the number
and size of the sub-units should be Pareto distributed,\footnote{There
  exists a large class of general and intuitive processes that
  generate Pareto distributions in both $K_i$ and $s_{ji}$. One
  example is to assume that the sub-units of a firm grow according to
  a mixture of stochastic growth and redistribution between them as in
  $s_{ijt+1}- s_{ijt}=\eta_{ijt+1}\,s_{ijt}+\gamma\left(\frac{1}{K_i}
    \sum_{k=1}^{K_i} s_{ikt} - s_{ijt} \right)$ where the first term
  accounts for stochastic multiplicative growth and the term with
  $\gamma$ for a net flow going from larger sub-units into smaller
  sub-units. See \cite{Bouchaud2000} for details.} Proposition
\ref{prop:fsd} clarifies that the tail of the firm size distribution
inherits its behavior from the distribution featuring the heavier tail
between those of the sub-unit size and number. \medskip

{\bf Conditional distribution of the growth rate volatility}. We next
move to the distribution of growth rate volatility conditional on
size.
\begin{proposition}
  \label{prop:vol_dist}
  The distribution of growth rate volatilities conditional on size $S$
  is given, for large $S$, by:
  \begin{equation}
    \label{eq:cond_vol}
    P(\sigma \vert S) \approx
    \frac{1-\pi(S)}{\bar{\sigma}(S)}
    G\left(\frac{\sigma}{\bar{\sigma}(S)}\right)\left(1-\frac{\sigma}{\sigma_0}\right)^{\mu-1}
    +\pi(S)H(\sigma),
  \end{equation}
  where $\bar{\sigma}(S) \sim S^{-\beta}$ is given by
  Eq. \eqref{eq:sigma_scal} with $\beta=(\mu-1)/\mu$,
  $\pi(S)\sim S^{\alpha-\mu}$ and $H(\cdot)$ is a contribution peaked
  at $\sigma \approx \sigma_0$. Finally $G(\cdot)$ is defined in
  Eq. \eqref{eq:Herf2} with $G(x)\sim x^{-1-\mu}$.
\end{proposition}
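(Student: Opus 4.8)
The plan is to condition on the number of sub-units $K$ and to exploit the ``single big jump'' structure of sums of heavy-tailed variables. Writing $P(\sigma,S)=\int_1^\infty \dint K\, P(K)\, P(\sigma,S\mid K)$ and recalling from \eqref{eq:g_var} that $\sigma=\sigma_0\sqrt{\mathcal H}$ with $\mathcal H=\sum_j s_{ji}^2/(\sum_j s_{ji})^2$, I would compute $P(\sigma\mid S)=P(\sigma,S)/P(S)$ using the marginal $P(S)$ already obtained in Proposition~\ref{prop:fsd}. The starting observation is that, under Assumptions~\ref{hp:su_size} and \ref{hp:3}, a large total size $S$ is produced in exactly two asymptotically distinct ways: either (ii) through a large number of sub-units, $K\approx S/\bar s$, each of typical size, or (i) through a big jump, i.e. a handful of sub-units ($K$ of order one) one of which has size of order $S$. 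These are the two mechanisms already isolated in Proposition~\ref{prop:fsd}, responsible for the two Pareto contributions $C_\alpha/S^{1+\alpha}$ and $C_\mu/S^{1+\mu}$ to $P(S)$; the whole point is to track what each of them does to the conditional law of $\mathcal H$.

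For the diversified sector (ii), I would argue that conditioning on $S$ is, to leading order, equivalent to conditioning on $K=S/\bar s$: by the footnote to Assumption~\ref{hp:su_n_gabaix} the fluctuations of $S$ at fixed $K$ are of order $K^{1/\mu}=o(K)$, so $S$ concentrates around $K\bar s$ and the correspondence $K\leftrightarrow S$ is deterministic at leading order. Hence $P(\sigma\mid S,\text{(ii)})$ reduces to the single-granularity conditional law \eqref{eq:Herf2} derived from Proposition~\ref{prop:herfindahl_gabaix}, namely the first term of \eqref{eq:cond_vol}, $\bar\sigma(S)^{-1}G(\sigma/\bar\sigma(S))(1-\sigma/\sigma_0)^{\mu-1}$, with $\bar\sigma(S)\sim S^{-\beta}$, $\beta=(\mu-1)/\mu$, read off from \eqref{eq:sigma_scal}. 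I would stress that this term already carries its own power-law tail up to $\sigma_0$, coming from large-$K$ firms dominated by a single sub-unit; this is the ``many sub-units, poorly diversified'' population and must not be conflated with sector (i).

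For the concentrated sector (i), a single sub-unit of size $\approx S$ forces $\mathcal H\to 1$ and hence $\sigma\to\sigma_0$: with $K$ finite and one dominant summand one has $\mathcal H=1-O(\text{subleading}/S)\to 1$. Collecting these configurations and integrating the big-jump asymptotics (one summand of size $\approx S$, of probability $\sim K\,\overline F(S)$) against $P(K)=\alpha K^{-1-\alpha}$ produces a size-marginal $\propto S^{-1-\mu}$ together with a $\sigma$-law $H(\sigma)$ supported near and peaked at $\sigma_0$ (not a pure spike, since e.g. two comparable large sub-units give $\mathcal H\approx 1/2$, but such configurations are subdominant for regularly-varying summands). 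Dividing $P(\sigma,S)$ by $P(S)$ then turns the two size-marginals into the relative weights $(C_\alpha S^{-1-\alpha})/P(S)=1-\pi(S)$ and $(C_\mu S^{-1-\mu})/P(S)=\pi(S)$ with $\pi(S)=(C_\mu/C_\alpha)S^{\alpha-\mu}$, exactly as in Proposition~\ref{prop:fsd}, yielding \eqref{eq:cond_vol}.

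The main obstacle is not any single estimate but the uniform control of $P(\sigma,S\mid K)$ across the whole range of $K$ and the clean splicing of the two sectors. Two points need care. First, in sector (ii) I rely on the Gabaix conditional law of Proposition~\ref{prop:herfindahl_gabaix}, derived at fixed large $K$, whereas here I condition on $S$; I must verify that imposing a specific large $S$ does not itself bias the configuration towards the big jump and thereby deform the $\mathcal H$-law away from \eqref{eq:Herf}. Second, and relatedly, sectors (i) and (ii) both deposit weight near $\sigma\approx\sigma_0$ — sector (ii) as the truncated power-law tail $(1-\sigma/\sigma_0)^{\mu-1}$, sector (i) as the peak $H$ — and the cleanest way to keep them separate is precisely the number of sub-units (many versus few), which is exactly why organizing the decomposition by conditioning on $K$ is natural. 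The subexponential single-big-jump machinery for sums of regularly-varying variables, together with the size marginals already established in Proposition~\ref{prop:fsd}, supplies the remaining estimates; the detailed shape of $H$ beyond ``peaked at $\sigma_0$'' is immaterial for the statement.
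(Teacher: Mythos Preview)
Your proposal is correct and follows essentially the same route as the paper: split the population of firms with size $S$ into the two sub-populations already identified in Proposition~\ref{prop:fsd} (many sub-units with $K\approx S/\bar s$ versus $K=\mathcal O(1)$ with one dominant sub-unit), inherit the conditional $\sigma$-law of the first from the single-granularity result \eqref{eq:Herf2}, recognise that the second has $\mathcal H\approx 1$ and hence $\sigma\approx\sigma_0$, and read off the mixing weight $\pi(S)\sim S^{\alpha-\mu}$ from the ratio of the two Pareto tails. Your write-up is in fact more careful than the paper's own proof in flagging the two delicate points --- that conditioning on $S$ rather than $K$ should not bias sector~(ii) away from \eqref{eq:Herf}, and that both sectors deposit mass near $\sigma_0$ but are distinguished by $K$ --- neither of which the paper addresses explicitly.
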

\begin{proof}
  See Appendix~\ref{app:vol_dist}.
\end{proof}
This proposition states that conditional distribution of the
growth rate volatility is a mix of two components weighted with
probability $1-\pi(S_i)$ and $\pi(S_i)$.

The first component reproduces the volatility distribution observed in
the previous section under the single granularity hypothesis with the
two regimes associated with well and poorly diversified large firms
featuring an evenly or skewed distribution of sales among
sub-units. This means that the conditional volatility distribution
displays a power-law tail regime $\sigma^{-1-\mu}$ with $1 < \mu < 2$
for $\bar{\sigma}(S)\ll \sigma \ll \sigma_0$. Moreover, the
growth volatility scales with size as $S^{-\beta}$ with
$\beta=(\mu-1)/\mu$ and not as $S^{-1/2}$ as predicted under the
standard Central Limit theorem. Note that all these firms typically
have a large number of sub-units.

The second component is new and represents a peculiar feature that
emerges when the two granularity sources are combined, something that
was overlooked by both \cite{wyart2003statistical}
and~\cite{Gabaix_2011}. With probability $\pi(S_i)$, large firms can
be constituted by a very low number of sub-units, which mechanically
concentrate aggregate sales and prevent diversification. For these
firms, the volatility remains of order $\sigma_0$, generating an hump
in the distribution around this value. The qualitative behavior of
$P(\sigma|S)$ is reported in the right-panel of
Figure~\ref{fig:sigma_qualitative}.

Proposition~\ref{prop:vol_dist} gives the opportunity to clarify the
subtle relation between the two set-ups with either one or two sources
of granularity. The former can be imagined as fixing the number of
sub-units $K$ to $K = S/\overline{s}$, with the consequence of
removing the ``hump'' in the volatility distribution around
$\sigma_0$. Indeed, the probability to observe a ill-diversified firm
in a single granularity model is of order $S^{-1 -\mu}$, which tends
to zero much faster than $S^{\alpha - \mu}$, obtained in the case with
two sources of granularity.

Using the distribution derived in Proposition~\ref{prop:vol_dist} we
can proceed to study the behavior of its moments.
\begin{proposition}
  \label{prop:vol_moments}
  For $1 \leq \alpha < \mu$, the integer moments of the growth rate
  volatilities conditional to size $S$ are asymptotically given, for
  large $S$, by:
  \begin{equation}
    \label{eq:vol_moments}
    \mathbb{E}[\sigma^q|S] = C_1 S^{1-\mu} + C_2 S^{q\frac{1-\mu}{\mu}} + C_3 S^{\alpha-\mu} +  \mathcal{O}\left(S^{\min \left(\alpha-\mu, 1-\mu, q\frac{1-\mu}{\mu}\right)}\right)\;\;,
  \end{equation}
  where $C_1$, $C_1$ and $C_3$ are numerical constants.
\end{proposition}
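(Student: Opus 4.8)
The plan is to evaluate $\mathbb{E}[\sigma^q|S]=\int \sigma^q\,P(\sigma|S)\,\dint\sigma$ by inserting the two–component decomposition of $P(\sigma|S)$ from Proposition~\ref{prop:vol_dist} and treating each component separately. The first component, carrying weight $1-\pi(S)=1+\mathcal{O}(S^{\alpha-\mu})$, is — up to this prefactor — \emph{identical} to the single–granularity conditional law of Eq.~\eqref{eq:Herf2}, with the number of sub-units identified as $K\approx S/\bar s$; the second, carrying weight $\pi(S)\sim S^{\alpha-\mu}$, is the hump $H(\sigma)$ concentrated near $\sigma_0$. I expect the first component to reproduce the two ``Gabaix'' terms $C_1S^{1-\mu}$ and $C_2S^{q(1-\mu)/\mu}$, and the second to produce the genuinely new term $C_3S^{\alpha-\mu}$.

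For the first component I would invoke the moment formula of Proposition~\ref{prop:herfindahl_gabaix_moments}: since $\sigma=\sigma_0\sqrt{\mathcal H}$ one has $\sigma^q=\sigma_0^q\mathcal H^{q/2}$, so that the contribution equals $\sigma_0^q\,\mathbb{E}[\mathcal H^{q/2}\,|\,K=S/\bar s]$, and substituting $q\to q/2$ together with $K\to S/\bar s$ in Eq.~\eqref{eq:gen_moments_body} immediately delivers $C_1S^{1-\mu}+C_2S^{q(1-\mu)/\mu}$ with a remainder of order $S^{\min(1-\mu,\,q(1-\mu)/\mu)}$. To make the origin of the two exponents transparent I would also evaluate $\int_0^{\sigma_0}\sigma^q\,\bar\sigma(S)^{-1}G(\sigma/\bar\sigma(S))(1-\sigma/\sigma_0)^{\mu-1}\,\dint\sigma$ directly, splitting it into a bulk region $\sigma=\mathcal{O}(\bar\sigma(S))$ and a tail region $\bar\sigma(S)\ll\sigma\ll\sigma_0$ where $G(x)\sim x^{-1-\mu}$, i.e.\ $P(\sigma|S)\sim\bar\sigma(S)^\mu\sigma^{-1-\mu}$. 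The bulk contributes $\bar\sigma(S)^q\sim S^{q(1-\mu)/\mu}$; the tail integrand $\bar\sigma(S)^\mu\sigma^{q-1-\mu}$, once regularized by the truncation factor $(1-\sigma/\sigma_0)^{\mu-1}$ near $\sigma_0$, contributes $\bar\sigma(S)^\mu\times\mathcal{O}(1)\sim S^{1-\mu}$ from its upper end and an extra $S^{q(1-\mu)/\mu}$ piece from its lower end. These recover the $C_1$ and $C_2$ terms, and which dominates is governed by the sign of $q-\mu$, consistently with the $q\gtrless\mu/2$ dichotomy for the $\mathcal H$–moments.

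For the second component I would use that $H(\sigma)$ is peaked at $\sigma\approx\sigma_0=\mathcal{O}(1)$ and normalized, so that $\int\sigma^q H(\sigma)\,\dint\sigma$ converges to an $S$–independent constant as $S\to\infty$ (a handful of large sub-units yields an HHi of order one). Multiplying by the weight $\pi(S)\sim S^{\alpha-\mu}$ then produces the term $C_3S^{\alpha-\mu}$. Finally I would collect the three contributions, absorbing the $1-\pi(S)$ prefactor of the first component into the error via $1-\pi(S)=1+\mathcal{O}(S^{\alpha-\mu})$, and check that all leftover pieces — the subleading Gabaix remainder, the $\pi(S)$–correction multiplying the first component, and the corrections to $\pi(S)$ and to the $H$–integral — are of order $S^{\min(\alpha-\mu,\,1-\mu,\,q(1-\mu)/\mu)}$ or smaller.

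The main obstacle, I expect, is the careful treatment of the first–component tail integral: one must show that the truncation at $\sigma_0$ turns the otherwise scale-free power law $\sigma^{q-1-\mu}$ into a convergent integral whose value is a finite constant times $\bar\sigma(S)^\mu$, thereby producing a clean $S^{1-\mu}$ term with a well-defined coefficient $C_1$ \emph{uniformly} across the crossover $q\approx\mu$ (where the bulk and tail exponents coincide and must be disentangled). A secondary subtlety is verifying that the possibly weakly $S$–dependent shape of $H$ does not spoil the leading $S^{\alpha-\mu}$ scaling of the third term, and tracking the various subleading corrections so that they indeed collapse into the single error exponent stated.
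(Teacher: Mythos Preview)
Your proposal is correct and follows essentially the same route as the paper: split $\mathbb{E}[\sigma^q|S]$ into the many-subunits and few-subunits contributions of Proposition~\ref{prop:vol_dist}, apply Proposition~\ref{prop:herfindahl_gabaix_moments} with $q\to q/2$ and $K\propto S$ to obtain the $C_1S^{1-\mu}$ and $C_2S^{q(1-\mu)/\mu}$ terms from the first piece, and read off $C_3S^{\alpha-\mu}$ from $\pi(S)\int\sigma^q H(\sigma)\,\dint\sigma$. The only addition in the paper's version is a supplementary consistency check: it re-derives the dominant scaling $\mathbb{E}[\sigma^2|S]\propto S^{\alpha-\mu}$ independently via the Fourier--Laplace machinery (Lemma~\ref{lemma:2} and Eq.~\eqref{eq:FLt_deriv}), matching the exponent $b$ in an ansatz $\mathbb{V}[R|S]\propto S^b$ against the $\lambda^{\mu-2}$ behaviour of $-\partial_q^2\widehat P(\lambda,q)|_{q=0}$.
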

\begin{proof}
  See Appendix~\ref{app:vol_moments}.
\end{proof}
This proposition states that the dominant behavior of the conditional
moments of the volatility is determined by the smallest exponent in
the expansion above. When $1 < \alpha < \mu$, all the moments of order
greater or equal to $2$ ($q\geq 2$) are asymptotically dominated by
the third term, $C_3 S^{\alpha-\mu}$, which represents the
contribution of large firms with few sub-units. This also happens with
the first moment ($q=1$) if the tail of the distribution of the number
of sub-units $P(K)$ is light and satisfies
$\alpha > \mu +\frac{1-\mu}{\mu}$. The naive scaling for the first
moment, obtained when the second term in Eq.~\eqref{eq:vol_moments} is
dominant, only holds when the right tail of $P(K)$ is heavy enough,
with $\alpha < \mu +\frac{1-\mu}{\mu}$.
Note finally that the {\it median} volatility always scales as
$\sigma_{\text{med}}(S) \sim S^{-\beta}$ with
$\beta=(\mu-1)/\mu$. This proposition again confirms that the scaling
behavior of volatility moments in this model depends on the moment
under consideration.

The economics of this proposition can be understood by noticing that
firms with large size $S$ can be in this set-up of three different
types. The first type is composed by large firms whose sales are
evenly distributed among a large number of sub-units. For these firms
$S \approx K \bar{s}$ and the HHi is relatively small
$\mathcal{H}\approx \mathcal{H}_{\text{typ}}$, meaning they contribute
to the moments of the growth volatility by a factor of order
$S^{q\frac{1-\mu}{\mu}}$. The second type is composed by large firms
that also have a large number of sub-units. However, the sales of
these firms are concentrated in very few sub-units. Again for these
firms $S \approx K \bar{s}$, but their HHi is very high and can
go up to $\mathcal{H}\approx 1$; their contribution to volatility
moments behaves as $S^{1-\mu}$ independently of $q$. Finally, a
fraction of these large firms is made up of only a few sub-units which
mechanically results in a very high Herfindahl-Hirschman index
($\mathcal{H}\approx 1$). They therefore will have
$\sigma \approx \sigma_0$ and all moments will scale as $\sigma_0^q$
independently of $S$. They contribute to volatility moments by a
factor proportional to their relative fraction, $S^{\alpha-\mu}$. This
last group was not present in \cite{Gabaix_2011} model and was
overlooked in \cite{wyart2003statistical}. \medskip

{\bf Distribution of the growth rates}. We are now able to provide a
complete characterization of the probability distribution of the
growth rates. We start with the following proposition.
\begin{proposition}
  \label{prop:gmm_levy}
  The distribution of growth rates conditional on size $S$ and on
  growth volatility $\sigma$ is given, for large $S$, by:
  \begin{equation}
    \label{eq:WB_GMM}
    P_S(g \vert \sigma,S) \approx \left(1 - \pi(S)\right)\;\mathcal{N}(0,\sigma^2) +  \pi(S)\;Q_{\eta} \;\;,
  \end{equation}
  where $\mathcal{N}(0,\sigma^2)$ is a Normal distribution with
  variance $\sigma^2=\sigma^2_0\mathcal{H}$ and $Q_{\eta}$ a non
  universal distribution that depends on the distribution of the
  sub-unit growth shocks $\eta$. The weight $\pi(S)$ represents the
  probability of observing a large firm with only few sub-units
  vanishing when size grows larger as $S^{\alpha-\mu}$.

  Neglecting large firms with a small number of sub-units and
  integrating Eq.~\eqref{eq:WB_GMM} over the first term of the
  distribution $P(\sigma|S)$ in Eq.~\eqref{eq:cond_vol} gives
  \begin{equation}
    \label{eq:pred2_gr}
    P\left(g|S\right) \sim
    S^{\beta}L_{\mu, 0}(gS^{\beta}),\;\;\; \text{when} \;\; g \ll 1,
  \end{equation}
  where $L_{\mu,0}(\cdot)$ is the symmetric L\'evy alpha-stable
  probability density with stability parameter
  $1 < \mu < 2$.\footnote{Symmetric L\'evy alpha-stable distributions
    are a family of distributions defined with location, scale and
    stability parameters. This last parameter determines the behavior
    of the tails of the distribution. When the stability parameter is
    equal to 2, one obtains a Gaussian distribution, which is a subset
    of this family. Instead, taking a stability parameter of 1 leads
    to the Cauchy distribution.} Because of the cut-off in the
  distribution of $\sigma$, this distribution also has a cut-off, with
  $P(g|S)=0$ for $g\gtrapprox S$. The complete distribution $P(g)$ is
  obtained by integrating over $P(s)$, and behaves asymptotically as
  $P(g)\sim \vert g\vert^{-1-\mu}$.
\end{proposition}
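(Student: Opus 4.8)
The plan is to read off the law of $g$ by averaging successively over the three sources of randomness in $g=\sigma_0\sum_{j=1}^{K}w_j\eta_j$, where $w_j=s_j/S$ satisfy $\sum_j w_j=1$ and $\sum_j w_j^2=\mathcal H=(\sigma/\sigma_0)^2$. First I would condition on the whole size configuration, i.e.\ on $\sigma$, and split firms according to Proposition~\ref{prop:fsd} and Proposition~\ref{prop:vol_dist} into those carrying many sub-units (probability $1-\pi(S)$) and those built from only a handful (probability $\pi(S)\sim S^{\alpha-\mu}$). For the former, $\sum_j w_j^2=\mathcal H$ is small and the weight is spread over many comparable terms, so a Lindeberg--Lyapunov central limit theorem gives the conditional law $g\mid\sigma\approx\mathcal N(0,\sigma^2)$; for the latter the sum has too few summands to be universal and keeps the non-Gaussian shape $Q_\eta$ inherited from the shock law. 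Collecting the two pieces yields \eqref{eq:WB_GMM}.

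Next I would integrate the Gaussian branch against the first (many-sub-unit) component of $P(\sigma\vert S)$ from \eqref{eq:cond_vol}--\eqref{eq:Herf2}. Writing $P(g\vert S)\approx\int_0^{\sigma_0}\frac{\dint\sigma}{\sqrt{2\pi}\,\sigma}\,e^{-g^2/2\sigma^2}P(\sigma\vert S)$ and substituting $x=\sigma/\bar\sigma(S)$ turns it into $\tfrac{1}{\bar\sigma(S)}\Phi\!\left(g/\bar\sigma(S)\right)$, a Gaussian scale mixture whose mixing density carries the power-law tail $G(x)\sim x^{-1-\mu}$. The identity I would lean on is the subordination representation of symmetric stable laws: a standard normal multiplied by the square root of a one-sided $\mu/2$-stable variable is symmetric $\mu$-stable, and a mixing variance with tail index $\mu/2$ (equivalently a scale with tail $x^{-1-\mu}$) sits in the corresponding domain of attraction. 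Identifying $\Phi$ with the \emph{full} density $L_{\mu,0}$, and not merely its tail, relies on the rescaled Herfindahl index $\mathcal H$ -- hence the variance $\sigma^2=\sigma_0^2\mathcal H$ -- being asymptotically one-sided $\mu/2$-stable, which is exactly what Proposition~\ref{prop:herfindahl_gabaix} encodes. Since $\bar\sigma(S)\sim S^{-\beta}$ with $\beta=(\mu-1)/\mu$ by \eqref{eq:sigma_scal}, the prefactor $1/\bar\sigma(S)$ supplies the Jacobian $S^{\beta}$ and gives \eqref{eq:pred2_gr}. As an independent check I would rederive this directly: $\sum_j s_j\eta_j$ is a sum of i.i.d.\ terms whose tail is, by Breiman's lemma (which needs $\avg{\vert\eta\vert^{\mu+\varepsilon}}<\infty$), regularly varying with index $\mu$ and whose mean vanishes, so the generalized central limit theorem gives $\sum_j s_j\eta_j/K^{1/\mu}\to L_{\mu,0}$ for symmetric $\eta$; dividing by $S\approx K\bar s$ together with $K^{1/\mu}/K\propto S^{-\beta}$ reproduces both the scaling and the scale constant.

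For the cut-off and the unconditional law I would argue as follows. Since $\mathcal H\le 1$ forces $\sigma\le\sigma_0$, the power law of $P(\sigma\vert S)$ is truncated at $\sigma_0$; because the stable tail of $P(g\vert S)$ at a value $g$ is produced by the $\sigma\sim\vert g\vert$ part of the mixture, this truncation suppresses the power law once $\vert g\vert$ exceeds the finite scale set by $\sigma_0$, leaving only Gaussian decay -- the cut-off recorded in \eqref{eq:pred2_gr}. For the marginal $P(g)=\int\dint S\,P(S)\,P(g\vert S)$ I would note that when $\vert g\vert$ is larger than the biggest typical volatility $\bar\sigma(s_0)\sim s_0^{-\beta}$, every admissible firm has $g$ deep in its power-law regime, so $P(g\vert S)\sim c'\,S^{1-\mu}\vert g\vert^{-1-\mu}$ (using $\beta\mu=\mu-1$). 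Inserting $P(S)\sim C_\alpha S^{-1-\alpha}$ from Proposition~\ref{prop:fsd}, the size integral $\int_{s_0}^{\infty}S^{-\alpha-\mu}\,\dint S$ converges (since $\alpha+\mu>1$) and is dominated by the smallest firms, yielding an $S$-independent constant, so $P(g)\sim\vert g\vert^{-1-\mu}$ on the window $s_0^{-\beta}\ll\vert g\vert\ll\sigma_0$. The few-sub-unit branch $Q_\eta$ has light tails and does not affect this exponent.

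The main obstacle is the passage from the heavy tail of the mixing law to the entire stable shape: the scale-mixture integral by itself pins down only the $\vert g\vert^{-1-\mu}$ power law, so recovering all of $L_{\mu,0}$ forces one to establish that the rescaled HHi is genuinely $\mu/2$-stable and to reconcile the scale-mixture and generalized-CLT derivations. Two further points need care. The conditional Gaussian approximation degrades exactly for the most concentrated configurations, so one must check that the intermediate-$\sigma$ range actually responsible for the tail is diversified enough for the Lindeberg condition to hold. And in the marginal computation one must correctly locate the lower limit $s_0$ of the size integral -- taking it to be the crossover scale $\vert g\vert^{-1/\beta}$ instead would spuriously shift the exponent -- while ensuring that neither the $Q_\eta$ branch nor the $\vert g\vert\gtrsim\sigma_0$ cut-off contaminates the intermediate power law. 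Making this dominant-balance argument precise, and thereby confirming that the exponent is exactly $\mu$, is the step I expect to be hardest.
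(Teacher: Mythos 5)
Your proposal is correct, but for the central claim --- the L\'evy stable form of $P(g|S)$ --- it takes a genuinely different route from the paper. The paper explicitly declines the path you follow: its proof notes that integrating $\mathcal{N}(0,\sigma^2)$ against $P(\sigma|S)$ is ``in principle possible'' and instead works with the joint law $P(S,R)$, surmising the ansatz $P(R|S)=S^{-1/\mu}L_{0,\mu}(RS^{-1/\mu})$ and checking that it reproduces the Fourier--Laplace expansion of Corollary~\ref{coro:2} in the scaling limit $q=\theta\lambda^{1/\mu}$, $\lambda\to0$; the stable law is thus verified as an ansatz rather than derived. You derive it constructively, in two equivalent ways: (i) the subordination identity (a centered Gaussian whose variance is one-sided $\mu/2$-stable is symmetric $\mu$-stable), combined with the $\mu/2$-stability of the rescaled $\sum_j s_j^2$ that the paper itself invokes when proving Proposition~\ref{prop:herfindahl_gabaix}; and (ii) the generalized CLT applied directly to $\sum_j s_j\eta_j$, whose summands have regularly varying tails of index $\mu$ by Breiman's lemma. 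Route (ii) is arguably the cleanest of the three, since it requires no conditional-Gaussian step at all; its symmetry requirement on $\eta$ is implicitly shared by the paper, whose Lemma~\ref{lemma:1} assumes Gaussian $\eta$ (the factor $e^{-q^2s^2/2}$ is exactly the Gaussian characteristic function). Your treatment of Eq.~\eqref{eq:WB_GMM} (CLT versus non-universal $Q_\eta$, weights $1-\pi(S)$ and $\pi(S)\sim S^{\alpha-\mu}$) and of the unconditional tail $P(g)\sim|g|^{-1-\mu}$ (dominant balance using $L_{\mu,0}(x)\sim x^{-1-\mu}$ and convergence of $\int \dint S\,P(S)S^{-\beta\mu}$, which the paper performs on the CCDF rather than the density) coincides with the paper's. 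The caveats you flag at the end are real but are not gaps relative to the paper: the Lindeberg concern vanishes for Gaussian $\eta$ and is otherwise shared by Corollary~\ref{coro:2}, and your explicit validity window $\bar{\sigma}(S_{\min})\ll|g|\ll\sigma_0$ for the intermediate power law is, if anything, more careful than the paper, which applies the stable tail under the integral without stating the regime in which this is legitimate.
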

\begin{proof}
See Appendix~\ref{app:gmm_levy}.
\end{proof}

\medskip

Before discussing the content of this proposition, recall that a
firm's growth rate is defined as the sum of the sub-unit growth shocks
weighted by their relative size
(cfr. Eq.~\eqref{eq:growth-def}). Growth volatility is therefore
proportional to the square root of the HHi, computed using the sizes
of the sub-units (cfr. Eq.~\eqref{eq:g_var}). With this in mind,
Proposition~\ref{prop:gmm_levy} states that the growth rate
distribution, conditional on size and volatility, is again the
combination of two components.

The first component is associated with of large firms made up by a
large number of sub-units, and results from invoking the Central Limit Theorem (CLT henceforth). In the
limit of a large number of sub-units, a firm's growth rate $g$ is the
sum of a large number of random variables with finite variance; its
distribution converges therefore to a Normal distribution. The speed
of convergence to this asymptotic limit is not uniform, and depends on
the extent to which a firm's size is evenly distributed among its
sub-units. However, for a large enough number of sub-units, the Normal
distribution provides a good approximation independently of the
distribution of the $\eta$ shocks, as long as it has a finite
variance. The second component is instead associated with large firms
having only a handful of sub-units, thus preventing the CLT from
applying. Their growth rate distribution is therefore strongly
determined by that of the sub-unit shocks, $\eta$. In the extreme case
of a firm with one single sub-unit, the two distributions would
coincide.

The second statement in Proposition~\ref{prop:gmm_levy} concerns the
behavior of the growth rate distribution when we disregard the role of
large firms with few sub-units, and we pool together those with a
large number of sub-units but with different growth volatilities
(stemming from their very heterogeneous level of
diversification). Formally, this exercise consists in mixing together
random variables, which are approximately Gaussian, using the
conditional volatility distribution $P(\sigma|S)$ as a mixing
function.\footnote{This only concerns the first term of $P(\sigma|S)$
  in Eq.~\eqref{eq:cond_vol} since we are disregarding large firms
  with few sub-units.} The resulting distribution,
$P(g|S)\approx \int \mathrm{d}\sigma~ \mathcal{N}(0,\sigma) P(\sigma|S)$,
inherits its behavior from $P(\sigma|S)$ which has a Pareto right tail
with exponent $-(1+\mu)$, and is truncated at
$\sigma_0$. Correspondingly, our proposition states that $P(g|S)$ is
distributed according to a {\it truncated} symmetric L\'evy
alpha-stable distribution with a stability parameter $\mu$. This is
consistent with a power law decay with exponent $-(1+\mu)$, truncated
for large growth rates.

Taking stock, we can conclude by discussing the behavior of the
unconditional growth rates distribution $P(g)$, which is obtained when
we pool together all types of firms. Their heterogeneity in both
structure and size is captured by their heterogeneity in growth
volatility. This distribution is defined in terms of a Gaussian
mixture as
$P(g) \approx \int {\rm d}S {\rm d}\sigma \, \mathcal{N}(0,\sigma) \,
P(\sigma \vert S) P(S)$. Hence, the behavior of the growth rates
distribution in this model can be rationalized in terms of a class of
statistical models known as Gaussian Mixtures. The basic idea is that
mixing together random variables with centered distributions but
heterogeneous variances provides a simple and intuitive mechanism to
generate continuous, unimodal and heavy-tailed distributions such as
the one observed in the
data. ~\citep{andrewsmallows1974,west1987}. Such mixtures have also
been proposed to explain firm growth rate distributions
in~\cite{Buldyrev_et_al_2007}.

If this statement holds, then un-mixing the growth rates by rescaling
them by the corresponding individual volatility, would lead to an
approximately Normal distribution. The quality of this approximation
is driven by two factors. First, it's possible that we would observe
large firms with few sub-units, whose non-Normal growth rates
contaminate the Gaussian mixture. Second, the very heterogeneous sizes
of the sub-units which make up a large firm slows down the convergence
to the Normal limit induced by the CLT.

\medskip

{\bf Aggregation}. Before concluding this
theoretical section we present a last result describing how the model
behaves against aggregation.
\begin{proposition}
  \label{pred:agg}
    Proposition~\ref{prop:fsd}, \ref{prop:vol_dist},
  \ref{prop:vol_moments} and \ref{prop:gmm_levy} are robust against
  the additive aggregation of firms into (possibly fictitious)
  supra-firms, or even the whole economic activity of a country.
\end{proposition}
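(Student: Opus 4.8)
The plan is to show that aggregation maps the model onto itself, preserving the only features—the tail exponents $\mu$ and $\alpha$—on which Propositions~\ref{prop:fsd}--\ref{prop:gmm_levy} depend. First I would make the aggregation precise: a supra-firm $I$ is formed by pooling a group of $M$ elementary firms, and its sub-units are declared to be the union of the sub-units of its constituents. Under this definition the supra-firm's size is $\tilde{S}_I = \sum_m S_{i_m} = \sum_{j \in I} s_j$, its number of sub-units is $\tilde{K}_I = \sum_m K_{i_m}$, and—because the sub-unit shocks $\eta_j$ remain the same independent, zero-mean, unit-variance variables—its growth rate is $\tilde{g}_I = \tilde{S}_I^{-1}\sum_{j \in I} s_j \eta_j$, with conditional variance $\sigma_0^2 \tilde{\mathcal{H}}_I$ where $\tilde{\mathcal{H}}_I$ is the Herfindahl index of the enlarged sub-unit pool. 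This is formally identical to Eqs.~\eqref{eq:growth-def}--\eqref{eq:g_var}: a supra-firm is \emph{itself} a firm in the sense of the model, only built from a larger collection of sub-units.

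The substance of the proof is then to verify that this larger pool obeys the same two distributional assumptions. The sub-unit sizes are immediate: the sub-units of $I$ are drawn from the very same population, so Assumption~\ref{hp:su_size} holds verbatim, with the same exponent $\mu$ and the same mean $\bar{s}$. The number of sub-units requires the key lemma: a sum of i.i.d.\ Pareto random variables with tail index $\alpha$ is again regularly varying with the \emph{same} index, $\mathbb{P}(\tilde{K}_I > x) \sim M\,x^{-\alpha}$ as $x \to \infty$. This is the standard closure of regularly varying (hence subexponential) tails under finite convolution, governed by the one-big-jump principle: the sum is large precisely when a single summand is large. Thus $\tilde{K}_I$ satisfies Assumption~\ref{hp:3} with the unchanged exponent $\alpha$, only the prefactor being multiplied by $M$.

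With both assumptions re-established for the supra-firm ensemble, I would invoke the fact—emphasised right after Assumption~\ref{hp:3}—that the proofs of Propositions~\ref{prop:fsd}--\ref{prop:gmm_levy} use only the tail exponents $\mu$ and $\alpha$, and not the fine structure of the underlying distributions. Consequently each proposition transfers to the aggregated ensemble unchanged in functional form: the firm-size tail keeps its two Pareto exponents $\alpha$ and $\mu$; the conditional volatility distribution keeps its truncated power law with exponent $-(1+\mu)$ and its hump of weight $\sim \tilde{S}^{\alpha-\mu}$; the volatility moments keep the three scaling exponents $1-\mu$, $q(1-\mu)/\mu$ and $\alpha-\mu$; and the conditional growth-rate law remains a truncated Lévy stable of index $\mu$. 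The scaling exponents are literally identical because $\sigma_0$, $\bar{s}$, $\mu$ and $\alpha$ are all unaffected by aggregation; only the numerical prefactors $C_1, C_2, \dots$ may change.

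I expect the main obstacle to be the regime where the number $M$ of aggregated firms is itself large—the ``whole economic activity of a country'' alluded to in the statement. For fixed $M$ the convolution lemma is classical, but when $M$ grows the crossover between the bulk and the Pareto tail of $\tilde{K}_I$ drifts to larger arguments, so one must argue that the tail exponent survives a genuine double limit (both the size and $M$ large) rather than a single one. The clean way around this is to take the asymptotics in $\tilde{S}$ (equivalently $x$) as the outer limit for any fixed grouping and to observe that, since the tail index is $M$-independent, the conclusions hold uniformly, with all $M$-dependence absorbed into the constants. No new correlations are introduced, since aggregation never alters the underlying independent sub-unit shocks $\eta_j$, so the variance identity $\tilde{\sigma}_I^2 = \sigma_0^2 \tilde{\mathcal{H}}_I$ and the Gaussian-mixture structure of $P(g)$ both carry over directly.
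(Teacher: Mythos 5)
Your proposal is correct and follows essentially the same route as the paper's own proof: the paper likewise argues that aggregating firms sums their (i.i.d.\ Pareto-$\alpha$) sub-unit counts, that such sums remain asymptotically Pareto with the same exponent $\alpha$, that the pooled sub-unit sizes trivially keep Assumption~\ref{hp:su_size}, and hence that Propositions~\ref{prop:fsd}--\ref{prop:gmm_levy} carry over. Your write-up is simply a more careful version of the same argument --- making the one-big-jump convolution lemma and its prefactor explicit, and flagging the double-limit issue for large $M$ that the paper's two-firm sketch glosses over.
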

\begin{proof}
  See Appendix~\ref{app:agg}.
\end{proof}
This result has the notable implication the statistical properties we
have described above are independent of the level of aggregation of
the original sub-units. Merging firms into supra-firms would generate
entities whose growth trajectories would be described in a first
approximation by Proposition~\ref{prop:fsd},
 Proposition~\ref{prop:vol_dist}, Proposition~\ref{prop:vol_moments}
and Proposition~\ref{prop:gmm_levy}.\footnote{This is an important
  property of the model since is in line with the available empirical
  evidence \citep{Lee_Amaral_Canning_Meyer_Stanley_1998,Castaldi2008}
  and with Gabaix' granularity hypothesis.  Moreover while this
  robustness against aggregation is shared with Gabaix' model, it
  possesses some discriminatory power: for example, the
  \cite{Sutton_2002} model does not satisfy this property
  \citep{wyart2003statistical}.} 

\section{Empirical investigations}
\label{sec:empirics}

The models analyzed in this paper attempt to explain key statistical
regularities in the growth of business firms. Both surmise that these
regularities are a consequence of the internal structure of
firms. Heterogeneous and granular distributions for the number and
sizes of the sub-units of a firm have a direct impact on its growth
volatility. In these models, firms can be separated into groups, with
well-diversified firms whose size is evenly distributed among several
sub-units on the one hand, and firms which are poorly diversified on
the other. The second group's poor diversification results either from
them being made up of a small number of sub-units, or from a large
number of sub-units but whose size is concentrated in only a handful
of them. Ultimately, the coexistence of these different firms in a
market should determine how their size and growth interact.

These new results translate into more precise testable
predictions. Specifically, we focus on three aspects that have
received little attention in the literature: the distribution of the
growth rate volatilities, its relation with firm size, and finally how
it ultimately determines the shape of the growth rate distribution. In
performing our investigations one should always keep in mind that our
predictions are only sharp in the limit of large firm sizes,
$S \to \infty$. Therefore empirical analysis are expected to show some
deviations from these predictions, which should nevertheless taper off
as $S$ increases. Before diving in these investigations, we begin by
describing our data, empirical proxies and statistical procedures.

\subsection{Data, empirical proxies and binning procedure}
\label{sec:data}

Our empirical investigations take advantage of the Compustat database
collecting financial, statistical and market information on global
companies throughout the world. Compustat collects data for medium to
large publicly traded companies, and is well known for the quality of
its data and for the long time-span it covers. While we are aware of
the limitation of the Compustat data in terms of sample selection, we
believe that they are high quality data on the US economy with legal
entities which are consistently defined in time and breadth, along
with high frequency observations which are not typically available in
business register data.

From this source, we extract a panel of companies spanning the period
from 1961 to 2020 with information collected at the quarterly
frequency. For each company, we observe size as proxied by Total
Sales,\footnote{The Compustat name for the size proxy is 'saleq'.} and
we deflate nominal sales using the quarterly GDP deflator provided by
the Federal Reserve Bank of St. Louis.\footnote{We download the
  'GDPDEF' quarterly series from the Federal Reserve Economic Data
  (FRED) web site.} To limit inconsistencies regarding our size proxy
due to different accounting rules, we consider only companies
incorporated in the US.

To clarify our data pre-processing, let $\tilde{S}_{iqy}$ represent the
size of a firm $i$ in a quarter $q$ and year $y$ expressed in millions
of US dollars deflated with a GDP deflator with base year in 2012. We
define a normalized quarter size as
$S_{iqy}=\frac{N_y\tilde{S}_{iqy}}{\sum_{iq} \tilde{S}_{iqy}}$ where
$N_y$ represent the number of quarter size observations available in
year $y$. This normalization is important stationarize the size
distribution. To simplify the notation, we absorb $qy$ into a time
subscript $t$. Since our data records firm size at the quarterly
frequency we define annual growth rates as the logarithmic difference
over one year, $g_{it}=\ln(S_{it+4}) - \ln(S_{it})$, rolling over
quarters to maximize the number of observations. Similarly when we
work with the growth rate of the non-normalized size
$\tilde{S}_{it}$. For each firm we then compute a single averaged
(across years) normalized size $S_i$, and we investigate its growth
dynamics considering the corresponding annual growth rates
$g_{it}$. After removing firms for which we do not observe at least
two growth rates, we are left with $24,233$ (out of a total of
$38,995$) companies, corresponding to over one million growth rate
observations. Companies in the Compustat sample tend to be large,
since they are publicly traded, with an average quarterly Total Sales
value of about 450 millions in 2012 USD and a positive average growth
rate. For each firm we observe on average 46 growth rates, a number
that should limit concerns about the potential noise affecting the
firm growth volatility estimates we compute in the paper. Descriptive
statistics on the sample used are reported in Appendix~\ref{app:main}.

Finally, to reduce the effect of individual observations when we
investigate the relation between a firm's size and its growth
dynamics, we implement a pre-processing technique known as data
binning. This technique consists in grouping firms based on their size
and compute for each group (i.e. bin) some statistics of the
corresponding growth rates. In the following, we set the number of
bins to $25$ and we assign each firm to each bin based on its average
normalized size $\bar{S}_{i}=\frac{1}{T_{i}} \sum_t S_{it}$. The bins
are defined so that they contain the same number of firms. We then
compute for each firm the corresponding growth volatility, proxied by
the adjusted mean absolute deviation
$\sigma_i =\sqrt{\frac{\pi}{2}}\frac{1}{T_{i}} \sum_t
|g_{it}-\bar{g}_{i}|$, where $\bar{g}_{i}$ is the sample average
growth rate of firm $i$ computed over $T_i$ observations.\footnote{A
  bar over a variable indicates from now on the corresponding sample
  average. In this definition, the correcting factor
  $\sqrt{\frac{\pi}{2}}$ comes from the observation that for a
  centered Normal distribution
  $E[\vert X\vert]/\sqrt{E[X^2]}=\sqrt{\frac{2}{\pi}}$, so that the
  mean absolute deviation is a low-moment estimator for the standard
  deviation.} We then compute the average size and volatility across
firms belonging to the same bin. We implement this procedure also for
higher absolute moments of the volatility we denote as $\sigma_i^q$
with $q=2,3,4$.

\subsection{Growth rate volatility}
\label{sec:gr-volatility}

Our first investigation concerns the statistical properties of the
growth rate volatility $\sigma_i$. Indeed,
Proposition~\ref{prop:vol_dist} describes the distribution of these
volatilities conditional on size, allowing to make the following
prediction.
\begin{prediction}
  \label{pred:1}
  For large size $S$, the distribution of growth rate volatilities
  conditioned on size, $P(\sigma|S)$, displays a size-dependent
  contribution together with a size-independent contribution that
  appears as a hump at $\sigma=\sigma_0$. Rescaling the volatilities
  by their average computed over firms of the same size,
  $\bar{\sigma}(S)$, leads to a curve collapse for the size-dependent
  contribution into a single master curve displaying a power law tail
  with exponent $-(1+\mu)$.
\end{prediction}
To test this prediction, we start by computing the growth volatility
$\sigma_i$ for each of the $24,233$ firms in the sample according to
the procedure described in Section~\ref{sec:data} above. Volatilities
range from $0.002$ to $5.879$ with an average value of $0.479$ and a
standard deviation of $0.490$. We then group volatilities in 25 size
bins containing approximately $970$ observations each and look at
their distributions within the bins. Left-panel of
Figure~\ref{fig:pred1} displays the estimated probability density
(histogram) for the 25 bins on a double-log scale. Moving from smaller
(dark-violet) to larger (gold) firms the distribution shifts smoothly
towards the left, pushing the corresponding average volatility from
$1.218$ to $0.219$. The right tail in all the 25 distributions appears
to be log-linear, suggesting a power law decay of their densities. We
note, however, the absence of a hump for large volatilities
representing the size-independent contribution in
Prediction~\ref{pred:1}. Figure~\ref{fig-app:pred1_hump} in
Appendix~\ref{app:main} reports the distribution of the growth rate
volatility on a double log scale for the six size bins 1, 5, 10, 15,
20 and 25. We see no sign of the hump for large volatilities. We
confirm our visual inspection with a non-parametric test of
bi-modality, which rejects the existence of a second mode at any
reasonable level of statistical significance.
\begin{figure}[t]
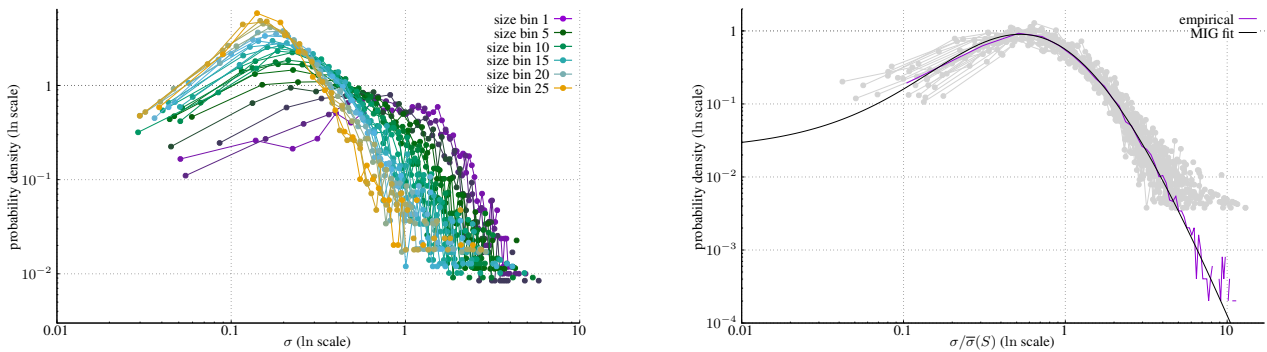

  \caption{Growth volatility distribution}
  \label{fig:pred1}
  \begin{center}    
    \begin{minipage}[t]{0.47\linewidth} 
    \scalebox{0.6}{
      \input{./figures/fig_MAD_binned.tex}
      }
    \end{minipage}
    \hfill
    \begin{minipage}[t]{0.47\linewidth}
    \scalebox{0.6}{
      \input{./figures/fig_MAD_rescaled_binned.tex}
      }
    \end{minipage}
  \end{center}
  {\scriptsize {\it Notes}. \underline{Left panel} reports on a double
    log scale and for 25 bins, defined in term of normalized size, the
    distribution of the growth rate volatility. \underline{Right
      panel} reports on a double log scale the distribution of the
    growth volatility rescaled by the average volatility observed in
    each bin together with a Lognormal (black line) and Inverse Gamma
    fit (solid line). The ML estimates of the scale, shape and
    location parameters in the Inverse Gamma fit are found
    $4.788(0.139)$, $4.620(0.086)$ and $0.326(0.010)$ respectively. In
    all panels volatility is computed as the mean absolute deviation
    multiplied by the factor $\sqrt{\pi/2}$. Data source:
    Compustat. \par}
\end{figure}

We next construct the rescaled volatities, $\sigma_i/\bar{\sigma}(S)$,
and report the corresponding empirical densities in the right-panel of
Figure~\ref{fig:pred1}. In line with Prediction~\ref{pred:1}, all the
25 distributions of the rescaled volatility (light-gray shaded curves)
collapse nicely onto a single master curve. We find again no sign of
the hump at large volatilities, indicating that $P(\sigma|S)$ is
entirely determined by the size dependent contribution in
Eq.~\eqref{eq:cond_vol}. However, in this case one should consider
that the location of the hump in the 25 distributions would move
around with the rescaling by $\bar{\sigma}(S)$ because the relative
size of the contribution would shift as well, making harder the
identification of the humps.

Taking stock of this behavior, we investigate the properties of the
distribution of the rescaled growth volatility, corresponding to the
function $G$ in Eq.~\eqref{eq:cond_vol}. We achieve this by pooling
together the rescaled volatilities and estimating the resulting
density. The histogram we obtain, shown in Figure~\ref{fig:pred1}
(dark-violet line), matches the shaded gray area representing the master
curve on top of which the 25 binned rescaled histograms collapse. We
propose to characterize this empirical histogram with a 3-parameters
Modified Inverse Gamma (MIG) density,
\begin{equation}
  \label{eq:invgamma}
  P_{MIG}(x;a,b,m) = C(a,b,m) (x+m)^{-(1+b)}\exp\left(- \frac{a}{x+m}\right), \qquad (x \geq 0),
\end{equation}
where $(a,b,m)$ represent the scale, shape and location parameter
respectively. The normalization constant $C(a,b,m)$ normalizes the
distribution when integrating over $\sigma\geq 0$.\footnote{The
  normalization constant in the Modified Inverse Gamma density reads
  $\dfrac{a^b}{\Gamma(b)}$, while in Eq.~\eqref{eq:invgamma} we set
  $C(a,b,m)=\dfrac{a^b}{\Gamma(b)-\Gamma(b,\frac{a}{m})}$ where
  $\Gamma(b,\frac{a}{m})$ is the upper Incomplete gamma function.} The
reason for this choice is that the Modified Inverse Gamma
distribution, as the original Inverse Gamma, features a power law
right-tail with exponent equal to $-(1+b)$ for large volatilities. The
specific choice of a Modified Inverse Gamma distribution is otherwise
purely phenomenological and we do not attribute too much meaning to
the small $\sigma$ behaviour, for which the collapse of the different
curves shown in Figure~\ref{fig:pred1} is far from perfect. It is
important however that the right tail of the Inverse Gamma is
consistent with the power law tail of function $G$ in
Eq.~\eqref{eq:cond_vol} resulting, in turn, from the tail of the
function $F$ in Eq.~\eqref{eq:Herf}.

A Maximum Likelihood estimation of the parameters yields
$4.788(0.139)$, $4.620(0.086)$ and $0.326(0.010)$ for the scale, shape
and location parameters respectively. The fit, reported in the
right-panel of Figure~\ref{fig:pred1} (solid black-line), describes
its empirical target rather well. We conclude that the Modified
Inverse Gamma distribution is a good first approximation to describe
the function $G(x)$ in Eq.~\eqref{eq:cond_vol}. Moreover, this ML
fitting procedure provides an indirect estimate of the parameter
$\mu$. Since the right-tail of the Modified Inverse Gamma is a power
law with exponent $-(1+b)$ and Prediction~\ref{pred:1} suggests a
power law decay for $G$ with exponent $-(1+\mu)$, we conclude that
this implies that $\mu$ should be approximately $4.6$, a high value at
odds with the assumption that $1<\mu<2$. We return on this
inconsistency below.

Hence, Prediction~\ref{pred:1} of a size-independent universal
distribution for the rescaled volatilities is well obeyed by the
data. Furthermore, the distribution is well described by a Modified
Inverse Gamma distribution which displays a power law right-tail with
an exponent of about $-(1+4.6)$. These two results are new to the
literature, and represent a first empirical contribution of the
present work.  However, there is no clear sign of a hump for large
realizations in the distribution of the rescaled growth volatilities
$P(\sigma/\bar{\sigma}(S))$ which would signal the existence of poorly
diversified firms with few sub-units.

We check the robustness of these results with respect to three
potential threats. First we replace the mean absolute deviation with
the standard deviation as a proxy for the growth rate volatility. The
reason for this check being to use an estimator closer to its
theoretical counterpart although more sensible to extreme
realizations. We find further corroboration to our results: the
distributions of growth volatilities proxied with the standard
deviation, once rescaled by their average by size bin, collapse again
on a unique distribution well approximated by an Inverse Gamma with a
shape parameter of about $4.7$ (cfr. Figure~\ref{fig-app:pred1_sd} in
Appendix~\ref{app:main}). Second, attrition in a panel of firms over
such a long period of time may affect the quality of our estimates of
$\sigma_i$, obtained as the mean absolute deviation, in particular
considering firms with a low number of growth rates. To alleviate this
concern we look at the growth volatility distribution for firms with
at least $20$ growth rates. Figure~\ref{fig-app:pred1_20obs}, reported
in Appendix~\ref{app:20obs}, show that the message with this
alternative sample does not change. In particular the estimated
exponent for the power law tail of the function $G(x)$ remains too
high, being $-(1+4.6)$. As a third, and final check, since Compustat
mixes firms reporting their balance sheet with respect to different
fiscal years, we focus on firms closing their accounts in
December. Figure~\ref{fig-app:pred1_fyr_end_Dec} in
Appendix~\ref{app:endDec} shows that also for this sample everything
remains the same including the point estimate of the power law
exponent found equal to $-(1+4.5)$.

\begin{figure}[t]
  \caption{Size and growth volatility}
  \label{fig:pred2}
  \begin{center}
    \input{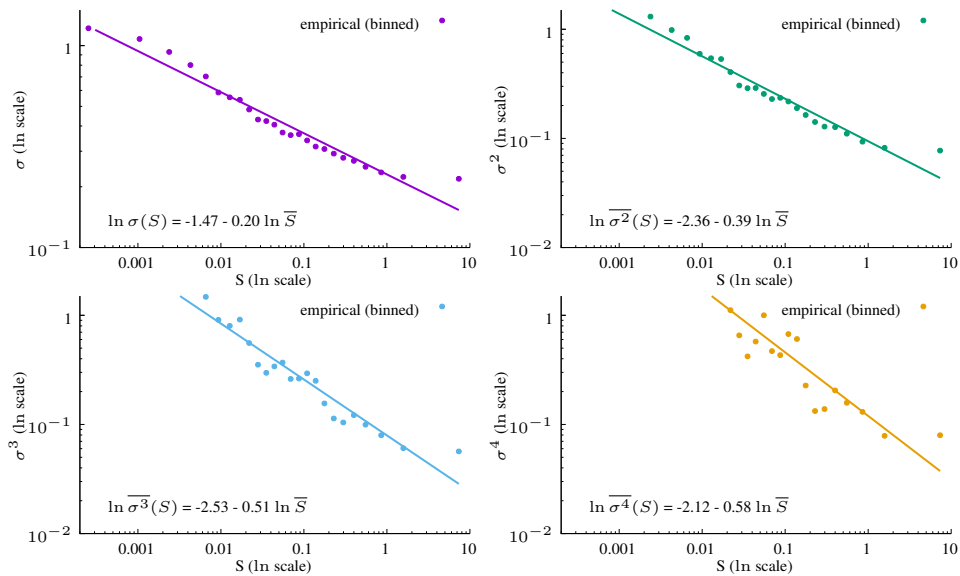}
  \end{center}
  {\scriptsize {\it Notes}. The four panels report on a double log
    scale the binned relation between normalized size and the first
    four sample moments of growth volatility together with an OLS
    linear fit. In all panel the number of bins is set to $25$ and
    volatility is computed as the mean absolute deviation multiplied
    by the factor $\sqrt{\pi/2}$. Data source: Compustat. \par}
\end{figure}
\medskip
 
We next study how the first four moments of the size-conditioned
volatility distribution scale with size. To guide this investigation,
we revert to Proposition~\ref{prop:vol_moments}, which leads to the
following prediction.
\begin{prediction}
  \label{pred:2}
  For large $S$ and when $1<\alpha<\mu+\frac{\mu-1}{\mu}$, the
  conditional mean growth volatility $\mathbb{E}[\sigma|S]$ decays with size as
  $S^{-\beta}$ with $\beta=\frac{\mu-1}{\mu}$. The 3 higher
  conditional moments of volatility, $\mathbb{E}[\sigma^q|S]$ with $q=2,3,4$,
  decay with size all with the same scaling, as $S^{\alpha-\mu}$.
\end{prediction}
To test this, we bin volatilities according to the same procedure
described above. Within each bin, we compute the average normalized
size $\bar{S}_{b}$ and the first 4 moments $\overline{\sigma^q}_b$
with $q=1,2,3,4$ an plot on a double log scale $\overline{\sigma^q}_b$
against $\bar{S}_{b}$. To help the visual inspection of the result, we
add the fit of the log-linear relation
$\log \overline{\sigma^q}_b = a_0 + a_1 \log \bar{S}_b$. The four panels of
Figure~\ref{fig:pred2} display the
results.

We begin with the top-left panel of Figure~\ref{fig:pred2}, reporting
the first moment of the growth volatility $\sigma$ conditioned on
size.\footnote{This is a replica of the right-panel of
  Figure~\ref{fig:motiv-evid}. We report again to clarify how we build
  it and to facilitate the comparison with the other 3 moments.} The
decay with size is approximately linear on a double log scale, meaning
that volatility decays as a power law. The only major deviation is
observed in the bin associated with the largest firms, which tend to
have a higher volatility than the one corresponding to the
scaling. These firms appear to be less effectively diversified than
what the model would lead us to expect.

The exponent is found to be about $-0.20(0.01)$, very much in line
with the existing literature. When the right tail of the distribution
of the number of sub-units $P(K)$ is heavy enough,\footnote{Heavy
  enough here means respecting the constraint
  $\alpha<\mu+\frac{1-\mu}{\mu}$.} this empirical result has an
important consequence. Indeed, in this case the estimated exponent has
been shown to be $\frac{1-\mu}{\mu}$ implying
$\mu=\frac{1}{1-\beta}=1.25$. This estimate is inconsistent with the
$4.5$ we have found when fitting the distribution of the rescaled
growth volatilities.\footnote{When the right tail of the distribution
  $P(K)$ is thin enough, $\alpha>\mu+\frac{1-\mu}{\mu}$, $E[\sigma|S]$
  would scale as $S^{\alpha-\mu}$ implying $\beta=\mu-\alpha=0.2$.}

This is confirmed by the study of higher order conditional
moments. Because $\alpha>1$, the theory predicts that all of them
should decay with size in the same way, as
$S^{\alpha-\mu}$. Unfortunately this is not what we observe in the
remaining three panels of Figure~\ref{pred:2}. While they all conform
to a power law decay, with possibly the same anomaly that the very
largest firms have higher values than expected, they have apparently
different scaling exponents, equal to $-0.39(0.02)$, $-0.51(0.03)$ and
$-0.58(0.04)$. This is in contradiction with Prediction~\ref{pred:2},
which states that their scaling should be {\it independent} of the
moment order $q$ and equal to $\alpha-\mu \approx -0.2$. Our evidence
suggests in contrast a clear dependence on the order $q$ of the
exponent. In fact, the scaling we obtain is close to the one we would
observe if the second term in Eq.~\eqref{eq:vol_moments}, proportional
to $S^{q\frac{1-\mu}{\mu}}$, was the dominant term, which is the
result expected for a {\it thin-tailed} distribution of rescaled
volatilities $\sigma/\overline{\sigma}(S)$, i.e. with a tail exponent
$b > \mu$, as indeed found above since $b \approx 4.6$. Such a
thin-tailed distribution of volatilities would asymptotically lead to
exponents $-0.4$, $-0.6$ and $-0.8$ for $q=2, \,3$ and $4$
respectively. We however expect subdominant corrections coming from
the incipient tail contribution to be responsible for the observed
discrepancies for $q \lesssim b$ ($-0.51$ instead of $-0.6$ for $q=3$
and $-0.58$ instead of $-0.8$ for $q=4$).\footnote{In fact, one
  predicts an asymptotic behaviour as $S^{- b \beta}$ independent of
  $q$ whenever $q > b$. } These results appear again robust when we
test them using the standard deviation to proxy growth volatilities
and different samples including only firms with at least 20 growth
rates and firms closing their fiscal year in December
(cfr. Appendix~\ref{app:f-inv}).

Summarizing these investigations on growth volatility, we may conclude
that although the granularity framework is partially consistent with
empirical evidence it also presents major drawbacks. We have indeed
shown that the rescaled volatility distribution is independent of size
and displays a power-law tail, being overall well approximated by a
Modified Inverse Gamma distribution with a thin right tail. However,
we do not see any sign of the non-scaling hump the model predicts for
large volatilities. Finally, the moments of size-conditioned growth
volatility distribution do decay with size as power laws, but with
exponents that are not wholly consistent with theoretical predictions.

\subsection{Growth rate distribution}
\begin{figure}[t]
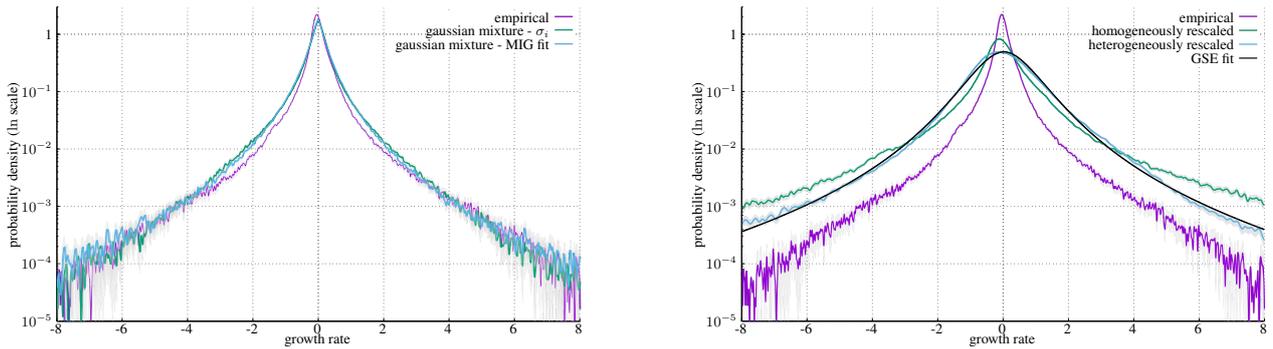

  \caption{Growth rate distribution}
  \label{fig:fgrd-shape}
  \begin{center}
    \begin{minipage}[t]{0.47\linewidth}
      \scalebox{0.6}{
      \input{./figures/fig_gr_norm_centered_mixtures.tex}
      }
    \end{minipage}
    \hfill
    \begin{minipage}[t]{0.47\linewidth}
      \scalebox{0.6}{
      \input{./figures/fig_gr_rescaled.tex}
      }
    \end{minipage}
  \end{center}
  {\scriptsize {\it Notes}: \underline{Left panel} reports the kernel
    estimate of the unconditional empirical density of the annual
    growth rate of the normalized quarter size after removing its mean
    and pooling growth rates across firms and year-quarter (purple
    line). We add the empirical densities estimated used synthetic
    growth rates obtained by mixing Gaussian random shocks with growth
    volatilities bootstrapped from those estimated as mean absolute
    deviation (green line) and sampling form the shifted Inverse Gamma
    distribution in Eq.~\eqref{eq:invgamma} (light-blue
    line). \underline{Right panel} compares the empirical benchmark
    (purple line) with the two empirical densities estimated on
    rescaled annual growth rates obtained by removing their mean and
    dividing by their volatility. First (green line) we use an
    homogeneous mean and an homogeneous volatility for the entire
    sample, second (light-blue line) we use individual specific means
    and volatilities computes as the leave-one-out mean absolute
    deviations. We add a GSE fit for the heterogeneously rescaled
    growth rates distribution computed on the kernel density estimate
    (black line). In all plots mean absolute deviations are
    adjusted by the factor $\sqrt{\pi/2}$. Kernel density estimates
    are computed with a Gaussian kernel function and a bandwidth
    chosen according to the normal reference rule-of-thumb. The
    estimate is evaluated on a 10,000 points regular grid defined over
    the entire empirical range. We also report the 95\% confidence
    interval. Data source: Compustat.\par}
\end{figure}

In this sub-section, our interest moves to the growth rates and their
distribution. Considering Proposition~\ref{prop:gmm_levy} is again not
trivial, since the use of the asymptotic representations in
Eq.~\eqref{eq:WB_GMM} and \eqref{eq:pred2_gr} requires to work in the
very large size region. In fact, it is only in this region that the Normal
approximation associated with the CLT is expected to be valid, and
where the Levy alpha-stable regime where $g\sim S^{-\beta}$ is well
separated from the non-universal regime where $g\sim 1$ and the growth
rate distribution depends on that of the sub-units growth
shocks. To make this even more challenging, our results indicate that
corrections to the CLT decay slowly, as $S^{1-\mu}$, and that the same
is true for the corrections associated with the existence of large firms with
only a handful of sub-units, which vanishes as
$S^{\alpha-\beta}$. Unfortunately, typical data-sets contain only a few
data points in the large size region, and so we expect finite
size corrections to play a significant role in shaping the empirical
growth rate distribution.

Despite this caveat, Proposition~\ref{prop:gmm_levy} remains effective
in offering guidance to look at growth rates and at their distribution
allowing us to make a sharp empirical prediction.
\begin{prediction}
  \label{pred:3}
  The unconditional growth rate distribution $P(g)$ can be
  rationalized as a mixture of Normal random variables, that is $P(g)$
  is approximately equal to
  $\int {\rm d}S {\rm d}\sigma \, \mathcal{N}(0,\sigma) \, P(\sigma \vert S)
  P(S)$. Net of finite size corrections, the growth rate distribution
  conditional on volatility and size, $P(g|\sigma,S)$, is Normal.
\end{prediction}

To test this prediction, we perform first a numerical exercise where
we simulate a growth rate distribution as a scale mixture of Normal
random variables. To this aim, we generate $1,000,000$ synthetic
growth rates as $g_i=\sigma_i \xi_i$ where $\xi_i$ is a Normal random
variable with zero mean and unit variance while $\sigma$ a random
variable with a density $P(\sigma)$.  The draws from the distribution
$P(\sigma)$ are done successively in two different ways. For the first
approach, we bootstrap $\sigma$ from the empirical observations. The
second approach assumes an Modified Inverse Gamma density for
$P(\sigma)$, with parameters fit using observed volatilities rescaled
by their average by size bin as in the right-panel of
Figure~\ref{fig:pred1}.

Our results are reported in the left-panel of
Figure~\ref{fig:fgrd-shape} (green and light-blue lines
respectively). Although some discrepancies are observed, the Gaussian
Mixture model generates a growth rate distribution in very good
agreement with empirical observation both in the central and tail
regions. The two procedures we have implemented produce qualitatively
equivalent results, and perform significantly better than the one
reported in Figure~\ref{fig:motiv-evid} (green line). The key
difference is that in Figure~\ref{fig:motiv-evid}, the volatilities
used to generate the artificial growth rates were simulated according
to $\sigma_i \sim S_i^{-\beta}$, thus neglecting the heterogeneity
associated with a firm's internal structure in terms of number and
size of sub-units. The resulting Gaussian mixture produces a
distribution that is not able to capture the central cusp and the fat
tails of its empirical target. With this respect a further remark is
in order. If, in the mixing of the Normal random variables, one uses
the MIG fitted on the non-rescaled volatilities the corresponding
growth rates display a distribution that, contrary to what happens
with rescaled volatilities, misses its empirical target. Indeed the
shape parameter $b$ of the MIG in this case is found to be
$2.35(0.03)$, significantly lower than $4.62(0.09)$. This implies a
fatter right tail for the volatility distribution inducing fatter
tails in the distribution of the growth rates generated as scale
mixture of Normals. We have clarified above why the estimate of $b$
with non-rescaled volatilities is likely to be biased due to size
heterogeneity. This is reassuring when it comes to the sensitivity and
consistency of the argument.

A second exercise, inspired by Prediction~\ref{pred:3}, concerns the
shape of the distribution of growth rates conditional on size and on
volatility, which is suggested to be approximately Normal. To test
this, we proceed as follows. For each firm in our sample, we compute
an average growth $\bar{g}_i$, and for each of its growth rates
$g_{it}$ we compute a growth volatility $\sigma_{it}$, computed as the
{\it leave-one-out} mean absolute deviation of the growth rates,
obtained from the time-series $g_{it'}$ where the growth rate $g_{it}$
for $t'=t$ has been discarded. We then define the corresponding
rescaled growth rates $\hat{g}_{it}$ as
$(g_{it}-\bar{g}_i)/\sigma_{it}$.  For consistency, we also apply the
{\it leave-one-out} procedure to the mean growth
$\bar{g}_i$.\footnote{The {\it leave-one-out} procedure avoids an
  artificial truncation of the tails, see
  e.g. \cite{bouchaud2003theory}.} The resulting distribution is
reported in the right-panel of Figure~\ref{fig:fgrd-shape} (light-blue
line). It is apparent that the rescaling procedure has a significant
effect on the shape of the growth rate distribution, smoothing the
central cusp and fattening both tails. These remain clearly
non-Gaussian and sub-exponential, since they display a positive
convexity in a linear-log representation. In the same figure, we
report for comparison the distribution we would obtain by rescaling
the growth rates with an homogeneous mean and an homogeneous
volatility for the entire sample (green line). This shows that the
associated reshaping, although still apparent, is significantly less
pronounced. It is equally apparent that the parabolic shape of a
Normal distribution would not provide a good fit for the entire
distribution.

\begin{table}[t]
  \centering 
  \caption{Symmetric Generalized Stretched Exponential fit} 
  \label{tab:gse_sym}
  \scalebox{0.75}{
    \begin{threeparttable}
      \input{tables/gaussianization_nbins25_sym_firm.tex}
      \begin{tablenotes}
        \footnotesize
      \item {\it Notes}: estimates are obtained with a non lienar
        least square algorithm applied to kernel density estimates
        evaluated on a regular grid in the interval $[-8,8]$. This
        procedure is applied to homogeneously rescaled growth rates,
        and heterogeneously rescaled growth rates in different size
        bins and computed over different time windows.  Data source:
        Compustat. \par
      \end{tablenotes}
    \end{threeparttable}
    }
\end{table}

We propose a phenomenological model to quantify the extent to which
this rescaling procedure move the growth rate distribution towards the
predicted Normal benchmark. We therefore consider a parametric
family of distributions, that we name Generalized Stretched
Exponential (GSE), with density
\begin{equation}
    \label{eq:gse-pdf}
    P_{GSE}(x;C,u,v,w,z)=C \exp\left( -\frac{(x-v)^2}{2u^2(1+(x/w)^{(2-z)})}\right)
\end{equation}
where $(u,v,w,z)$ are all positive parameters and $C$ is a
normalization constant. This family of distributions is built {\it
  ad-hoc} to feature a Gaussian behavior for small $x$ ($x < w$), and
a stretched ($z < 1$) or compressed ($z > 1$) exponential decay for
large $x$ ($x > w$).\footnote{More precisely when $x$ is large with
  respect to $w$ the tails of the density behave proportionally to
  $\exp\left[-\frac{x^z}{2u^2 w^z}\right]$.} The case $z=2$
corresponds to the standard Gaussian, whereas $z \to 0$ mimics
power-law tails.  To fit the GSE distribution we proceed as
follows. We first compute the kernel density estimate of the central
body of the empirical distribution of the heterogeneously rescaled
growth rates $\hat{g}_i$. Next, using a non linear least squares
algorithm, we fit the estimated density to recover the GSE parameters.

Figure~\ref{fig:fgrd-shape} reports, in the right-panel, this GSE fit
(black line) appearing to a large extent indistinguishable from the
empirical distribution, both in the central part and in the
tails. This confirms that the proposed family of distributions
performs very well in fitting the data. The estimated coefficients are
found equal to $C=0.483(0.004)$, $u=0.894(0.006)$, $v=-0.006(0.001)$,
$w=1.905(0.020)$ and $z=0.377(0.003)$. Such a fit allows us to provide
a quantitative assessment of the interval over which the distribution
can be considered approximately a Gaussian distribution. With an
estimated value of $w$ approximately equal to $1.905$ the empirical
distribution of $\hat{g}$ can be considered Gaussian in the range
$[-1.905,1.905]$. Further, we can compute by numerical integration of
the kernel density estimate, that more than $90\%$ of the probability
mass falls in this Gaussian regime. As a point of comparison, if one
were to consider in the same figure the homogeneously rescaled growth
rates (green line) $w$ estimate would reduce to $0.717$ and the
corresponding probability mass to $70\%$ corresponding to a smaller
Gaussian regime.

\begin{figure}[t]
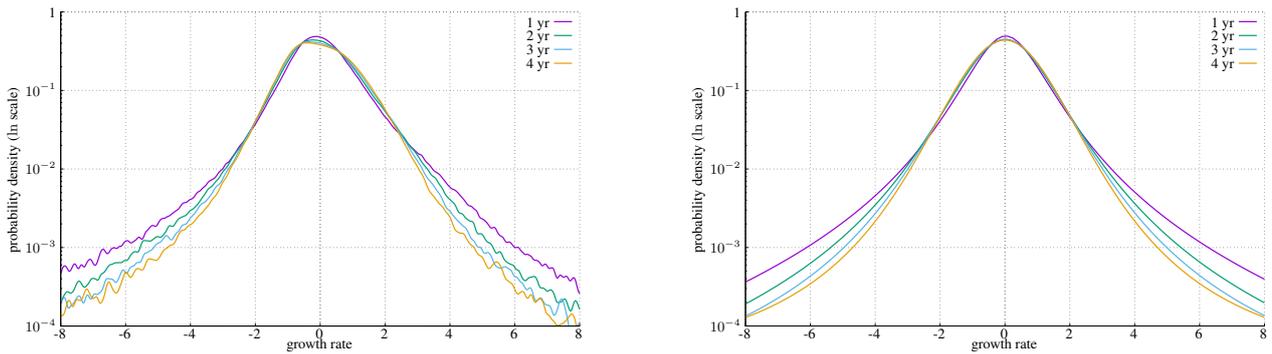

  \caption{Rescaled growth rate distribution (time windows)}
  \label{fig:fgrd-het-rescaled-dt}
  \begin{center}
    \begin{minipage}[t]{0.47\linewidth}
      \scalebox{0.6}{
        \input{./figures/fig_gr_rescaled_dt_kdens.tex}
      }
    \end{minipage}
    \hfill
    \begin{minipage}[t]{0.47\linewidth}
      \scalebox{0.6}{
        \input{./figures/fig_gr_rescaled_dt_gse.tex}
      }
    \end{minipage}
  \end{center}
  {\scriptsize {\it Notes}: \underline{Left panel} Kernel density
    estimates are computed with a Gaussian kernel function and a
    bandwidth chosen according to the normal reference
    rule-of-thumb. The estimate is evaluated on a 2,500 points regular
    grid defined over the interval $[-8,8]$. \underline{Right panel}
    Symmetric GSE fits computed with a non linear least square
    algorithm applied to density estimates. Tim-window over which
    growth rates are computed increases from 1 to 4 years moving from
    dark-violet, to green, blue and gold colors. Data source:
    Compustat.\par}
\end{figure}

However, the same GSE fit provides information on the non-Gaussian
nature of the two tails. Indeed the parameter $z$ is found to be
significantly smaller than one both for homogeneously and
heterogeneously rescaled growth rates identifying stretched
exponential tails. Hence tails fatter than those featured by a
Gaussian, a fact independently noted also in
\cite{larsen-hallock_etal_2022}.\footnote{In fact
  \cite{larsen-hallock_etal_2022} fit the tails with a power-law,
  which is, over a limited region, not very different from a stretched
  exponential with the observed low value of $z$.}

Taking stock of these findings, it looks fair to say that our data
provide a mixed and contrasting evidence in support to
Proposition~\ref{prop:gmm_levy}. On one side the Gaussian Mixture
model seems to work properly, at least in a first approximation, as an
explanation for the existence of a central cusp and of fat tails in
the empirical growth rate distribution. In line with the proposition
un-mixing the growth rates by rescaling them by the corresponding
individual mean and volatility, lead to an important extension of the
Gaussian regime in their empirical distribution. However the tails are
not absorbed within the Gaussian regime and remain fatter,
significantly departing from the strict Gaussian Mixture
model. Whether this is due to a flaw in the model rather than to
significant finite size corrections is what we attempt to investigate
with the last two exercises performed in the remaining of this
section.

As discussed in the theory section the quality of the Gaussian
approximation for the un-mixed growth rates is driven by two
factors. First, it is possible that we would observe large firms with
few sub-units, whose non Gaussian growth rates contaminate the Gaussian
mixture. Second, the very heterogeneous sizes of the sub-units which
make up a large firm slows down the convergence to the Normal limit
induced by the CLT. Driven by these considerations we perform two
investigations.

First, we exploit the fact that in a model where the growth rate
$g_i$, computed over a one year time window, is a random variable with
a GSE distribution multiplied by a firm specific volatility, then the
corresponding growth rates, computed over a larger time span, is
obtained as the sum of the random variables $g_i$. If these growth rates
are independent and identically distributed, the Central Limit Theorem
mechanism comes into play, which should lead to a progressive
``Gaussianization'' of the distribution of the corresponding un-mixed
growth rates $\hat{g}_i$. Figure~\ref{fig:fgrd-het-rescaled-dt}
shows that this is what happens empirically albeit with some
intriguing discrepancies. It reports in the left panel the kernel
density estimate of the distributions of the rescaled growth rate
computed over a 1, 2, 3 and 4 years time-span together with, in the
right panel, their GSE fit whose parameters estimates are shown in the
bottom panel of Table~\ref{tab:gse_sym}. In line with our reasoning,
we observe an apparent process of ``Gaussianization'' of the
distribution. The size of the Gaussian central region, measured by
$w$, almost double rising from $1.9$ to $3.5$ when we widen the time
window from 1 year to 4 years. Perhaps surprisingly, however, the
tails (described by the parameter $z$) appear to get {\it fatter} as
the time-span increases (see Table~\ref{tab:gse_sym}), although the
probability mass in that region significantly drops from $9\%$ at 1
year to less than $1\%$ at 4 years.  Correspondingly, when comparing
the empirical distribution for 2 years, 3 years and 4 years to a
numerical convolution of 2, 3 or 4 i.i.d.  random variables drawn from
the 1 year distribution, we find significant deviations: the tails of
empirical distributions are ``too fat'', signalling the existence of
persistence effects. In other words, the convolution results suggests
that growth rates are autocorrelated even over three and four years,
i.e. persistently good or bad periods or persistently volatile periods
build up tails that would not exist if the growth process was
independent across time.

\begin{figure}[t]
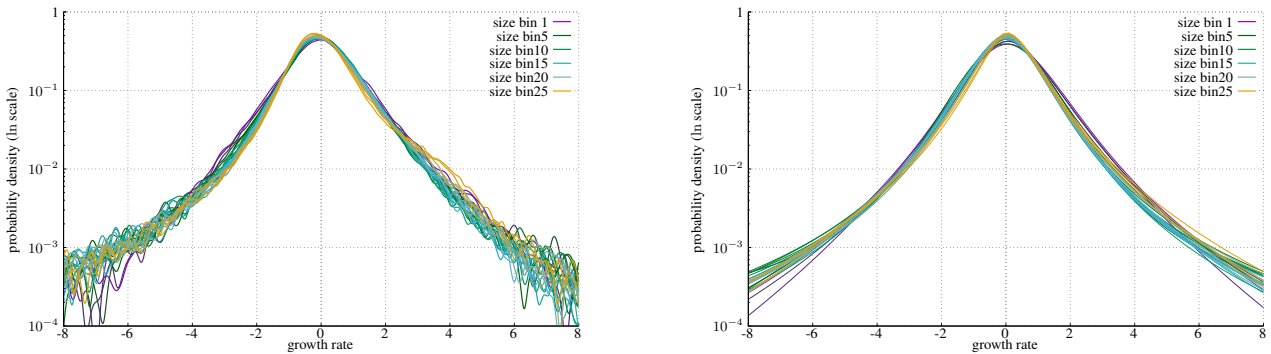

  \caption{Rescaled growth rate distribution (size bins)}
  \label{fig:fgrd-het-rescaled}
  \begin{center}
    \begin{minipage}[t]{0.47\linewidth}
      \scalebox{0.6}{
        \input{./figures/fig_gr_rescaled_bin25_kdens_firm.tex}
      }
    \end{minipage}
    \hfill
    \begin{minipage}[t]{0.47\linewidth}
      \scalebox{0.6}{
        \input{./figures/fig_gr_rescaled_bin25_gse_firm.tex}
      }
    \end{minipage}
  \end{center}
  {\scriptsize {\it Notes}: \underline{Left panel} Kernel density
    estimates are computed with a Gaussian kernel function and a
    bandwidth chosen according to the normal reference
    rule-of-thumb. The estimate is evaluated on a 2,500 points regular
    grid defined over the interval $[-8,8]$. \underline{Right panel}
    Symmetric GSE fits computed with a non linear least square
    algorithm applied to density estimates. Bin size increases moving
    from dark-violet, to green, blue and gold colors. Data source:
    Compustat.\par}
\end{figure}

In a second, final exercise we investigate the size dependency of the
rescaled growth rates distribution. To this aim we bin firms in 25
groups according to their average size $\bar{S}_i$ following the same
procedure described in the data section and used to investigate the
growth volatility distribution in
Section~\ref{sec:gr-volatility}. Figure~\ref{fig:fgrd-het-rescaled}
reports in the left panel the kernel density estimate of the
distributions for the 25 bins together with, in the right panel, their
GSE fit whose parameters estimates are shown in the middle panel of
Table~\ref{tab:gse_sym}. Visually all distributions fall approximately
on one another, suggesting that, once un-mixed using idiosyncratic
means and volatilities, the distribution of growth rate looks
universal and independent of size.

In other words we see {\it no sign} of ``Gaussianization'' for larger
sizes, contrarily to what we were expecting. Such visual inspection is
confirmed by the parameters estimates reported in
Table~\ref{tab:gse_sym} where we do not observe any clear trend in any
of the key parameters. If anything the parameter $w$ defining the
Gaussian region actually {\it decreases} with firm average size
$\bar{S}$, together with the probability mass in the interval $[-w,w]$
which drops from approximately $0.95\%$ for small firms to $0.84$ for
the largest firms.

This observation means that rescaled growth rates of small firms are
{\it more Gaussian} than those of large firms, which goes against the
intuition provided by Prediction~\ref{pred:3}. This result may be
related to the fact that the exponent $\beta$ is smaller for large
firms, indicating that -- somewhat paradoxically -- the activity of
large firms is more volatile than one might have anticipated from the
$S^{-\beta}$ scaling. Similarly $z$ does not show a clear trend with
size, remaining for all bins well below the threshold identifying
stretched exponential tails and showing a rather good deal of
homogeneity across bins and with respect to the results obtained
pooling together all firms.

\section{Conclusions}
\label{sec:conclusion}

This paper shows that granular models of firm growth, as those in
\cite{wyart2003statistical} and \cite{Gabaix_2011}, do not provide a
convincing explanation of the empirically robust scaling of the
growth volatility with firm size, $\sigma \sim S^{-\beta}$, with an exponent $\beta$ lower
than $0.5$. Nor do they account, as a consequence, for the shape of the distribution of the growth
rates of business firms.

The basic ``granular'' assumption rationalizes such anomalous scaling
by the concentration of the sales of firms across sub-units which
would be dominated by only few of them representing for example
blockbuster products or large  customers, as in \cite{herskovic2020firm}. Although enticing, such a picture leads to
falsifiable predictions, in particular concerning the distribution of
volatility conditional to size, for which we provided new theoretical
results. Quite remarkably, and as predicted by theory, this
distribution is indeed close to universal (i.e., independent of firm size) once rescaled by the average
size dependent volatility $S^{-\beta}$. However, the tail of this universal
distribution is much too ``thin'' to be compatible with the
granularity hypothesis. 

Similarly, we do {\it not} observe the
anomalous scaling of higher moments of the growth rate volatility with
firm size that should reflect the presence of a cohort of very poorly
diversified firms. The data unambiguously rejects this prediction. At
the same time, and again contrary to what is predicted by models,
after accounting for idiosyncratic volatilities growth rate shocks
remain non-Gaussian and fat tailed. Correspondingly, high moments of
growth rates but {\it not} of their volatility are dominated by such
tail events, a result also noted in
\cite{larsen-hallock_etal_2022}. 

These findings reveal more unanswered
questions. Where do such very large idiosyncratic shocks come from? Why
are large firms even {\it more} volatile than expected, and with more
weight in the non-Gaussian tails? While this paper shows that the answers to these questions cannot be searched for in granular mechanisms alone it is not aimed at proposing alternative explanations. One intriguing possibility is that shocks hitting a firm's sub-units become more and more correlated as its size increases, due either to supply chain effects affecting inputs or customers common to all sub-units,  or to reputation shocks that impact the whole firm. Although such an analysis goes beyond the scope of the present paper, we have explored this hypothesis finding encouraging evidence: both the correlation of a firm's growth rate with that of the whole economy and the correlation of the growth rate of firms of similar sizes clearly {\it grow} with their size. Very much along the lines of  \cite{herskovic2020firm}, we surmise that this feature is the key missing ingredient to fully account for the observed statistics of firm growth.

\section*{Acknowledgments} We want to thank E. Larsen-Hallock, A. Rej
and D. Thesmar for sharing their results with us and for
discussions. JM wishes to thank M. Benzaquen, D. Farmer, A. Fosset,
F. Lafond, L. Mungo, M. Tomas and M. Riccaboni for numerous
discussions on the topic of firm growth modeling. We also thank
participants at the 2023 Computations in Economics and Finance
conference (Nice) for helpful comments.

\newpage

\bibliography{biblio.bib}

\newpage
\appendix

\footnotesize

\noindent{\Large {\bf Appendixes}}

\section{Analytical results and proofs}
\label{app:proofs}

\section*{Single granularity hypothesis [Gabaix' model]}
\subsection{Preliminary results}

We begin by considering a single firm consisting of a deterministic
number of sub-units $K$ of sizes $\left(s_1,\ldots,s_K\right)$
distributed according to Assumption~\ref{hp:su_size}. With the aim of
studying the statistics of the firm's Herfindahl-Hirschman index
$\mathcal{H}$ of the sub-unit size we will first establish the scaling
of the typical (modal) value of $\mathcal{H}$ by estimating
the numerator and the denominator of $\mathcal{H}$, before computing
its moments explicitly and deriving its whole distribution. Without
loss of generality, we will take $s_0 = 1$. Indeed, under the current
hypotheses it is easy to see that
$\mathbb{E}[s]=\frac{\mu}{\mu-1}s_0$, and so taking $s_0=1$ amounts to
redefining the units in which sub-unit sizes are counted.  \medskip

The following lemmas establish three preliminary results.
\begin{lemma}
  \label{lem:typical_herfindahl}
  The typical value of the Herfindahl index scales with $K$ as
  \begin{equation}
    \label{eq:typical_herfindahl}
    \mathcal{H}_{\text{typ}} \sim K^{2\frac{1-\mu}{\mu}},\quad K\gg 1.
  \end{equation}
\end{lemma}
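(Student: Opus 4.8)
The plan is to exploit the fact that, for $1<\mu<2$, the numerator and denominator of $\mathcal{H}=\big(\sum_{j=1}^K s_j^2\big)/\big(\sum_{j=1}^K s_j\big)^2$ are governed by completely different limit laws: the denominator concentrates by the law of large numbers, whereas the numerator is controlled by its largest few terms. Setting $s_0=1$ as in the preamble, I would first dispatch the denominator. Since $\mu>1$ the mean $\bar s=\mathbb{E}[s]=\mu/(\mu-1)$ is finite, so by the law of large numbers $\sum_{j=1}^K s_j = \bar s\,K\,(1+o(1))$, whence $\big(\sum_{j=1}^K s_j\big)^2 \sim \bar s^2 K^2$. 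The fluctuations around $\bar s K$ are only of order $K^{1/\mu}$, as recalled in the footnote to Assumption~\ref{hp:su_n_gabaix}, and are therefore negligible relative to the leading $K$.

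The heart of the argument is the numerator. The variable $u_j:=s_j^2$ inherits a Pareto tail $P(u_j>u)\sim u^{-\mu/2}$ with index $\mu/2\in(1/2,1)$; because this index is strictly below $1$, one has $\mathbb{E}[u_j]=+\infty$ and the sum $U_K:=\sum_{j=1}^K u_j$ does \emph{not} concentrate. I would invoke the generalized central limit theorem (Gnedenko--Kolmogorov) for variables in the domain of attraction of a one-sided stable law: $U_K/K^{2/\mu}$ converges in distribution to a positive stable random variable of index $\mu/2$. The upshot is that the typical (modal) scale of the numerator is $U_K\sim K^{2/\mu}$, carrying an $O(1)$ stable prefactor that does not affect the power of $K$. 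Equivalently, and more transparently, one may argue by extreme-value statistics: the largest sub-unit satisfies $s_{(1)}\sim K^{1/\mu}$, so $\max_j s_j^2\sim K^{2/\mu}$, and for a tail index below $1$ the sum is of the same order as its maximum.

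Combining the two estimates yields the typical value
\begin{equation*}
  \mathcal{H}_{\text{typ}} \sim \frac{K^{2/\mu}}{\bar s^2 K^2} \sim K^{2/\mu-2} = K^{2\frac{1-\mu}{\mu}},
\end{equation*}
as claimed. I expect the main obstacle to be making the word ``typical'' precise: unlike the denominator, the numerator retains order-one relative fluctuations in the limit, so ``typical'' must be read as the scale of the median/mode rather than the mean. This is exactly the point at which $\mathcal{H}_{\text{typ}}$ and $\mathbb{E}[\mathcal{H}]$ part ways, the latter being dominated by rare configurations in which a single sub-unit carries almost all the mass (so that $\mathcal{H}\approx 1$); keeping the two notions apart is the delicate step, and it is what later necessitates the separate moment computation. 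A secondary technical point is to justify treating numerator and denominator as effectively decoupled at leading order: the numerator is set by a handful of atypically large terms while the denominator is fixed by the typical bulk, so their statistical dependence enters only through subleading corrections.
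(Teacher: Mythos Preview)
Your proposal is correct and follows essentially the same route as the paper: the denominator is handled by the law of large numbers ($\sum_j s_j \approx K\,\bar s$), and the numerator by the extreme-value estimate $s_{\max}\sim K^{1/\mu}$ so that $\sum_j s_j^2\sim K^{2/\mu}$. Your additional invocation of the generalized central limit theorem for $u_j=s_j^2$ (tail index $\mu/2<1$) and your remarks on ``typical'' versus mean are a welcome enrichment, but the paper's own proof is content with the bare $s_{\max}$ argument and does not spell these points out.
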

\begin{proof}
  Starting from the definition
  $\mathcal{H}=\sum_j (s_{ji}/\sum_h s_{hi})^2$, note that by the law
  of large numbers since $\mathbb{E}[s]<\infty$ we have that
  $\sum_j s_j \approx K \mathbb{E}[s]$. On the other hand, the
  expectation of the numerator $\sum_i s_i^2$ is divergent, since
  $\mathbb{E}[s^2]=\infty$.

  However, the largest element $s_{\max}$ of
  $\left(s_i,\ldots, s_k\right)$ scales as
  $s_{\max}\sim K^{\frac{1}{\mu}}$, since it solves

  \begin{equation}\label{eq:typical_max}
    \int_{s_{\max}}^\infty \mathrm{d} s~P(s) \approx \frac{1}{K}, 
\end{equation}
leading to $s_{\max}^{-\mu}\approx K^{-1}$. The sum $\sum_i s_i^2$ can
be approximated by $s_{\max}^2\approx K^{\frac{2}{\mu}}$, which
diverges superlinearly as $K\to\infty$. Pooling these results
together, we obtain
\begin{equation}
  \mathcal{H}_{\text{typ}} \sim \frac{K^{\frac{2}{\mu}}}{K^2\mathbb{E}[s]^2}\sim K^{2\frac{1-\mu}{\mu}}.
\end{equation}
\end{proof}

\begin{lemma}
  \label{lem:power_law_laplace}
  The moment generating function of sub-unit sizes
  $g(\lambda)= \mathbb{E}[e^{-\lambda s}]$ has an expansion
  \begin{equation}\label{eq:Laplace_pl_expansion}
    g(\lambda) \approx 1-\lambda \mathbb{E}[s]-\lambda^{\mu}\mu\Gamma[-\mu]+\mathcal{O}\left(\lambda^{\mu}\right),
  \end{equation}
  which corresponds to the Laplace transform of a probability density
  decaying as $s^{-1-\mu}$.
\end{lemma}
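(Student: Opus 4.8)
The plan is to compute the Laplace transform of the Pareto density directly and to isolate its non-analytic behaviour near $\lambda=0$. Using the reduction $s_0=1$ established above, the density is $P(s)=\mu s^{-1-\mu}$ on $[1,\infty)$, and since $\mu>1$ the mean $\mathbb{E}[s]=\mu/(\mu-1)$ is finite, so $g$ is differentiable at the origin with $g'(0^+)=-\mathbb{E}[s]$, fixing the linear term. The whole point is that the missing second moment ($\mathbb{E}[s^2]=\infty$) must resurface as a \emph{non-analytic} $\lambda^\mu$ term with $1<\mu<2$, replacing the usual $\tfrac12\mathbb{E}[s^2]\lambda^2$; pinning down its coefficient is exactly the Abelian direction of a Tauberian correspondence relating the $s^{-1-\mu}$ tail of $P$ to the $\lambda^\mu$ singularity of $g$.

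First I would write $1-g(\lambda)=\int_1^\infty \mu s^{-1-\mu}(1-e^{-\lambda s})\,\mathrm{d}s$ and rescale with $u=\lambda s$, obtaining $1-g(\lambda)=\mu\lambda^\mu J(\lambda)$ with $J(\lambda)=\int_\lambda^\infty u^{-1-\mu}(1-e^{-u})\,\mathrm{d}u$. Because $1<\mu<2$, the integrand of $J$ behaves like $u^{-\mu}$ near $u=0$, so $J$ diverges as $\lambda\to0$; the key manipulation is the splitting $1-e^{-u}=u-(u-1+e^{-u})$. The first piece gives the explicit divergent term $\int_\lambda^\infty u^{-\mu}\,\mathrm{d}u=\lambda^{1-\mu}/(\mu-1)$, while the second integrand $u^{-1-\mu}(u-1+e^{-u})$ is $O(u^{1-\mu})$ at the origin (integrable since $\mu<2$) and $O(u^{-\mu})$ at infinity (integrable since $\mu>1$), so it converges as $\lambda\to0$ to the finite constant $I=\int_0^\infty u^{-1-\mu}(u-1+e^{-u})\,\mathrm{d}u$.

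Next I would evaluate $I$. Integrating by parts against $\mathrm{d}(-u^{-\mu}/\mu)=u^{-1-\mu}\,\mathrm{d}u$ kills both boundary terms (they vanish as $u^{2-\mu}$ at $0$ and $u^{1-\mu}$ at $\infty$) and leaves $I=\tfrac1\mu\int_0^\infty u^{-\mu}(1-e^{-u})\,\mathrm{d}u$. This last integral is the standard analytic continuation of the Gamma function, $\int_0^\infty u^{\beta-1}(1-e^{-u})\,\mathrm{d}u=-\Gamma(\beta)$ for $-1<\beta<0$, here with $\beta=1-\mu$, so that $I=-\Gamma(1-\mu)/\mu=\Gamma(-\mu)$ using $\Gamma(1-\mu)=-\mu\Gamma(-\mu)$. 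Assembling everything,
\begin{equation*}
1-g(\lambda)=\mu\lambda^\mu\Bigl[\tfrac{\lambda^{1-\mu}}{\mu-1}-\Gamma(-\mu)+o(1)\Bigr]=\mathbb{E}[s]\,\lambda-\mu\Gamma(-\mu)\,\lambda^\mu+o(\lambda^\mu),
\end{equation*}
which is the claimed expansion. A short bookkeeping check (from $u-1+e^{-u}\sim u^2/2$ near the origin, and from the cutoff at $s=1$) confirms that the neglected remainder is genuinely $O(\lambda^2)$, hence subdominant to $\lambda^\mu$; so the trailing error should be read as higher order than $\lambda^\mu$ rather than of the same order.

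The hard part is not any single integral but the clean separation of scales: the rescaled integral $J(\lambda)$ is simultaneously singular at its lower endpoint and the carrier of the finite constant we want, so one must peel off the analytic contribution (which merely reproduces $\mathbb{E}[s]\lambda$) before the $\lambda^\mu$ coefficient becomes visible. A secondary point worth flagging is the \emph{sign}: the coefficient of $\lambda^\mu$ comes out as $\mu\Gamma(-\mu)=-\Gamma(1-\mu)$, which is strictly \emph{positive} for $1<\mu<2$, as it must be since $g$ is the Laplace transform of a positive measure and is therefore convex, with $g''(\lambda)\sim\mu(\mu-1)\,\mu\Gamma(-\mu)\,\lambda^{\mu-2}\to+\infty$ as $\lambda\to0^+$. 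Finally, the identification of this $\lambda^\mu$ singularity with ``the Laplace transform of a density decaying as $s^{-1-\mu}$'' is precisely the statement that the direct computation above makes quantitative.
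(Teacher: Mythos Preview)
Your proof is correct and rests on the same rescaling $u=\lambda s$ as the paper; the paper then identifies the resulting integral as $\mu\lambda^{\mu}\Gamma(-\mu,\lambda)$ and expands via the series for the lower incomplete gamma function, whereas you arrive at the same coefficient by the splitting $1-e^{-u}=u-(u-1+e^{-u})$ followed by two integrations by parts, which is a touch more self-contained but otherwise equivalent. Your sign discussion is on point: the paper's own derivation also ends at $-\Gamma(1-\mu)\lambda^{\mu}=+\mu\Gamma(-\mu)\lambda^{\mu}$, so the minus sign in front of $\lambda^{\mu}\mu\Gamma[-\mu]$ in the lemma statement is a typo, and your convexity check confirms that the positive coefficient is the correct one.
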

\begin{proof}
  This extends results obtained
  by~\cite{derridaRandomWalksSpin1997}. By direct computation,
  \begin{align}
    \label{eq-app:pl_laplace1}
    g(\lambda)=\int_{s_0}^\infty\dint s ~P(s)\exp(-\lambda s) &=  \int_{s_0}^\infty\dint s \dfrac{\mu}{s^{1+\mu}} \exp(-\lambda s) \nonumber\\
    &\text{[setting $s_0=1$ and changing variable $\lambda s=t$]}\nonumber\\
    &=\mu \lambda^\mu\int_{\lambda}^\infty\dint t~ t^{-\mu-1} \exp(-t)\nonumber\\
    &=\mu \lambda^\mu \Gamma(-\mu,\lambda)=\mu \lambda^\mu \left[\Gamma\left[-\mu\right] - \gamma(-\mu,\lambda) \right]\nonumber\\
    &\text{[$\Gamma(\cdot,\cdot)$ and $\gamma(\cdot,\cdot)$   are the upper and lower incomplete gamma functions]}\nonumber\\
    &=-\lambda^\mu \Gamma\left[-\mu+1\right] - \mu\lambda^\mu\left[\lambda^{-\mu} \sum_{k=0}^{\infty} \frac{(-\lambda)^{k}}{k!(-\mu+k)} \right] \nonumber\\
    &=-\lambda^\mu \Gamma\left[-\mu+1\right]+1-\lambda\frac{\mu}{\mu-1} + \mathcal{O}(\lambda^{\mu}) \nonumber\\
    &=1-\lambda^\mu \Gamma\left[-\mu+1\right]-\lambda \mathbb{E}[s] + \mathcal{O}(\lambda^{\mu})  \;\;.
  \end{align}
\end{proof}

A useful corollary follows.
\begin{corollary}
  \label{cor:laplace_moments}
  The expansion for $\mathbb{E}[s^k e^{-\lambda s}]$ reads
  \begin{equation}
    \mathbb{E}[s^k e^{-\lambda s}] \approx \frac{\mu}{\mu-k}+\mu \Gamma[k-\mu]\lambda^{\mu-k}+\frac{\mu\lambda}{1-\mu+k}+\mathcal{O}\left(\lambda^{\min(1,\mu-k)}\right).
  \end{equation}
\end{corollary}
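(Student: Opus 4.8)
The plan is to obtain Corollary~\ref{cor:laplace_moments} by simply rerunning the computation in the proof of Lemma~\ref{lem:power_law_laplace}, now carrying the extra factor $s^k$ through the integral. The corollary is essentially the ``$k$-th moment'' analogue of the Laplace-transform expansion already established, so no new idea is required; the entire effort is in bookkeeping the orders in $\lambda$. First I would write, taking $s_0=1$ as in the preamble to this section,
\[
  \mathbb{E}[s^k e^{-\lambda s}] = \mu\int_1^\infty s^{k-1-\mu}e^{-\lambda s}\,\mathrm{d}s,
\]
and perform the same change of variable $t=\lambda s$ used in Eq.~\eqref{eq-app:pl_laplace1}. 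This turns the integral into $\mu\,\lambda^{\mu-k}\,\Gamma(k-\mu,\lambda)$, where $\Gamma(\cdot,\cdot)$ denotes the upper incomplete gamma function.

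Next I would split $\Gamma(k-\mu,\lambda)=\Gamma(k-\mu)-\gamma(k-\mu,\lambda)$ and insert the convergent power series of the lower incomplete gamma, $\gamma(k-\mu,\lambda)=\lambda^{k-\mu}\sum_{n\geq 0}(-\lambda)^n/[n!\,(n+k-\mu)]$, exactly as in the lemma. This yields the closed form
\[
  \mathbb{E}[s^k e^{-\lambda s}] = \mu\,\Gamma(k-\mu)\,\lambda^{\mu-k} - \mu\sum_{n\geq 0}\frac{(-\lambda)^n}{n!\,(n+k-\mu)}.
\]
Reading off the $n=0$ and $n=1$ terms of the sum gives the constant $\mu/(\mu-k)$ and the linear term $\mu\lambda/(1-\mu+k)$, which together with the $\lambda^{\mu-k}$ contribution reproduce the three advertised terms; the tail $n\geq 2$ is the remainder, manifestly of order $\lambda^2$ and hence, a fortiori, $\mathcal{O}\!\left(\lambda^{\min(1,\mu-k)}\right)$ as stated.

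Two points deserve care, though neither is a genuine obstacle. The manipulation is legitimate precisely because $k-\mu\notin\mathbb{Z}$ for $k\in\mathbb{N}$ and $1<\mu<2$, so $\Gamma(k-\mu)$ and every denominator $n+k-\mu$ are finite and nonzero. The only mildly delicate step is identifying which retained term dominates as $\lambda\downarrow 0$: for $k\geq 2$ one has $\mu-k<0$, so the singular piece $\lambda^{\mu-k}$ blows up and leads, whereas for $k=1$ the constant leads and $\lambda^{\mu-1}$ (with $\mu-1<1$) precedes the linear term. As a cross-check I would note the identity $\mathbb{E}[s^k e^{-\lambda s}]=(-1)^k g^{(k)}(\lambda)$, which recovers the $\lambda^{\mu-k}$ term directly from Lemma~\ref{lem:power_law_laplace}; the constant and linear pieces, however, require the full series rather than the truncated expansion displayed there, which is why I prefer the direct integral route. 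I expect no substantive difficulty beyond this careful accounting of subleading orders.
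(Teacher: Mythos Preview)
Your proposal is correct and essentially coincides with the paper's proof. The paper proceeds by noting $\mathbb{E}[s^k e^{-\lambda s}]=(-1)^k g^{(k)}(\lambda)$ and differentiating the full series representation of $g(\lambda)$ obtained in Lemma~\ref{lem:power_law_laplace}, which is exactly the computation you carry out directly on the integral; your remark that the differentiation must act on the full series (not the truncated expansion) in order to recover the constant and linear pieces is well taken and in fact clarifies a point the paper leaves implicit.
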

\begin{proof}
  Note that
  \begin{equation}
    \begin{split}
      \frac{\dint ^k g(\lambda)}{\dint \lambda^k }  &= \mathbb{E}\left[\frac{\dint^k}{\dint \lambda^k}[e^{-\lambda s}]\right]\\
      &= \mathbb{E}\left[(-s)^k e^{-\lambda s}\right], 
    \end{split}
  \end{equation}
  so that $\mathbb{E}[s^k e^{-\lambda s}] = (-1)^k
  g^{(k)}(\lambda)$. Differentiating the expression in the second to
  last line of Eq.~\eqref{eq-app:pl_laplace1} yields the result.
\end{proof}

\begin{lemma}
  \label{lem:integer_herfindahl_moments}
  
  The integer moments of the Herfindahl read
  \begin{equation}
    \label{eq:herfindahl_moments}
    \mathbb{E}\left[\mathcal{H}^k\right] \approx K^{1-\mu}\mathbb{E}[s]^{-\mu}\mu \frac{\Gamma[\mu]\Gamma[2k-\mu]}{\Gamma[2k]}+\mathcal{O}(K^{1-\mu})
  \end{equation}
  for $k\geq 1$. In particular, the average Herfindahl scales as
  $\mathbb{E}[\mathcal{H}]\sim N^{1-\mu}$.
\end{lemma}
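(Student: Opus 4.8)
The plan is to compute $\mathbb{E}[\mathcal{H}^k]$ by exploiting the standard integral representation of an inverse power applied to the denominator of $\mathcal{H}$. Writing $S=\sum_j s_j$, the definition $\mathcal{H}=\sum_j s_j^2/S^2$ gives $\mathcal{H}^k=(\sum_j s_j^2)^k S^{-2k}$, and since $S>0$ one may use $S^{-2k}=\frac{1}{\Gamma(2k)}\int_0^\infty \lambda^{2k-1}e^{-\lambda S}\,\mathrm{d}\lambda$. Taking expectations and exchanging the order of integration (justified by positivity, via Tonelli), I would obtain
\[
\mathbb{E}[\mathcal{H}^k]=\frac{1}{\Gamma(2k)}\int_0^\infty \lambda^{2k-1}\,\mathbb{E}\!\left[\Big(\sum_j s_j^2\Big)^{k}e^{-\lambda S}\right]\mathrm{d}\lambda.
\]
The point of this representation is that $e^{-\lambda S}=\prod_j e^{-\lambda s_j}$ factorizes over the independent sub-units, turning the expectation into a product of one-site Laplace-type moments, exactly the setting in which the method of \cite{derridaRandomWalksSpin1997} applies.

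Next I would expand $(\sum_j s_j^2)^k$ by the multinomial theorem and group the terms according to the number $m$ of \emph{distinct} sub-unit indices carrying a factor of $s^2$. A configuration with $m$ distinct indices of multiplicities $p_1,\dots,p_m$ (with $\sum_\ell p_\ell=k$) contributes $\sim K^m$ tuples, each weighted by $g(\lambda)^{K-m}\prod_{\ell}\mathbb{E}[s^{2p_\ell}e^{-\lambda s}]$, where $g(\lambda)=\mathbb{E}[e^{-\lambda s}]$. The claim is that the dominant contribution is the single term $m=1$, $p_1=k$, in which one sub-unit absorbs all $k$ powers of $s^2$: there are exactly $K$ such terms, giving $K\,\mathbb{E}[s^{2k}e^{-\lambda s}]\,g(\lambda)^{K-1}$. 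This is the ``one big jump'' configuration — a single dominant sub-unit with $\mathcal{H}\approx 1$ — which is expected to drive integer moments since $\mathbb{E}[s^2]=\infty$.

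For this leading term I would insert the small-$\lambda$ asymptotics established earlier. By Corollary~\ref{cor:laplace_moments} (with $2k$ in place of $k$, using $2k>\mu$), $\mathbb{E}[s^{2k}e^{-\lambda s}]\sim \mu\,\Gamma[2k-\mu]\,\lambda^{\mu-2k}$, while by Lemma~\ref{lem:power_law_laplace}, $g(\lambda)^{K-1}\approx \exp(-K\mathbb{E}[s]\lambda)$. The remaining integral is then of pure $\Gamma$-type, since $\lambda^{2k-1}\lambda^{\mu-2k}=\lambda^{\mu-1}$ and $\int_0^\infty \lambda^{\mu-1}e^{-K\mathbb{E}[s]\lambda}\,\mathrm{d}\lambda=\Gamma(\mu)(K\mathbb{E}[s])^{-\mu}$. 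Collecting the prefactors reproduces $\mathbb{E}[\mathcal{H}^k]\approx \mu\,\frac{\Gamma[\mu]\Gamma[2k-\mu]}{\Gamma[2k]}\,\mathbb{E}[s]^{-\mu}\,K^{1-\mu}$, as claimed. I would then verify self-consistency: the substituted integrand is concentrated at $\lambda=\mathcal{O}(1/K)$, precisely the regime where both expansions are valid, so the contributions from $\lambda=\mathcal{O}(1)$ are exponentially suppressed.

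The main obstacle — the part needing genuine care rather than routine bookkeeping — is controlling the subleading contributions. I would estimate the $m\ge 2$ configurations by the same route: each one-site factor contributes $\lambda^{\mu-2p_\ell}$, so their product scales as $\lambda^{m\mu-2k}$ and, after the $\lambda$-integral against $\lambda^{2k-1}e^{-K\mathbb{E}[s]\lambda}$, yields $K^{m}K^{-m\mu}=K^{m(1-\mu)}$; since $1-\mu<0$ this decreases in $m$, so every $m\ge 2$ term is strictly subdominant relative to $m=1$. I would similarly check that the neglected constant term in Corollary~\ref{cor:laplace_moments} and the $\lambda^{\mu}$ correction to $\log g(\lambda)$ generate only lower-order corrections. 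The delicate step is making the replacement $g(\lambda)^{K-1}\to e^{-K\mathbb{E}[s]\lambda}$ uniform over the integration range — in particular showing the $-K\,\Gamma[1-\mu]\lambda^{\mu}$ piece of $\log g$ is negligible when $\lambda\sim 1/K$, since $K\lambda^{\mu}\sim K^{1-\mu}\to 0$ — and confirming that the growing multinomial combinatorics do not accumulate into something comparable to the leading term. This is exactly the place where the one-big-jump estimates of \cite{derridaRandomWalksSpin1997} must be adapted with care.
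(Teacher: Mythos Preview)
Your proposal is correct and follows essentially the same route as the paper: the same integral representation $S^{-2k}=\frac{1}{\Gamma(2k)}\int_0^\infty\lambda^{2k-1}e^{-\lambda S}\,\mathrm{d}\lambda$, the same multinomial decomposition of $(\sum_j s_j^2)^k$ with the $m=1$ (all indices coincide) term dominating, and the same small-$\lambda$ asymptotics from Lemma~\ref{lem:power_law_laplace} and Corollary~\ref{cor:laplace_moments} to reduce the remaining integral to a Gamma function. Your treatment of the subleading terms is in fact slightly more systematic than the paper's, which only checks $I_2\sim K^{1-2\mu}$ explicitly before concluding.
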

\begin{proof}
We use the following representation:
  
  \begin{equation}
    \label{eq:1x_identity}
    \frac{1}{x^k} = \frac{1}{\Gamma\left[k\right]}\int_{0}^{^\infty}\dint \lambda~\lambda^{k-1} e^{-\lambda x}.
  \end{equation}
  Indeed, changing variables first as $t=\lambda x$ gives
  $\int_{0}^{^\infty}\dint \lambda~\lambda^{k-1} e^{-\lambda x} =
  x^{-k} \int_0^\infty \dint t~
  t^{k-1}e^{-t}=x^{-k}\Gamma\left[k\right]$.
  
  Now write
  $\mathcal{H}^k = \left(\sum_{i_1}\ldots \sum_{i_k} s_{i_1}^2\ldots
    s_{i_k}^2\right)/\left(\sum_i s_i\right)^{2k}$, so that using the
  identity in Eq.~\eqref{eq:1x_identity} we have that
  \begin{equation}
    \label{eq:herfindahl_rewriting_1}
    \begin{split}
      \mathbb{E}\left[\mathcal{H}^k \right] &= \mathbb{E}\left[\frac{1}{\Gamma[2k]}\int_{0}^{^\infty}\dint \lambda~\lambda^{2k-1} e^{-\lambda \sum_i s_i}\sum_{i_1}\ldots \sum_{i_k} s_{i_1}^2\ldots s_{i_k}^2\right].
    \end{split}
  \end{equation}
  We will now work out the case $k=1$ before generalising for
  $k\geq 1$. Eq.~\eqref{eq:herfindahl_rewriting_1} becomes
  \begin{equation}
    \label{eq:herfindahl_rew_2}
    \begin{split}
      \mathbb{E}\left[\mathcal{H}\right] &= \mathbb{E}\left[\int_{0}^{^\infty}\dint \lambda~\lambda e^{-\lambda \sum_i s_i}\sum_{j} s_{j}^2\right]\\
      &= \int_{0}^{^\infty}\dint \lambda~\lambda \mathbb{E}\left[e^{-\lambda s_1 -\lambda \sum_{i>1}s_i}\left(s_1^2 + \sum_{j>1}s_j^2\right)\right]\\
      &=  \int_{0}^{^\infty}\dint \lambda~\lambda \mathbb{E}\left[s_1^2e^{-\lambda s_1}\right] \mathbb{E}\left[e^{-\lambda \sum_{i>1}s_i}\right]+\int_0^\infty \dint \lambda~\lambda \mathbb{E}\left[e^{-\lambda \sum_{i>1}s_i}\sum_{j>1}s_j^2\right]\\
      &= K \int_{0}^\infty \dint \lambda~\lambda \mathbb{E}\left[s^2 e^{-\lambda s}\right]\mathbb{E}\left[e^{-\lambda s}\right]^{K-1}.
    \end{split}
  \end{equation}
  We now use the results of Lemma~\ref{lem:power_law_laplace} and
  Corollary~\ref{cor:laplace_moments}, so that to leading order
  \begin{equation}
    \mathbb{E}\left[e^{-\lambda s}\right]^{N-1} \approx \left(1-\lambda \mathbb{E}[s]\right)^{K-1}\approx e^{-\lambda \mathbb{E}[s]K},   
  \end{equation} 
  and
  \begin{equation}
    \mathbb{E}\left[s^2 e^{-\lambda s}\right] \approx \mu \Gamma\left[2-\mu\right]\lambda^{\mu-2}.
  \end{equation}
  Plugging into the integral, we obtain
  \begin{equation}
    \label{eq:H_avg_final}
    \begin{split}
      \mathbb{E}\left[\mathcal{H}\right] \approx& \mu \Gamma\left[2-\mu\right] K \int_{0}^\infty \dint \lambda~\lambda \exp\left(-\lambda K \mathbb{E}[s]\right)\lambda^{\mu-2}\\
      \approx& \mu \Gamma\left[2-\mu\right]K \int_{0}^{\infty} \dint \lambda~ \lambda^{\mu-1}\exp\left(-\lambda K \mathbb{E}[s]\right)\\
      &\text{[changing variables $t=\lambda K \mathbb{E}[s]$]}\\
      \approx& \mu \Gamma\left[2-\mu\right]K^{1-\mu}\mathbb{E}[s]^{-\mu}\int_0^\infty \dint t~t^{\mu-1}e^{-\lambda t}\\
      \approx & K^{1-\mu}\mu \Gamma\left[2-\mu\right]\Gamma[\mu] \mathbb{E}[s]^{-\mu},
    \end{split}
  \end{equation}
  so that $\mathbb{E}[\mathcal{H}]\sim K^{1-\mu}$. In the general
  case, we may proceed in similar fashion,
  \begin{equation}
    \label{eq:gen_k_case}
    \begin{split}
      \mathbb{E}\left[e^{-\lambda \sum_i s_i}\sum_{i_1}\ldots \sum_{i_k} s_{i_1}^2\ldots s_{i_k}^2\right] =& \mathbb{E}\left[\left(e^{-\lambda s_1 - \lambda \sum_{i>1}s_i}\right)\left(s_1^{2k}+k s_1^{2k-2}\sum_{j>1}s_j^2 +\ldots\right)\right]\\
      =& K\mathbb{E}\left[s^{2k}e^{-\lambda s}\right]\mathbb{E}\left[e^{-\lambda s}\right]^{K-1}+k(K-1)\mathbb{E}\left[s^{2k-2}e^{-\lambda s}\right]\mathbb{E}[s^2 e^{-\lambda s}]\mathbb{E}[e^{-\lambda s}]^{K-2}+\ldots,
    \end{split}
  \end{equation}
  so that we may split the different contributions as 
  \begin{equation}
    \label{eq:H_split}
    \begin{split}
      \mathbb{E}\left[\mathcal{H}^k\right]&= \frac{1}{\Gamma[2k]}\int_0^\infty \dint \lambda~\lambda^{2k-1}\left(K\mathbb{E}\left[s^{2k}e^{-\lambda s}\right]\mathbb{E}\left[e^{-\lambda s}\right]^{K-1}+k(K-1)\mathbb{E}\left[s^{2k-2}e^{-\lambda s}\right]\mathbb{E}[s^2 e^{-\lambda s}]\mathbb{E}[e^{-\lambda s}]^{K-2}+\ldots\right)\\
      &= I_1 + I_2+\ldots,
    \end{split}
  \end{equation}
  where $I_\ell$ is the integral with $j_1=j_2=\ldots=j_{\ell}=1$ in
  the sum of Eq.~\eqref{eq:gen_k_case}.  We next have to leading order
  in $\lambda$
  \begin{equation}
    \mathbb{E}\left[s^{2k}e^{-\lambda s}\right]\mathbb{E}\left[e^{-\lambda s}\right]^{K-1} \approx \mu\Gamma[2k-\mu]e^{-\lambda \mathbb{E}[s]K}\lambda^{\mu-2k},
  \end{equation} 
  so that
  \begin{equation}
    I_1 \approx \frac{K\mu \Gamma[2k-\mu]}{\Gamma[2k]}\int_{0}^\infty \dint \lambda~\lambda^{\mu-1}e^{-\lambda \mathbb{E}[s]K} \approx \mu K^{1-\mu}\mathbb{E}[s]^{-\mu}\frac{ \Gamma[2k-\mu]\Gamma[\mu]}{\Gamma[2k]},
  \end{equation}
  with subleading corrections that are
  $\mathcal{O}\left(K^{1-\mu}\right)$. Regarding $I_2$, we have that
  \begin{equation}
    \mathbb{E}\left[s^{2k-2}e^{-\lambda s}\right]\mathbb{E}[s^2 e^{-\lambda s}]\mathbb{E}[e^{-\lambda s}]^{K-2} \approx \mu^2\Gamma[2k-2-\mu]\Gamma[2-\mu] e^{-\lambda \mathbb{E}[s]K}\lambda^{\mu-2k-2}\lambda^{\mu-2}\propto \lambda^{2\mu-2k},
  \end{equation}
  so that $I_2 \sim K^{1-2\mu}=\mathcal{O}(K^{1-\mu})$. In the end, the
  dominating contribution is given by $I_1$ for $k\geq 1$, and so
  finally
  \begin{equation}
    \mathbb{E}[\mathcal{H}^k] =  K^{1-\mu} \mu \mathbb{E}[s]^{-\mu}\frac{ \Gamma[2k-\mu]\Gamma[\mu]}{\Gamma[2k]}+\mathcal{O}\left(K^{1-\mu}\right).
  \end{equation}
\end{proof}

\subsection{Proof of Proposition~\ref{prop:herfindahl_gabaix}.}
\label{ssec:proof_prop_gabaix1}

\setcounter{proposition}{0}
\begin{proposition}
  For fixed $K\gg1$, the distribution of the Herfindahl-Hirschman
  index takes the following form,
  \begin{equation}
    P(\mathcal{H}\vert K) = K^{2(\mu-1)/\mu} F\left(\mathcal{H}K^{2(\mu-1)/\mu}\right)\left(1-\sqrt{\mathcal{H}}\right)^{\mu-1}\;\;\;, 
  \end{equation}
  with $F(x) \sim \frac{1}{x^{1+\mu/2}}$ when $1 \ll x \lesssim K^{2(\mu-1)/\mu}$.
\end{proposition}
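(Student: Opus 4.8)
The plan is to read the claimed expression as a scaling ansatz centred on the typical value and to pin down its two non-trivial features---the power-law body $F(x)\sim x^{-1-\mu/2}$ and the edge factor $\bigl(1-\sqrt{\mathcal{H}}\bigr)^{\mu-1}$---by an extreme-value argument, checking consistency against the integer moments already obtained in Lemma~\ref{lem:integer_herfindahl_moments}. Since Lemma~\ref{lem:typical_herfindahl} gives $\mathcal{H}_{\text{typ}}\sim K^{2(1-\mu)/\mu}$, the rescaled variable $x:=\mathcal{H}K^{2(\mu-1)/\mu}$ is of order one for typical firms, and the prefactor $K^{2(\mu-1)/\mu}$ is exactly the Jacobian $\dint x/\dint\mathcal{H}$ needed for $F$ to be a normalized density in $x$. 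The peak of $F$ should sit at $x\sim1$, its body should decay as $x^{-1-\mu/2}$, and the edge factor should enforce the hard cutoff $P(\mathcal{H}=1\mid K)=0$.

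The core step is to isolate the largest sub-unit. Writing $m:=s_{\max}$ and $\Sigma':=\sum_{j\neq\max}s_j$, the finiteness of $\mathbb{E}[s]$ (which holds because $\mu>1$) gives, by the law of large numbers, $\Sigma'=K\mathbb{E}[s]\,(1+o(1))$, while the heavy tail is carried entirely by $m$, whose upper-order-statistic density is $\approx K\mu\,m^{-1-\mu}$ for $m\gg K^{1/\mu}$. In the regime $\mathcal{H}\gg\mathcal{H}_{\text{typ}}$, i.e. $m\gg K^{1/\mu}$, one has $m^2\gg K^{2/\mu}$, which dominates the typical magnitude of $\sum_{j\neq\max}s_j^2$; the numerator of $\mathcal{H}$ is then governed by $m^2$ alone, so
\[ \sqrt{\mathcal{H}}\approx\frac{m}{m+\Sigma'}. \]
I would then change variables from $m$ to $u=\sqrt{\mathcal{H}}$, via $m=\Sigma' u/(1-u)$ with $\dint m=\Sigma'(1-u)^{-2}\dint u$, treating $\Sigma'\approx K\mathbb{E}[s]$ as frozen.

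Pushing the Pareto tail $K\mu\,m^{-1-\mu}\dint m$ through this transformation yields
\[ P(\mathcal{H}\mid K)\approx\tfrac{\mu}{2}\,K^{1-\mu}\,\mathbb{E}[s]^{-\mu}\,\mathcal{H}^{-1-\mu/2}\,\bigl(1-\sqrt{\mathcal{H}}\bigr)^{\mu-1}, \]
which is precisely the claimed form: a short exponent computation shows $K^{1-\mu}\mathcal{H}^{-1-\mu/2}=K^{2(\mu-1)/\mu}\,x^{-1-\mu/2}$, so the body reads off as $K^{2(\mu-1)/\mu}F(x)$ with $F(x)\sim x^{-1-\mu/2}$, while the Jacobian factor $(1-u)^{\mu-1}$ reproduces the edge term, vanishing as $u\to1$ (since $\mu>1$) and hence enforcing $P(1\mid K)=0$. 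The validity window matches the statement, since the single-max approximation requires $x\gg1$ (the lower end $\mathcal{H}\gg\mathcal{H}_{\text{typ}}$) and the geometric cutoff $\mathcal{H}\le1$ becomes $x\lesssim K^{2(\mu-1)/\mu}$. As an independent check I would integrate $\mathcal{H}^k$ against this density: the substitution $u=\sqrt{\mathcal{H}}$ turns it into a Beta integral $B(2k-\mu,\mu)$, recovering $\mathbb{E}[\mathcal{H}^k]\approx\mu K^{1-\mu}\mathbb{E}[s]^{-\mu}\Gamma[2k-\mu]\Gamma[\mu]/\Gamma[2k]$ exactly as in Lemma~\ref{lem:integer_herfindahl_moments}, which fixes the normalization and corroborates the tail exponent.

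The main obstacle is the crossover region $x\sim1$, where the single-max argument fails: there $m$ is only of the order of the typical maximum, the discarded sum $\sum_{j\neq\max}s_j^2$ is comparable to $m^2$, and the Lévy-stable fluctuations of $\Sigma'$ about $K\mathbb{E}[s]$ (of order $K^{1/\mu}$) are no longer subdominant. Determining the shape of $F$ near its peak therefore requires the full generalized-central-limit treatment of the ratio $\sum_j s_j^2/(\sum_j s_j)^2$, namely the Laplace-transform/participation-ratio machinery extended from \cite{derridaRandomWalksSpin1997} that already delivered the moments in Lemma~\ref{lem:integer_herfindahl_moments}. Since the proposition only asserts the tail $F(x)\sim x^{-1-\mu/2}$ together with the edge factor, the extreme-value computation supplies essentially all that is claimed, with the moment identity serving to tie the normalization and the peak region back to the Derrida-type analysis.
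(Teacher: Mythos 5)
Your proof is correct, but it takes a genuinely different route from the paper's. The paper works \emph{backwards from the moments}: it first establishes Lemma~\ref{lem:integer_herfindahl_moments} by the Derrida-style Laplace-transform computation, then recognizes the Beta-function representation $\Gamma[2k-\mu]\Gamma[\mu]/\Gamma[2k]=\int_0^1 \dint x\, x^{2k-\mu-1}(1-x)^{\mu-1}$, changes variables so the integrand becomes $h^{k-1-\mu/2}(1-\sqrt{h})^{\mu-1}$, and reads off the density $P(\mathcal{H})\sim \mathcal{H}^{-1-\mu/2}\left(1-\sqrt{\mathcal{H}}\right)^{\mu-1}$ as the one whose moments reproduce that structure; the scaling function $F$ is then grafted on via $\mathcal{H}_{\text{typ}}$, with the L\'evy-stable law of index $\mu/2$ invoked for the peak region. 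You instead derive the tail density \emph{forwards}, by a single-big-jump argument: conditioning on one dominant sub-unit $m$, writing $\sqrt{\mathcal{H}}\approx m/(m+\Sigma')$ with $\Sigma'\approx K\mathbb{E}[s]$ frozen, and pushing the Pareto density $K\mu m^{-1-\mu}$ of the maximum through the change of variables $m\mapsto u=\sqrt{\mathcal{H}}$. Your algebra checks out: the Jacobian $\Sigma'(1-u)^{-2}$ combined with $m^{-1-\mu}=\left(\Sigma' u/(1-u)\right)^{-1-\mu}$ yields exactly $\tfrac{\mu}{2}K^{1-\mu}\mathbb{E}[s]^{-\mu}\mathcal{H}^{-1-\mu/2}\left(1-\sqrt{\mathcal{H}}\right)^{\mu-1}$, and the exponent identity $K^{1-\mu}\mathcal{H}^{-1-\mu/2}=K^{2(\mu-1)/\mu}x^{-1-\mu/2}$ is right. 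This buys something the paper's argument leaves implicit: the edge factor $\left(1-\sqrt{\mathcal{H}}\right)^{\mu-1}$ is exposed as a Jacobian effect of the map from the largest sub-unit to the Herfindahl, i.e., it has a mechanistic origin (one granular sub-unit carrying a finite fraction of total sales) rather than being inferred from moment structure; you also get the explicit prefactor. Your closing check, integrating $\mathcal{H}^k$ against this density to recover the Beta integral of Lemma~\ref{lem:integer_herfindahl_moments}, runs the paper's inference in the opposite direction and ties the two arguments together. Both treatments are equally (and only) heuristic in the crossover region $x\sim 1$, which you correctly flag as requiring the generalized-CLT machinery; since the proposition asserts only the tail of $F$ and the edge factor, your extreme-value computation covers everything that is actually claimed.
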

\begin{proof}
  Following what we obtained previously, we recognise the Beta
  function
  $\text{B}(a,b)=\frac{\Gamma[a]\Gamma[b]}{\Gamma[a+b]}=\int_0^1 \dint
  x~x^{a-1}(1-x)^{b-1}$ in Eq.~\eqref{eq:herfindahl_moments} and write
  \begin{equation}\label{eq:moment_identification}
    \begin{split}
      \frac{ \Gamma[2k-\mu]\Gamma[\mu]}{\Gamma[2k]} =& \int_0^1\dint x~x^{2k-\mu-1}\left(1-x\right)^{\mu-1}\\
      &\text{[setting $h = \sqrt{x}$]}\\
      &= \int_{0}^1 \dint h ~ h^{k-1-\frac{\mu}{2}}\left(1-\sqrt{h}\right)^{\mu-1}.
    \end{split}
  \end{equation}
  However, because we can identify
  $\mathbb{E}\left[\mathcal{H}^k\right]=\int_{0}^1 \dint
  \mathcal{H}~P(\mathcal{H})\mathcal{H}^k$, it follows that we can
  identify the leading contribution to this integral, coming from the
  region $\mathcal{H}\approx 1$,
  \begin{equation}
    P(\mathcal{H})\sim \mathcal{H}^{-1-\frac{\mu}{2}}\left(1-\sqrt{\mathcal{H}}\right)^{\mu-1}.
  \end{equation}
  Supplementing this with the fact that the typical value of
  $\mathcal{H}$ should be
  $\mathcal{H}_{\text{typ}}\sim K^{2\frac{1-\mu}{\mu}}$, this supposes
  that for $\mathcal{H}\ll 1$ we should have a scaling form,
  \begin{equation}
    P(\mathcal{H}) \approx \frac{1}{\mathcal{H}_{\text{typ}}}F\left(\frac{\mathcal{H}}{\mathcal{H}_{\text{typ}}}\right),
  \end{equation}
  that is such that $F(x)\sim x^{-1-\mu/2}$ for large
  $x\sim \frac{1}{\mathcal{H}_\text{typ}}$, in consistency with the
  form given in Eq.~\eqref{eq:moment_identification}. Furthermore,
  this is consistent with the fact that for large $K$, the typical
  Herfindahl reads
  \begin{equation}
    \mathcal{H} \approx \frac{1}{K\mathbb{E}[s]}\sum_i s_i^2,
  \end{equation} 
  and so the generalised law of large numbers indicates that
  $\sum_i s_i^2$ will have a distribution that is a Lévy-stable
  distribution with parameter $\mu/2$. The factor
  $\left(1-\sqrt{\mathcal{H}}\right)^{1-\mu}$ reconciles this with the fact that
  the Herfindahl is bounded by $1$ and with the fact that entities
  with $\mathcal{\mathcal{H}}\approx 1$ are statistically unlikely. Finally note
  that writing $\sigma \propto \sqrt{\mathcal{H}}$ leads to
  Eq.~\eqref{eq:Herf2}.
\end{proof}

\subsection{Proof of Proposition~\ref{prop:herfindahl_gabaix_moments}.}
\label{ssec:proof_prop_gabaix2}

\begin{proposition}
  For all $q>0$, the moments of $\mathcal{H}$ conditional on $K$ are
  given, to the leading order, by
  \begin{equation}
    \mathbb{E}\left[\mathcal{H}^q|K \right]\approx C_1 K^{1-\mu}+C_2K^{2q\frac{1-\mu}{\mu}} +\mathcal{O}\left(K^{\min\left(1-\mu, 2q \frac{1-\mu}{\mu}\right)}\right),
  \end{equation}
  where $C_1$ and $C_2$ are two constants. 
\end{proposition}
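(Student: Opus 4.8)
The plan is to compute the general-$q$ moments by integrating directly against the conditional law $P(\mathcal{H}\vert K)$ furnished by Proposition~\ref{prop:herfindahl_gabaix}, rather than via the Laplace/multinomial route of Lemma~\ref{lem:integer_herfindahl_moments}: the latter is tied to integer $k$ because it expands $(\sum_i s_i^2)^k$, which has no clean analogue for non-integer powers. Writing $a := K^{2(\mu-1)/\mu} = \mathcal{H}_{\mathrm{typ}}^{-1}$ with $\mathcal{H}_{\mathrm{typ}}$ from Lemma~\ref{lem:typical_herfindahl}, the starting point is
\begin{equation}
\mathbb{E}\!\left[\mathcal{H}^q \vert K\right] = \int_0^1 \mathcal{H}^q\, a\, F(a\mathcal{H})\left(1-\sqrt{\mathcal{H}}\right)^{\mu-1}\,\dint\mathcal{H},
\end{equation}
and the entire analysis reduces to a competition between the bulk of the distribution, concentrated near $\mathcal{H}\sim\mathcal{H}_{\mathrm{typ}}=1/a$, and its power-law tail reaching up to $\mathcal{H}\approx 1$.

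For the bulk contribution I would substitute $x=a\mathcal{H}$, turning the integral into $a^{-q}\int_0^{\Lambda} x^q F(x)\,(1-\sqrt{x/a})^{\mu-1}\,\dint x$ with $\Lambda=\mathcal{O}(a)$. Since $\sqrt{x/a}\to 0$ for $x=\mathcal{O}(1)$, the last factor tends to $1$, and because $F(x)\sim x^{-1-\mu/2}$ the integrand decays as $x^{q-1-\mu/2}$. Hence the integral converges to a finite $K$-independent constant precisely when $q<\mu/2$, giving a bulk contribution $C_2\,a^{-q}=C_2\,K^{2q(1-\mu)/\mu}$; when $q>\mu/2$ it is instead dominated by its upper cut-off and must be handed over to the tail regime. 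For the tail contribution I would insert $F(a\mathcal{H})\sim(a\mathcal{H})^{-1-\mu/2}$, obtaining $a^{-\mu/2}\int \mathcal{H}^{q-1-\mu/2}(1-\sqrt{\mathcal{H}})^{\mu-1}\,\dint\mathcal{H}$; for $q>\mu/2$ this is controlled by $\mathcal{H}\approx 1$, converges (the factor $(1-\sqrt{\mathcal{H}})^{\mu-1}$ being exactly the Beta-type cutoff identified in Proposition~\ref{prop:herfindahl_gabaix}), and yields $C_1\,a^{-\mu/2}=C_1\,K^{1-\mu}$.

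Collecting the two pieces gives $\mathbb{E}[\mathcal{H}^q\vert K]\approx C_1 K^{1-\mu}+C_2 K^{2q(1-\mu)/\mu}$, with $C_1,C_2$ independent of $K$. Comparing exponents and recalling $1-\mu<0$, one has $1-\mu\gtrless 2q(1-\mu)/\mu$ according to whether $q\gtrless\mu/2$, so the tail term leads for $q>\mu/2$ and the bulk term for $q<\mu/2$, the other sitting at the subleading order $\mathcal{O}(K^{\min(1-\mu,\,2q(1-\mu)/\mu)})$ claimed in the statement. As a consistency check, every integer $k\ge 1$ obeys $k>\mu/2$ (since $\mu<2$), so the formula collapses to the pure $K^{1-\mu}$ scaling of Lemma~\ref{lem:integer_herfindahl_moments}, and matching constants identifies $C_1=\mu\,\mathbb{E}[s]^{-\mu}\Gamma[\mu]\Gamma[2q-\mu]/\Gamma[2q]$.

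The main obstacle I anticipate is the crossover matching: rigorously gluing the bulk and tail estimates so that the divergence of the bulk integral for $q>\mu/2$ is exactly the one regulated by the lower endpoint of the tail integral (avoiding double counting and pinning down the constants), handling the marginal case $q=\mu/2$ where the two exponents coincide and a logarithmic factor may surface, and justifying the interchange of the $K\to\infty$ limit with the integral by dominated convergence over the stated validity range $1\ll x\lesssim a$ of the scaling form for $F$. These are the points where the heuristic power-counting must be promoted to genuine asymptotics.
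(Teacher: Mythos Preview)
Your proposal is correct and follows essentially the same route as the paper: both start from the scaling form of $P(\mathcal{H}\vert K)$ in Proposition~\ref{prop:herfindahl_gabaix}, change variables to $x=\mathcal{H}/\mathcal{H}_{\mathrm{typ}}$, and identify two competing contributions scaling as $\mathcal{H}_{\mathrm{typ}}^{\mu/2}\sim K^{1-\mu}$ (from $\mathcal{H}\approx 1$) and $\mathcal{H}_{\mathrm{typ}}^{q}\sim K^{2q(1-\mu)/\mu}$ (from $\mathcal{H}\approx\mathcal{H}_{\mathrm{typ}}$).

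The only organizational difference is that the paper sidesteps the crossover-matching obstacle you flag at the end. Rather than splitting into bulk and tail regions and then gluing, it inserts the tail form $F(x)\sim x^{-1-\mu/2}$ on the whole range $[1,1/\mathcal{H}_{\mathrm{typ}}]$, makes the substitution $u=\sqrt{x\mathcal{H}_{\mathrm{typ}}}$, and lands directly on $\mathcal{H}_{\mathrm{typ}}^{\mu/2}\bigl[\mathrm{B}(2q-\mu,\mu)-\mathrm{B}_{\sqrt{\mathcal{H}_{\mathrm{typ}}}}(2q-\mu,\mu)\bigr]$. The small-$x$ expansion of the incomplete Beta function then produces both terms in one stroke, with the divergence you worried about for $q>\mu/2$ automatically absorbed into the complete Beta. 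Your explicit bulk/tail decomposition is perhaps more physically transparent about where each term originates, at the cost of the matching bookkeeping; the paper's single-integral Beta route is slicker but hides the physics slightly.
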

\begin{proof}
  We write the generalised moment as
  \begin{equation}
    \begin{split}
      \mathbb{E}\left[\mathcal{H}^q\right] &= \int_0^1\frac{\dint \mathcal{H}}{\mathcal{H}_\text{typ}} F\left(\frac{\mathcal{H}}{\mathcal{H}_{\text{typ}}}\right)\left(1-\sqrt{\mathcal{H}}\right)^{\mu-1}\\
      &= \mathcal{H}_{\text{typ}}^q \int_0^{1/\mathcal{H}_{\text{typ}}}\dint x~x^{q} F(x)\left(1-\sqrt{x\mathcal{H}_{\text{typ}}}\right)^{\mu-1}.
    \end{split}
  \end{equation}
  We may then approximate this integral using the intermediate tail of
  $F$, noting also that $F$ introduces a lower cut-off to the
  integral. The integral therefore runs from
  $\mathcal{H}\approx \mathcal{H}_{\text{typ}}$ to
  $\mathcal{H}\approx 1$. Therefore,
  \begin{equation}\
    \label{eq:gen_moments_2}
    \begin{split}
      \mathbb{E}\left[\mathcal{H}^q\right] \sim \mathcal{H}_{\text{typ}}^q \int_{1}^{1/\mathcal{H}_{\text{typ}}}\dint x~x^{q-1-\mu/2}\left(1-\sqrt{x\mathcal{H}_{\text{typ}}}\right)^{\mu-1}.
    \end{split}
  \end{equation}
  Next, we introduce $u=\sqrt{x\mathcal{H}_{\text{typ}}}$, with
  $\dint x = 2\frac{u}{\mathcal{H}_{\text{typ}}}\dint u$ and the
  integral runs from $u=\sqrt{\mathcal{H}_{\text{typ}}}$ to
  $u=1$. Therefore,
  \begin{equation}\label{eq_app:h_change_vars}
    \begin{split}
      \mathbb{E}\left[\mathcal{H}^q\right] \sim& 2\mathcal{H}_{\text{typ}}^q \frac{\mathcal{H}_{\text{typ}}^{1+\mu/2-q}}{\mathcal{H}_{\text{typ}}}\int_{\sqrt{\mathcal{H}_{\text{typ}}}}^1 \dint u~ u^{2q-\mu-1}\left(1-u\right)^{\mu-1}\\
      \sim& 2\mathcal{H}_{\text{typ}}^{\mu/2}\left(\int_{0}^1 \dint u~ u^{2q-\mu-1}\left(1-u\right)^{\mu-1}-\int_0^{\sqrt{\mathcal{H}_{\text{typ}}}} \dint u~ u^{2q-\mu-1}\left(1-u\right)^{\mu-1}\right)\\
      \sim& 2\mathcal{H}_{\text{typ}}^{\mu/2}\text{B}\left(2q-\mu,\mu\right)-2\mathcal{H}_{\text{typ}}^{\mu/2} \text{B}_{\sqrt{\mathcal{H}_{\text{typ}}}}\left(2q-\mu,\mu\right),
    \end{split}
  \end{equation}
  where $\text{B}(a,b)$ is the Beta function described above, and
  $\text{B}_x(a,b)=\int_0^x\dint t~t^{a-1}\left(1-t\right)^{b-1}$ is
  the incomplete Beta function. Using then that
  $B_{x}(a,b)= \frac{x^a}{a}\left(1+\mathcal{O}(x)\right)$, we have
  finally that
  \begin{equation}
    \label{eq:expansion_h_alpha}
    \mathbb{E}\left[\mathcal{H}^q\right] \sim  \text{B}\left(2q-\mu,\mu\right)\mathcal{H}_{\text{typ}}^{\mu/2}+\frac{1}{\mu-2q}\mathcal{H}_{\text{typ}}^q,
  \end{equation}
  which, after plugging in
  $\mathcal{H}_{\text{typ}}\sim K^{2\frac{1-\mu}{\mu}}$, yields the
  result given above. In addition to this, it becomes clear from the
  computation that the contribution $\mathcal{H}_{\text{typ}}^q$ comes
  from the values close to $\mathcal{H}\approx \mathcal{H}_{typ}$,
  while the contribution $\mathcal{H}_{\text{typ}}^{\mu/2}$ comes from
  the tail.
\end{proof}

\section*{Double granularity hypothesis [Wyart and Bouchaud's model]}
\subsection{Preliminary results}

The statistical object of interest in this set-up is
$P(S,R)$,\footnote{To ease the notation, we drop the firm index $i$
  and the time index $t$.}  the joint distribution describing firms of
size $S$ that experience an absolute size change $R$ between $t$ and
$t+1$.  Formally, we may write this as
\begin{equation}
  \label{eq:PSR_def}
  P(S,R) = \sum_{K>1}p(K)\int_{0}^{^\infty}\prod_{i=1}^K\dint
  s_i~p(s_i)\delta\left(S-\sum_{j=1}^K s_j\right)\int \prod_{j=1}^K\dint
  \eta_j~p(\eta_j)\delta\left(R-\sum_{j=1}^K \eta_j s_j\right),
\end{equation}
where $\delta(\cdot)$ is a Dirac delta distribution.
Equation~\eqref{eq:PSR_def} represents $P(S,R)$ as the sum over all
possible arrangements of Gaussian shocks, sub-unit numbers and
sub-unit sizes that result in an initial size equal to $S$ and to an
absolute growth equal to $R$. The role of the Dirac delta is to impose
the constraints $S=\sum_{j}s_j$ and $R=\sum_j \eta_j s_j$, and gives
this representation the structure of a sum of convolutions over the
different variables of interest.

Next, we ground the study of this object on on three pillars.
We first exploit the convolutional structure of
Equation~\eqref{eq:PSR_def} by taking the Fourier-Laplace transform
with respect to $S$ and $R$ respectively, defining thus
\begin{equation}
  \label{eq:FLt_def}
  \widehat{P}(\lambda,q) := \mathbb{E}\left[e^{\iu
        qR -\lambda S}\right]= \int_0^\infty \dint S~\int \dint R~\exp(\iu
  qR-\lambda S)P(S,R)\;\;.
\end{equation}

Indeed, the function $\widehat{P}(\lambda,q)$ may be used to retrieve
several properties of the distribution $P(S,R)$. For example, when
$q=0$, $\widehat{P}(\lambda,q=0)$ becomes the Laplace transform of the
marginal distribution $P(S)$. This can be used to study the behavior
of the firm size distribution in the large $S$ limit, which is
determined by the behavior of $\widehat{P}(\lambda,q=0)$ in the limit
$\lambda \to 0$.

Secondly, Equation~\eqref{eq:FLt_def} implies that
\begin{equation}
  \label{eq:FLt_deriv}
  \left. -\frac{\partial^2
    \widehat{P}(\lambda,q)}{\partial q^2}\right|_{q=0} = \int_0^\infty
  \dint S~ P(S) \exp(-\lambda S) \int \dint R~R^2 P(R|S)\;\;.
\end{equation} Thus, computing the second derivative of the
Fourier-Laplace transform of $P(S,R)$ at $q=0$ yields the conditional
variance of the absolute growth $R$, which will be relevant when
studying the variance of growth rates $R/S$ conditional on firm sizes
$S$.

Thirdly and finally, once $P(S)$ is characterized, plugging
$P(S,R)=P(S)P(R|S)$ into \eqref{eq:FLt_def} leads to
\begin{equation}
  \label{eq:FLt_v2}
  \widehat{P}(\lambda,q) = \int_0^\infty \dint S
  \int \dint R~\exp(\iu qR-\lambda S)P(S)P(R|S)\;\;,
\end{equation}
which can be used to identify the distribution of firm growth rates
conditional on size. Because the properties of the marginal $P(S)$ can
be understood using the technique described above, one is left with
the problem of solving for a marginal distribution $P(R|S)$ to match
the right-hand side of the equation with the left hand side given by
the Fourier-Laplace transform. \medskip

The following lemmas establish two preliminary results providing a
convenient representation of $\widehat{P}(\lambda,q)$ and some
insights on it behavior.
\begin{lemma}
  \label{lemma:1}
  The Fourier-Laplace transform $\widehat{P}(\lambda,q)$ can be represented as
  \begin{equation}
    \label{eq:phat_eval}
    \widehat{P}(\lambda, q)= \sum_{K>1}p(K)\left[1-g(\lambda,q)\right]^K\;\;\;,
  \end{equation}
  where $g(\lambda,q)=\int_0^{\infty}ds\;P(s)\left[1-\exp(-q^2s^2/2-\lambda s)\right]$.
\end{lemma}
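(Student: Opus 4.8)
The plan is to substitute the explicit convolutional representation of $P(S,R)$ from Equation~\eqref{eq:PSR_def} directly into the definition of the Fourier--Laplace transform in Equation~\eqref{eq:FLt_def}, and then to exploit the fact that the only role of the two Dirac deltas is to enforce $S=\sum_j s_j$ and $R=\sum_j \eta_j s_j$. First I would interchange the outer integrals $\int \dint S\,\int \dint R$ with the sum over $K$ and with the inner $s_j$- and $\eta_j$-integrals, then carry out the $S$ and $R$ integrations using the sifting property of $\delta(\cdot)$. This collapses the transform into
\begin{equation*}
  \widehat{P}(\lambda,q) = \sum_{K>1} p(K) \int_0^\infty \prod_{i=1}^K \dint s_i\, p(s_i) \int \prod_{j=1}^K \dint \eta_j\, p(\eta_j)\, \exp\!\left(\iu q \sum_{j} \eta_j s_j - \lambda \sum_{j} s_j\right).
\end{equation*}

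The key observation is that the exponent is a sum over $j$, so the integrand factorizes as $\prod_{j=1}^K \exp(\iu q \eta_j s_j - \lambda s_j)$. Since the pairs $(s_j,\eta_j)$ are independent and identically distributed, the $2K$-dimensional integral becomes the $K$-th power of a single two-variable integral, yielding
\begin{equation*}
  \widehat{P}(\lambda,q) = \sum_{K>1} p(K) \left[ \int_0^\infty \dint s\, p(s) \int \dint \eta\, p(\eta)\, \exp\!\left(\iu q \eta s - \lambda s\right) \right]^K.
\end{equation*}
Next I would perform the inner $\eta$-integral. Recalling that Equation~\eqref{eq:PSR_def} is built with Gaussian shocks, $\eta$ is standard normal, so $\int \dint \eta\, p(\eta)\, e^{\iu q \eta s} = e^{-q^2 s^2/2}$ is simply its characteristic function; this is precisely the step that produces the $-q^2 s^2/2$ term inside $g(\lambda,q)$.

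Finally, to match the stated form, I would write the bracketed single-site integral (with $p(s)$ the sub-unit size density of Assumption~\ref{hp:su_size}, i.e. the $P(s)$ of the statement) as
\begin{equation*}
  \int_0^\infty \dint s\, p(s)\, e^{-q^2 s^2/2 - \lambda s} = \int_0^\infty \dint s\, p(s) - \int_0^\infty \dint s\, p(s)\left[1 - e^{-q^2 s^2/2 - \lambda s}\right] = 1 - g(\lambda,q),
\end{equation*}
using the normalization $\int_0^\infty p(s)\,\dint s = 1$ and the definition of $g$. Substituting back gives $\widehat{P}(\lambda,q) = \sum_{K>1} p(K)[1-g(\lambda,q)]^K$, as claimed. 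I do not anticipate any genuine obstacle: the argument is an essentially mechanical factorization of a product measure, and the only points requiring care are the legitimacy of exchanging the $K$-sum with the integrals — justified by absolute convergence, since $\lvert e^{\iu q \eta s - \lambda s}\rvert = e^{-\lambda s}\le 1$ for $\lambda\ge0$ while $p(s)$ and $p(K)$ are probability densities, so $0\le 1-g(\lambda,q)\le 1$ and $\sum_{K>1}p(K)\le 1$ — and the observation that the Gaussian form of the shocks is exactly what pins down the $q$-dependence of $g$.
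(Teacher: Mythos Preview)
Your proposal is correct and follows essentially the same approach as the paper: insert Equation~\eqref{eq:PSR_def} into Equation~\eqref{eq:FLt_def}, use the Dirac deltas to replace $S$ and $R$ by $\sum_j s_j$ and $\sum_j \eta_j s_j$, factorize over the independent $(s_j,\eta_j)$ pairs, and then evaluate the Gaussian $\eta$-integral to obtain $e^{-q^2 s^2/2}$ before rewriting the bracket as $1-g(\lambda,q)$. If anything, your version is slightly more careful than the paper's, since you explicitly justify the interchange of the $K$-sum with the integrals via absolute convergence.
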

\begin{proof}
  Plugging \eqref{eq:PSR_def} into \eqref{eq:FLt_def} gives
  \begin{align*}
    \widehat{P}(\lambda,q) &= \int_0^\infty \dint S~\int \dint R~\exp(\iu qR-\lambda S)
                             \sum_{K>1}p(K)\int_{0}^{^\infty}\prod_{j=1}^K\dint s_j~p(s_j)\delta\left(S-\sum_{j=1}^K s_j\right)\int \prod_{j=1}^K\dint \eta_j~p(\eta_j)\delta\left(R-\sum_{j=1}^K \eta_j s_j\right)\\
                           &=\sum_{K>1}p(K)\left[
                             \int_0^\infty\dint S \int_{0}^{^\infty}\prod_{j=1}^K\dint s_j~p(s_j)\delta\left(S-\sum_{j=1}^K s_j\right)\exp(-\lambda S)\;\int \dint R \int \prod_{j=1}^K\dint \eta_j~p(\eta_j)\delta\left(R-\sum_{j=1}^K \eta_j s_j\right)\exp(\iu qR)
                             \right]\\
                           &=\sum_{K>1}p(K)\left[
                             \int_0^\infty  \prod_{j=1}^K\dint s_j~p(s_j)\int_{0}^{^\infty} \dint S~\delta\left(S-\sum_{j=1}^K s_j\right)\exp(-\lambda S)\;\int\prod_{j=1}^K\dint \eta_j~p(\eta_j)  \int \dint R~\delta\left(R-\sum_{j=1}^K \eta_j s_j\right)\exp(\iu qR)
                             \right]\\
                           &\text{[since $\int \dint x~f(x)\delta(x-a) =f(a)$]}\\
                           &=\sum_{K>1}p(K)\left[
                             \int_0^\infty  \exp\left(-\lambda \sum_{j=1}^K s_j\right) \prod_{j=1}^K\dint s_j~p(s_j)\;\int \exp\left(\iu q\sum_{j=1}^k \eta_js_j\right)\prod_{j=1}^K\dint \eta_j~p(\eta_j)  
                             \right]\\
                           &=\sum_{K>1}p(K)\left[
                             \prod_{j=1}^K\int_0^\infty  \dint s_j~p(s_j)\exp(-\lambda s_j)\;\prod_{j=1}^K\int\dint \eta_j~p(\eta_j)  \exp(\iu q\eta_js_j)
                             \right]\\
                           &\text{[since integrands are independent]}\\
                           &=\sum_{K>1}p(K)\left[
                             \int_0^\infty  \dint s \;\int \dint \eta ~p(s)p(\eta)  \exp(\iu q\eta s -\lambda s)
                             \right]^K\\
                           &\text{[integrating over $\eta$]}\\
                           &=\sum_{K>1}p(K)\left[
                             \int_0^\infty  \dint s~p(s) \exp(-\frac{q^2 s^2}{2} -\lambda s)
                             \right]^K\\
                           &=\sum_{K>1}p(K)\left[1-g(\lambda,q) \right]^K\;\;.
  \end{align*}  
\end{proof}

Lemma~\ref{lemma:1} generates two useful corollaries characterizing
the behaviour of $g(\lambda,q)$ for $q=0$ and in the limit where
$\lambda,q \to 0$ with the product $q\lambda^{-1/\mu}$ constant. The
first exploits the assumption that both the number and the size of
sub-units are Pareto-distributed, with exponents $\alpha$ and $\mu$
respectively and with $1<\alpha<\mu<2$.

\begin{corollary}
  \label{coro:1}
  The Fourier-Laplace transform $\widehat{P}(\lambda,q)$ at $q=0$ reads
  \begin{equation}
    \label{eq:cor1}
    \widehat{P}(\lambda,0) = 1-\lambda\mathbb{E}[s]\mathbb{E}[K] - \lambda^\alpha \mathbb{E}[s]^{\alpha}\Gamma\left[-\alpha+1\right]-\lambda^{\mu}\mathbb{E}[K]\Gamma\left[-\mu+1\right] +\mathcal{O}\left(\lambda^{\max\left(\alpha,\mu\right)}\right)\;\;,
  \end{equation}
  where $\mathbb{E}[s]$ and $\mathbb{E}[K]$ are the expected values of
  the sub-unit size and of the number of sub-units and
  $\Gamma\left[\cdot\right]$ is the gamma function.
\end{corollary}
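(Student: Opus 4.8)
The plan is to start from the representation of $\widehat{P}(\lambda,q)$ furnished by Lemma~\ref{lemma:1} and specialise it to $q=0$. Setting $q=0$ kills the Gaussian factor, so that $g(\lambda,0)=\int_0^\infty ds\,P(s)[1-e^{-\lambda s}]=1-\phi(\lambda)$, where $\phi(\lambda):=\mathbb{E}[e^{-\lambda s}]$ is exactly the Laplace transform of a single sub-unit size treated in Lemma~\ref{lem:power_law_laplace}. Lemma~\ref{lemma:1} then reduces to $\widehat{P}(\lambda,0)=\sum_{K>1}p(K)\,\phi(\lambda)^K=\mathbb{E}_K[\phi(\lambda)^K]$. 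The whole task thus becomes the control of a single composition: the Laplace transform, over the heavy-tailed number of sub-units $K$, of the map $K\mapsto\phi(\lambda)^K$, where $\phi$ itself carries a heavy-tailed (exponent $\mu$) non-analyticity.

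First I would write $\phi(\lambda)^K=e^{-K\psi(\lambda)}$ with $\psi(\lambda):=-\ln\phi(\lambda)$ and feed in the expansion of Lemma~\ref{lem:power_law_laplace}, namely $\phi(\lambda)=1-\mathbb{E}[s]\lambda-\Gamma[1-\mu]\lambda^\mu+\ldots$, to obtain $\psi(\lambda)=\mathbb{E}[s]\lambda+\Gamma[1-\mu]\lambda^\mu+\ldots$; the quadratic $\tfrac12\mathbb{E}[s]^2\lambda^2$ produced by the logarithm is subleading since $\mu<2$. Because $\psi(\lambda)\to 0$ as $\lambda\to 0$, the object $\widehat{P}(\lambda,0)=\mathbb{E}_K[e^{-K\psi}]$ is simply the Laplace transform of the law of $K$ evaluated at the small argument $\psi$. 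As $K$ is Pareto on $[1,\infty)$ with exponent $\alpha$, the very same calculation as in Lemma~\ref{lem:power_law_laplace} (with $\mu\to\alpha$ and lower bound $1$) yields $\mathbb{E}_K[e^{-K\psi}]=1-\mathbb{E}[K]\psi-\Gamma[1-\alpha]\psi^\alpha+\ldots$.

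The decisive step is then to substitute $\psi=\psi(\lambda)$ into this expansion and re-sort by powers of $\lambda$. The linear term $-\mathbb{E}[K]\psi$ contributes both the analytic piece $-\mathbb{E}[s]\mathbb{E}[K]\lambda$ and, crucially, the non-analytic piece $-\mathbb{E}[K]\Gamma[1-\mu]\lambda^\mu$ inherited from the tail of the sub-unit sizes. The term $-\Gamma[1-\alpha]\psi^\alpha$, to leading order, equals $-\Gamma[1-\alpha]\mathbb{E}[s]^\alpha\lambda^\alpha$, which is the non-analytic contribution coming from the tail of the number of sub-units. Collecting the three pieces reproduces precisely the claimed expansion, with the two distinct exponents $\alpha$ and $\mu$ attached to $\mathbb{E}[s]^\alpha$ and $\mathbb{E}[K]$ respectively and ordered by $1<\alpha<\mu<2$.

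The main obstacle is the clean decoupling of the two heavy tails together with control of the remainder. I would have to check that the cross terms generated by the substitution --- in particular the $\lambda^{\alpha-1+\mu}$ term arising from $\Gamma[1-\alpha]\alpha(\mathbb{E}[s]\lambda)^{\alpha-1}\Gamma[1-\mu]\lambda^\mu$, the logarithmic $\lambda^2$ contribution, and the higher-order pieces of both Laplace expansions --- are all of order $\mathcal{O}(\lambda^{\max(\alpha,\mu)})$ or smaller, so they may be absorbed into the error. This requires justifying the term-by-term asymptotics of the composed expansion, which I would handle by using the bound $\phi(\lambda)^K\le 1$ to legitimise interchanging the $K$-sum with the small-$\lambda$ expansion via dominated convergence, and by invoking the Tauberian/Laplace estimates underlying Lemma~\ref{lem:power_law_laplace} to ensure that the $\psi^\alpha$ singularity is reproduced uniformly as $\psi\to 0$.
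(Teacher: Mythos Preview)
Your proposal is correct and follows essentially the same route as the paper: reduce $\widehat{P}(\lambda,0)$ via Lemma~\ref{lemma:1} to $\sum_K p(K)\,\phi(\lambda)^K$, rewrite this as the Laplace transform of the law of $K$ evaluated at a small argument, then apply the Pareto--Laplace expansion of Lemma~\ref{lem:power_law_laplace} twice (once for $s$, once for $K$) and re-sort in powers of $\lambda$. The only cosmetic difference is that you pass through the exact identity $\phi^K=e^{-K\psi}$ with $\psi=-\ln\phi$ and then expand $\psi$, whereas the paper uses the direct approximation $(1-g)^K\approx e^{-Kg}$; your version is marginally cleaner in that it makes explicit why the $\tfrac12\mathbb{E}[s]^2\lambda^2$ term from the logarithm is harmless (since $\mu<2$), and your discussion of the cross term $\lambda^{\alpha-1+\mu}$ in the remainder is more careful than the paper's treatment.
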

\begin{proof}
  We start by computing the Laplace transform of the distribution of
  the sub-unit size $P(s)=\dfrac{\mu s_0^\mu}{s^{1+\mu}}$. From
  Lemma~\ref{lem:power_law_laplace} and
  Eq.~\eqref{eq-app:pl_laplace1}, we have directly that
  \begin{align}
    \label{eq-app:pl_laplace}
    g(\lambda,0)=\int_{s_0}^\infty\dint s ~P(s)\exp(-\lambda s) &= 1-\lambda^\mu \Gamma\left[-\mu+1\right]-\lambda \mathbb{E}[s] + \mathcal{O}(\lambda^{\mu}).
  \end{align}
  More generally, the Laplace transform of a power-law distribution
  with a tail $P(s)\sim s^{-1-\mu}$ with $1<\mu<1$, has the form
  $1-A\lambda-B\lambda^\mu+\mathcal{O}(\lambda^\mu)$ with A and B two
  constants that depend on $\mu$. Thus we can obtain the Laplace
  transform of the distribution of the number of sub-units, $p(K)$, by
  changing the roles $\alpha\leftrightarrow \mu$.

  Now, setting $q=0$ in Equation~\eqref{eq:phat_eval} one gets
  \begin{align}
    \widehat{P}(\lambda,0) &\approx \sum_{K>1}p(K)\left[1-g(\lambda,0) \right]^K \nonumber \\
    &\approx \sum_{K>1}p(K)\left[1-\lambda^\mu \Gamma\left[-\mu+1\right]-\lambda \mathbb{E}[s] \right]^K \nonumber\\
    &\text{[approximating $(1-x)^K$ with $\exp(-Kx)$]} \nonumber\\
    &\approx \sum_{K>1}p(K) \exp(-K\lambda^\mu \Gamma\left[-\mu+1\right]-K\lambda \mathbb{E}[s]) \nonumber\\
    &\text{[setting $\upsilon=\lambda^\mu \Gamma\left[-\mu+1\right]+\lambda \mathbb{E}[s]$]} \nonumber\\
    &\approx \int_0^\infty\dint K~ P(K) \exp(-\upsilon K) \nonumber\\
    &\text{[using Equation~\eqref{eq-app:pl_laplace} and Assumption~\ref{hp:su_size}]}\nonumber\\
    &\approx 1-\upsilon^\alpha \Gamma\left[-\alpha+1\right]-\upsilon \mathbb{E}[K]\nonumber\\
    &\approx 1-(\lambda^\mu \Gamma\left[-\mu+1\right]+\lambda \mathbb{E}[s])^\alpha \Gamma\left[-\alpha+1\right]-(\lambda^\mu \Gamma\left[-\mu+1\right]+\lambda \mathbb{E}[s]) \mathbb{E}[K]\nonumber\\
    &\text{[since, by Assumption~\ref{hp:su_n_gabaix}, $1<\alpha<\mu<2$]} \nonumber\\
    &\approx 1-\lambda^\alpha \mathbb{E}[s]^{\alpha}\Gamma\left[-\alpha+1\right] -\lambda^\mu\mathbb{E}[K]\Gamma\left[-\mu+1\right] -\lambda\mathbb{E}[s]\mathbb{E}[K] - +\mathcal{O}(\lambda^{\alpha})\;\; \nonumber.
  \end{align}
\end{proof}

Note that this result intuitively implies that
$\mathbb{E}[S]=\mathbb{E}[s]\mathbb{E}[K]$, which means that the
average size of a firm is given by the product of the average number
of subunits in a firm by the average size of a subunit. This can be
obtained directly by noting that
$\mathbb{E}[S]=\frac{\partial \widehat{P}(\lambda ,0)}{\partial
  \lambda}\big\vert_{\lambda=0}$.

The second corollary describes the behaviour of
$\widehat{P}(\lambda,q)$ in the limit where $\lambda,q \to 0$, but
keeping $q\lambda^{-1/\mu}$ constant. This corollary will later allow
us to compute the distribution $P(R|S)$ in the scaling region where
$R$ is proportional to $S^{\frac{1}{\mu}}$.
\begin{corollary}
  \label{coro:2}
  The Fourier-Laplace transform $\widehat{P}(\lambda,q)$ computed in
  the limit $\lambda,q \to 0$ where $q\lambda^{-1/\mu}=\theta$ reads
  \begin{equation}
    \label{eq:cor2}
    \widehat{P}(\lambda,q=\theta\lambda^{1/\mu}) = 1-\lambda\mathbb{E}[K]\left(\mathbb{E}[s] + \mu s_0^{\mu}\theta^{\mu}I(\mu)\right)+\mathcal{O}(\lambda)\;\;,
  \end{equation}
  where $I(\mu)= \int_0^\infty \dint t~t^{-(1+\mu)}(1-e^{-t^2/2})=-2^{-(1+\mu/2)}\Gamma(-\frac{\mu}{2})$.
\end{corollary}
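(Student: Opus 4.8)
The plan is to build on Lemma~\ref{lemma:1}, which gives $\widehat{P}(\lambda,q)=\sum_{K>1}p(K)[1-g(\lambda,q)]^K$, and to evaluate $g$ in the prescribed scaling limit before summing over $K$ exactly as in Corollary~\ref{coro:1}. First I would isolate the new contribution of the quadratic (Fourier) piece by writing, inside the integral defining $g$,
\[
1-e^{-q^2s^2/2-\lambda s}=\bigl(1-e^{-\lambda s}\bigr)+e^{-\lambda s}\bigl(1-e^{-q^2s^2/2}\bigr),
\]
so that $g(\lambda,q)=g(\lambda,0)+\mu s_0^{\mu}\int_{s_0}^\infty \dint s~s^{-(1+\mu)}e^{-\lambda s}\bigl(1-e^{-q^2s^2/2}\bigr)$. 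The first term is already controlled by Lemma~\ref{lem:power_law_laplace}, which yields $g(\lambda,0)=\lambda\mathbb{E}[s]+\mathcal{O}(\lambda^{\mu})$.

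For the second term I would use that, on setting $q=\theta\lambda^{1/\mu}$, the integrand concentrates on the scale $s\sim 1/q\sim\lambda^{-1/\mu}$, where $\lambda s\sim\lambda^{\,1-1/\mu}\to0$ because $1<\mu<2$. This permits replacing $e^{-\lambda s}$ by $1$ and lowering the cut-off from $s_0$ to $0$; the two errors are of order $\lambda^{2/\mu}$ and $\lambda^{\,2-1/\mu}$ respectively, both $o(\lambda)$. Changing variables to $t=qs$ then collapses the integral to $\mu s_0^{\mu}q^{\mu}I(\mu)=\mu s_0^{\mu}\theta^{\mu}I(\mu)\,\lambda$ with $I(\mu)=\int_0^\infty \dint t~t^{-(1+\mu)}(1-e^{-t^2/2})$. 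A short integration by parts (integrating $t^{-(1+\mu)}$, with boundary terms vanishing since $\mu<2$) followed by the substitution $w=t^2/2$ reduces $I(\mu)$ to a Gamma integral and gives $I(\mu)=\tfrac1\mu 2^{-\mu/2}\Gamma(1-\tfrac\mu2)=-2^{-(1+\mu/2)}\Gamma(-\tfrac\mu2)$ after using $\Gamma(1-\tfrac\mu2)=-\tfrac\mu2\,\Gamma(-\tfrac\mu2)$. Collecting the two pieces,
\[
g(\lambda,\theta\lambda^{1/\mu})=\lambda\bigl(\mathbb{E}[s]+\mu s_0^{\mu}\theta^{\mu}I(\mu)\bigr)+\mathcal{O}(\lambda^{\mu}).
\]

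It then remains to sum over $K$, mirroring the proof of Corollary~\ref{coro:1}. I would approximate $[1-g]^K\approx e^{-Kg}$ for small $g$ and replace the sum by $\int_0^\infty\dint K~P(K)e^{-gK}$, i.e. the Laplace transform of the Pareto-$\alpha$ law for the number of sub-units. Since $g\to0$, its small-argument expansion reads $\widehat{P}=1-g\,\mathbb{E}[K]-g^{\alpha}\Gamma[1-\alpha]+\cdots$; because $\alpha>1$, the contribution $g^{\alpha}\sim\lambda^{\alpha}$ is negligible against $g\sim\lambda$. Substituting the expression for $g$ just obtained then produces the claimed identity $\widehat{P}(\lambda,\theta\lambda^{1/\mu})=1-\lambda\mathbb{E}[K]\bigl(\mathbb{E}[s]+\mu s_0^{\mu}\theta^{\mu}I(\mu)\bigr)$, up to the stated lower-order corrections.

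The main obstacle I anticipate is the rigorous justification of the scaling approximations in the quadratic term—precisely, showing that discarding $\lambda s$ in the exponent and lowering the cut-off to $0$ each contribute only at order $o(\lambda)$, uniformly in the relevant range—together with controlling the replacement $[1-g]^K\approx e^{-Kg}$ well enough to interchange it with the $K$-summation. This is exactly the kind of estimate handled (heuristically but correctly) in the companion Corollary~\ref{coro:1}, so I would reuse the same reasoning there.
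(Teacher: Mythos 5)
Your proof is correct and follows essentially the same route as the paper's: expand $g(\lambda,q)$ in the scaling limit $q=\theta\lambda^{1/\mu}$, isolating the linear piece $\lambda\,\mathbb{E}[s]$ and the quadratic piece which rescales under $t=qs$ to $\mu s_0^{\mu}\theta^{\mu}\lambda I(\mu)$, and then sum over $K$ via the Laplace transform of the Pareto law for the number of sub-units. Your version is in fact slightly tighter than the paper's at several points — the exact splitting of the exponential instead of the paper's loose ``linear approximation'', the explicit $o(\lambda)$ bounds of order $\lambda^{2/\mu}$ and $\lambda^{2-1/\mu}$ for the cut-off and $e^{-\lambda s}\to 1$ replacements, the integration-by-parts derivation of $I(\mu)=-2^{-(1+\mu/2)}\Gamma(-\mu/2)$ which the paper only states, and keeping the $g^{\alpha}\sim\lambda^{\alpha}$ term in the $K$-sum (negligible since $\alpha>1$) rather than linearizing $(1-g)^K\approx 1-Kg$, whose error involves $\mathbb{E}[K^2]=\infty$ — but these are refinements of the same argument, not a different one.
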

\begin{proof}
  As for the previous lemma we start by computing
  \begin{align}
    \label{eq:cor2-b}
    g(\lambda,q=\theta \lambda^{1/\mu}) &= \int_{0}^{\infty}\dint s~P(s)\left(1-\exp\left(-\frac{\theta^2 \lambda^{\frac{2}{\mu}}s^2}{2} - \lambda s\right)\right)  \nonumber \\
                                        &\text{[using the linear approximation of $-e^{-x}$]}  \nonumber \\
                                        &= \int_{0}^{\infty}\dint s~P(s)\left(1-\exp\left(-\frac{\theta^2 \lambda^{\frac{2}{\mu}}s^2}{2}\right) +\lambda s \right)+\mathcal{O}(\lambda)  \nonumber \\
                                        &= \lambda \mathbb{E}[s] + \int_{0}^{\infty} \dint s~\frac{\mu s_0^\mu}{s^{1+\mu}} \left(1-\exp\left(-\frac{\theta^2 \lambda^{\frac{2}{\mu}}s^2}{2}\right)\right)+\mathcal{O}(\lambda)  \nonumber \\
                                        &= \lambda \left(\mathbb{E}[s] + \mu s_0^\mu \theta^\mu \int_{0}^{\infty} \frac{\dint t}{t^{1+\mu}}\left(1-e^{-\frac{t^2}{2}}\right) \right)+\mathcal{O}(\lambda) \nonumber \\
                                        &= \lambda \left(\mathbb{E}[s]+\mu(s_0\theta)^{\mu}I(\mu)\right) +\mathcal{O}(\lambda)\;\;.
  \end{align}
  Plugging \eqref{eq:cor2-b} in Equation~\eqref{eq:phat_eval} one gets
  \begin{align}
    \label{eq:target}
    \widehat{P}(\lambda,q=\theta\lambda^{1/\mu}) & = \sum_{K>1} p(K) [1-\lambda \left(\mathbb{E}[s]+\mu(s_0\theta)^{\mu}I(\mu)\right)+\mathcal{O}(\lambda)]^K \nonumber \\
                                                 &\text{[approximating $(1-x)^K$ with $1-Kx$ and evaluating the sum with its integral representation]} \nonumber\\
                                                 & = 1- \lambda\mathbb{E}[K]\left(\mathbb{E}[s]+\mu(s_0\theta)^{\mu}I(\mu)\right)+\mathcal{O}(\lambda)\;\;.
  \end{align}
  
\end{proof}

Finally, the second Lemma states that the second derivative of the
Fourier-Laplace transform
$\frac{\partial^2 \hat{P}(\lambda,q)}{\partial q^2}$, evaluated at
$q=0$, is proportional to $\lambda^{\mu-2}$. More generally,
$\frac{\partial^{2\ell} \hat{P}(\lambda,q)}{\partial q^{2\ell}}\sim
\lambda^{\mu-2\ell}$ at $q=0$ and as $\lambda\to 0$.
\begin{lemma}
  \label{lemma:2}
  When $1<\mu<2$,
  $-\dfrac{\partial^2 \hat{P}(\lambda,q)}{\partial q^2}$ evaluated at
  $q=0$ behaves as $\lambda^{\mu-2}$, that is
  \begin{equation}
    \label{eq:deriv2}
    \left.\dfrac{\partial^2 \hat{P}(\lambda,q)}{\partial q^2} \right|_{q=0} \sim \lambda^{\mu-2}.
  \end{equation}
\end{lemma}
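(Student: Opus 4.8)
The plan is to differentiate the series representation of $\widehat{P}(\lambda,q)$ from Lemma~\ref{lemma:1} directly with respect to $q$, evaluate at $q=0$, and then extract the leading small-$\lambda$ behaviour using the Laplace-transform asymptotics already established for the single-granularity model. Writing $g(\lambda,q)=\int_0^\infty\dint s\,P(s)\bigl[1-e^{-q^2s^2/2-\lambda s}\bigr]$, the first observation is that $g$ is even in $q$, so $\partial_q g|_{q=0}=0$. Differentiating $[1-g(\lambda,q)]^K$ twice therefore leaves only one surviving term at $q=0$, namely
\[
\left.\partial_q^2[1-g]^K\right|_{q=0}=-K\,[1-g(\lambda,0)]^{K-1}\,\partial_q^2 g(\lambda,0),
\]
since the term proportional to $(\partial_q g)^2$ vanishes. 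Summing over $K$ then gives
\[
\left.-\partial_q^2\widehat{P}(\lambda,q)\right|_{q=0}=\partial_q^2 g(\lambda,0)\sum_{K>1}p(K)\,K\,[1-g(\lambda,0)]^{K-1}.
\]

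Next I would evaluate the two factors separately. A direct computation gives $\partial_q^2 g(\lambda,0)=\int_0^\infty\dint s\,P(s)\,s^2 e^{-\lambda s}=\mathbb{E}[s^2e^{-\lambda s}]$, which by Corollary~\ref{cor:laplace_moments} with $k=2$ behaves as $\mu\Gamma[2-\mu]\lambda^{\mu-2}$ to leading order as $\lambda\to 0$ (the constant term $\mu/(\mu-2)$ is subdominant because $\mu-2<0$). For the $K$-sum, note that $1-g(\lambda,0)=\mathbb{E}[e^{-\lambda s}]=g(\lambda)$ in the notation of Lemma~\ref{lem:power_law_laplace}, so $[1-g(\lambda,0)]^{K-1}\approx e^{-\lambda\mathbb{E}[s]K}$ for small $\lambda$, exactly as in Eq.~\eqref{eq:H_avg_final}. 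The sum thus becomes $\sum_{K>1}p(K)\,K\,e^{-\lambda\mathbb{E}[s]K}$, which I would evaluate via its integral representation and recognise as $-\tfrac{\dint}{\dint c}\mathbb{E}[e^{-cK}]$ at $c=\lambda\mathbb{E}[s]$.

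The key point is that because $1<\alpha$ the mean $\mathbb{E}[K]$ is finite, so this sum converges to the constant $\mathbb{E}[K]$ as $\lambda\to 0$; using the Pareto Laplace expansion $\mathbb{E}[e^{-cK}]=1-c\,\mathbb{E}[K]-B\,c^\alpha+\dots$ one checks that the corrections are of relative order $\lambda^{\alpha-1}$, hence subdominant. Combining the two factors yields
\[
\left.-\partial_q^2\widehat{P}(\lambda,q)\right|_{q=0}\approx \mu\Gamma[2-\mu]\,\mathbb{E}[K]\,\lambda^{\mu-2},
\]
which is precisely the claimed scaling $\lambda^{\mu-2}$. I expect the main obstacle to be the rigorous control of the approximations $(1-g)^{K}\approx e^{-Kg}$ and the interchange of the $K$-sum with the small-$\lambda$ limit, i.e. verifying that the large-$K$ tail of $p(K)$, where the exponential cutoff enters, does not spoil the leading $\lambda^{\mu-2}$ term and only produces the subleading $\lambda^{\alpha-1}$ correction. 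The same manipulations, differentiating $2\ell$ times and invoking $\mathbb{E}[s^{2\ell}e^{-\lambda s}]\sim\lambda^{\mu-2\ell}$, immediately give the stated generalisation $\partial_q^{2\ell}\widehat{P}|_{q=0}\sim\lambda^{\mu-2\ell}$.
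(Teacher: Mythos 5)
Your proof is correct and follows essentially the same route as the paper's: differentiate the representation from Lemma~\ref{lemma:1} twice in $q$ at $q=0$, factor the result into $\mathbb{E}[s^2 e^{-\lambda s}] \approx \mu\Gamma[2-\mu]\,\lambda^{\mu-2}$ times a sum over $K$, and argue that this sum remains finite as $\lambda \to 0$. If anything, your treatment of the $K$-sum is cleaner than the paper's: you correctly identify its limit as $\mathbb{E}[K]$ (finite because $\alpha>1$) with corrections of relative order $\lambda^{\alpha-1}$, whereas the paper's shorthand $e^{-Kg(\lambda,0)}\approx e^{-K}$ is loose (since $g(\lambda,0)\to 0$) and merely asserts that the sum is a finite constant.
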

\begin{proof}
  Plug Equation~\eqref{eq:phat_eval} in $-\dfrac{\partial^2
    \hat{P}(\lambda,q)}{\partial q^2}$ to obtain
  \begin{align*}
    \left.-\dfrac{\partial^2 \hat{P}(\lambda,q)}{\partial q^2}\right|_{q=0}&=
    2  \left.\dfrac{\partial}{\partial (q^2)} \sum_{K>1} p(K)[1-g(\lambda,q)]^K\right|_{q=0}\\
    &\text{[using the exponential approximation]} \nonumber\\
    &=-2 \left.\dfrac{\partial}{\partial (q^2)} \sum_{K>1} p(K) \exp(-g(\lambda,q)K)\right|_{q=0}\\
    &=2 \left.\dfrac{\partial g(\lambda,q)}{\partial (q^2)}\right|_{q=0} \sum_{K>1} Kp(K) \exp(-Kg(\lambda,0))\;\;.\\
  \end{align*}
  Computing the derivative at $q=0$ gives
  \begin{align*}
  &\left.2\dfrac{\partial g(\lambda,q)}{\partial (q^2)}\right|_{q=0} 
   =\int_0^{\infty}\dint s\;P(s) s^2 e^{-\lambda s}  
    \approx \int_0^{\infty} \dint s~ s_0^\mu s^{1-\mu} e^{-\lambda s} \underset{\lambda \to 0}{\approx} s_0^\mu \frac{\Gamma\left[2-\mu\right]}{\lambda^{2-\mu}}\;\;.
  \end{align*}
  Thus we find, for $\lambda \to 0$ 
  \begin{align*}
    \left.-\dfrac{\partial^2 \hat{P}(\lambda,q)}{\partial q^2}\right|_{q=0}&\sim s_0^\mu \frac{\Gamma\left[2-\mu\right]}{\lambda^{2-\mu}} \sum_{K>1} Kp(K) \exp(-Kg(\lambda,0)) \\
    &\sim s_0^\mu \frac{\Gamma\left[2-\mu\right]}{\lambda^{2-\mu}} \sum_{K>1} Kp(K) e^{-K} \sim \lambda^{\mu-2}\;\;,
  \end{align*}
  where we've approximated $e^{-Kg(\lambda,0)}\approx e^{-K}$ and used
  that $\sum_{K>1}Kp(K)e^{-K}$ is a finite constant.
\end{proof}

\subsection{Proof of Proposition \ref{prop:fsd}.}
\label{app:fsd}
\setcounter{proposition}{2}

\begin{proposition}
  The firm size distribution behaves, for large $S$, as the sum of the
  two Pareto distributions
  \begin{equation} 
    P(S_i) \sim \frac{C_\alpha}{S_i^{1+\alpha}} + \frac{C_\mu}{S_i^{1+\mu}},
  \end{equation}
  where $C_\alpha = \left(\frac{\mu}{\mu-1}\right)^\alpha$ and
  $C_\mu=\frac{\alpha}{\alpha-1}$.
\end{proposition}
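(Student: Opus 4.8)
The plan is to extract the large-$S$ tail of the marginal density $P(S)$ from the small-$\lambda$ behaviour of its Laplace transform, which is exactly the object $\widehat P(\lambda,0)$ already computed in the preliminaries. First I would observe that setting $q=0$ in the Fourier--Laplace transform \eqref{eq:FLt_def} collapses it to the Laplace transform of the marginal, $\widehat P(\lambda,0)=\int_0^\infty e^{-\lambda S}P(S)\,\mathrm dS$. Corollary~\ref{coro:1} then supplies its expansion as $\lambda\to 0$, namely $\widehat P(\lambda,0)=1-\lambda\,\mathbb E[s]\mathbb E[K]-\Gamma[1-\alpha]\,\mathbb E[s]^{\alpha}\lambda^{\alpha}-\Gamma[1-\mu]\,\mathbb E[K]\,\lambda^{\mu}+\dots$, in which the analytic term $-\lambda\,\mathbb E[s]\mathbb E[K]$ merely fixes the (finite, since $\alpha>1$) mean $\mathbb E[S]=\mathbb E[s]\mathbb E[K]$ and carries no tail information, so that the tail is governed entirely by the two non-integer powers of $\lambda$.

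The key step is the Tauberian correspondence: a non-integer power $\lambda^{\gamma}$ with $1<\gamma<2$ in the Laplace transform is the signature of a density tail $\propto S^{-1-\gamma}$, and its amplitude is fixed by matching against the single-Pareto reference of Lemma~\ref{lem:power_law_laplace}, where a density $\gamma\,s^{-1-\gamma}$ produces precisely the coefficient $-\Gamma[1-\gamma]$ of $\lambda^{\gamma}$. Applying this termwise, the $\lambda^{\alpha}$ and $\lambda^{\mu}$ contributions translate into the two Pareto tails $S^{-1-\alpha}$ and $S^{-1-\mu}$; reading off the prefactors yields the complementary-CDF amplitudes $\mathbb E[s]^{\alpha}=\big(\tfrac{\mu}{\mu-1}\big)^{\alpha}=C_\alpha$ (taking $s_0=1$ as before) and $\mathbb E[K]=\tfrac{\alpha}{\alpha-1}=C_\mu$, with density prefactors $\alpha C_\alpha$ and $\mu C_\mu$. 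Since $\alpha<\mu$, the first term dominates asymptotically, which is the claim. As a probabilistic cross-check I would include the single-big-jump picture: a firm reaches large size either because $K$ is large (then $S\approx\mathbb E[s]K$ by the law of large numbers, so $P(S>x)\approx P(K>x/\mathbb E[s])\sim\mathbb E[s]^{\alpha}x^{-\alpha}$, inheriting the tail of $P(K)$), or because one sub-unit is anomalously large (then $P(S>x)\approx\mathbb E[K]\,x^{-\mu}$, inheriting the tail of $P(s)$); these reproduce $C_\alpha$ and $C_\mu$ independently.

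The main obstacle is making the Tauberian inversion clean rather than merely formal. Because $1<\alpha<\mu<2$, after subtracting the analytic linear term one must verify that the two leading singular powers $\lambda^{\alpha},\lambda^{\mu}$ are genuinely isolated from the subleading corrections generated when the outer Pareto exponent $\alpha$ acts on the inner expansion $\upsilon=\mathbb E[s]\lambda+\Gamma[1-\mu]\lambda^{\mu}+\dots$ implicit in Corollary~\ref{coro:1}: the first such correction sits at order $\lambda^{\alpha+\mu-1}$, which exceeds $\mu$ precisely because $\alpha>1$, so the $\lambda^{\mu}$ coefficient is well defined and not contaminated by the error term. One then invokes a Tauberian theorem valid for transforms of the form $1-a\lambda-b\lambda^{\gamma}$ with $\gamma\in(1,2)$ (Feller/Karamata, applied after removing the first moment) to promote the termwise reading of amplitudes into a rigorous statement about the density tail, and finally confirms that the two independent singularities superpose additively. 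A secondary bookkeeping point is to remain consistent about whether $C_\alpha,C_\mu$ denote survival-function or density amplitudes, since the two differ by the factors $\alpha$ and $\mu$ respectively.
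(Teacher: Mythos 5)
Your proof is correct and follows essentially the same route as the paper's: set $q=0$ to reduce the Fourier--Laplace transform to the Laplace transform of $P(S)$, invoke Corollary~\ref{coro:1} for its small-$\lambda$ expansion, and identify the singular powers $\lambda^{\alpha}$ and $\lambda^{\mu}$ with Pareto tails via Lemma~\ref{lem:power_law_laplace} together with uniqueness of the inverse Laplace transform. Your additional care --- checking that the cross-term at order $\lambda^{\alpha+\mu-1}$ does not contaminate the $\lambda^{\mu}$ coefficient, and flagging that the stated $C_\alpha$, $C_\mu$ are survival-function amplitudes (differing from density prefactors by factors $\alpha$ and $\mu$) --- only tightens steps the paper's own proof leaves implicit.
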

\begin{proof}
  This is a direct consequence of Corollary~\ref{coro:1}. Since
  $\widehat{P}(\lambda,0)$ is the Laplace transform of the firm size
  distribution $P(S)$, for sufficiently large $S$ its behavior is
  described by Equation~\eqref{eq:cor1}. Gathering the terms
  proportional to $\lambda^{\alpha}\Gamma\left[-\alpha+1\right]$ and
  $\lambda^\mu \Gamma\left[-\mu+1\right]$ allows us to find the
  asymptotic behaviour described above and the relative contributions
  of each component by identifying the Laplace transforms of power-law
  tails. Since the inverse of the Laplace transform of a probability
  distribution is unique, Proposition~\ref{prop:fsd} follows. Noting
  that the last line implies that the firm size distribution is
  asymptotically proportional to

  \begin{equation}
    \label{eq:firm_size_etc}
    P(s)\sim \mathbb{E}[s]^\alpha S^{-1-\alpha} + \mathbb{E}[K]S^{-1-\mu},
  \end{equation}
  by getting the terms proportional to
  $\lambda^{\alpha}\Gamma\left[-\alpha+1\right]$ and
  $\lambda^\mu \Gamma\left[-\mu+1\right]$ in the expansion above, we
  may also write this as
  \begin{equation}
    \label{eq:firm_size_contributions}
    P(s) \sim S^{-1-\alpha}\left(1+C S^{\alpha-\mu}\right),
  \end{equation} 
  where $C$ is a constant. 
\end{proof}

\subsection{Proof of Proposition~\ref{prop:vol_dist}}
\label{app:vol_dist}

\begin{proposition}
  The distribution of growth rate volatilities conditional on size $S$ is given, for large $S$, by:
  \begin{equation}
    P(\sigma \vert S) \approx
    \frac{1-\pi(S)}{\overline{\sigma}(S)}
    G\left(\frac{\sigma}{\overline{\sigma}(S)}\right)\left(1-\frac{\sigma}{\sigma_0}\right)^{\mu-1}
    +\pi(S)H(\sigma),
  \end{equation}
  where $G(\cdot)$ is defined in Eq. \eqref{eq:Herf2},
  $\overline{\sigma} \sim S^{-\beta}$ is given by
  Eq. \eqref{eq:sigma_scal} with $\beta=(\mu-1)/\mu$,
  $\pi(S)\sim S^{\alpha-\mu}$ and finally $H(\cdot)$ is a contribution
  peaked at $\sigma \approx \sigma_0$. In particular,
  $G(x)\sim x^{-1-\mu}$.
\end{proposition}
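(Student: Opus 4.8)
The plan is to condition on the number of sub-units $K$ and to exploit the fact that, given a large size $S$, the variable $K$ is effectively drawn from a \emph{bimodal} law that separates the two distinct mechanisms capable of producing a firm of size $S$. Writing
\[
P(\sigma \vert S) = \int \dint K~ P(\sigma \vert K, S)\, P(K\vert S),
\]
I would first establish the structure of $P(K\vert S) \propto P(S\vert K)\,P(K)$ using the same one-big-jump dichotomy already employed in the proof of Proposition~\ref{prop:fsd}. For $K$ iid Pareto-$\mu$ sub-units, the conditional law $P(S \vert K)$ has a ``bulk'' part concentrated around $S \approx K\bar{s}$ (the regime $K \approx S/\bar{s}$) and a ``single dominant sub-unit'' tail $P(S\vert K)\sim \mu K\,S^{-1-\mu}$ valid when $S \gg K\bar{s}$. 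Combined with $P(K)\sim \alpha K^{-1-\alpha}$, this makes $P(K\vert S)$ split into a peak near $K \approx S/\bar{s}$, carrying almost all the mass, and a small-$K$ contribution whose integrated weight is precisely the ratio of the two terms of Proposition~\ref{prop:fsd}, namely $\pi(S) = (C_\mu/C_\alpha)\,S^{\alpha-\mu}$.

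For the dominant peak ($K \approx S/\bar{s}$, weight $1-\pi(S)$), conditioning on $S$ amounts to conditioning on a \emph{typical} sum of $K$ sub-units, so the sub-unit sizes behave exactly as in the single-granularity model and I may invoke Proposition~\ref{prop:herfindahl_gabaix} directly. Performing the change of variables $\sigma=\sigma_0\sqrt{\mathcal{H}}$ (so that $\dint\mathcal{H}=2\sigma\sigma_0^{-2}\dint\sigma$) reproduces precisely Eq.~\eqref{eq:Herf2}, with $G(x)=2xF(x^2)$ and, since $F(x)\sim x^{-1-\mu/2}$, the announced tail $G(x)\sim x^{-1-\mu}$. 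The scale $\bar{\sigma}(S)=\sigma_0\sqrt{\mathcal{H}_{\mathrm{typ}}}\sim S^{(1-\mu)/\mu}=S^{-\beta}$ then follows from $\mathcal{H}_{\mathrm{typ}}\sim K^{2(1-\mu)/\mu}$ together with $K\approx S/\bar{s}$, which is exactly the first term of the claimed expression.

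For the small-$K$ contribution (weight $\pi(S)$), a firm of large size $S$ made of only a handful of sub-units must concentrate its size in one or two of them, forcing $\mathcal{H}\to 1$ and hence $\sigma\to\sigma_0$; this produces the second term $\pi(S)H(\sigma)$ with $H$ peaked at $\sigma\approx\sigma_0$. Since the proposition asserts only that $H$ is peaked there, its precise shape need not be computed. The main obstacle is the rigorous justification of the bimodal decomposition of $P(K\vert S)$: in particular, controlling the subexponential ``single big jump'' tail of the sum of heavy-tailed Pareto sizes (with $1<\mu<2$) so as to pin down \emph{both} the weight $\pi(S)\sim S^{\alpha-\mu}$ \emph{and} the concentration of $\mathcal{H}$ near $1$ in the small-$K$ regime. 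This large-deviation estimate for heavy-tailed sums carries the argument, everything else reducing to a change of variables together with the already-established single-granularity results.
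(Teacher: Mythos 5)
Your proposal is correct and follows essentially the same route as the paper's own proof: the paper likewise splits firms of size $S$ into a well-diversified population (weight $1-\pi(S)$, with $S\approx K\bar{s}$, whose Herfindahl obeys the single-granularity law of Proposition~\ref{prop:herfindahl_gabaix}, giving the $G$ term via $\sigma=\sigma_0\sqrt{\mathcal{H}}$) and a few-sub-unit population (weight $\pi(S)\sim S^{\alpha-\mu}$ read off the two tails of Proposition~\ref{prop:fsd}, with $\mathcal{H}\approx 1$ giving $H(\sigma)$ peaked at $\sigma_0$). The only difference is presentational: you make the Bayes decomposition $P(K\vert S)\propto P(S\vert K)P(K)$ and the single-big-jump justification explicit, whereas the paper imports these weights directly from Proposition~\ref{prop:fsd}, whose Laplace-transform proof encodes the same dichotomy.
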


\begin{proof}
  Following Proposition~\ref{prop:fsd}, firms of size $S$ are made up
  of two distinct fractions: those for which
  $S\approx K \mathbb{E}[s]$, which corresponds to the $S^{-1-\alpha}$
  tail of the distribution with the exponent $\alpha$ indicating that
  firms in that tail are large because they have a large number of
  sub-units, and those firms for which $K=\mathcal{O}(1)$ but are
  still large because one single sub-unit is large, i.e.
  $S\approx s_{\max}$, thereby contributing with a tail with weight
  $S^{-1-\mu}$. The latter firms have $\mathcal{H}\approx 1$.

  Because of the relative contributions of the tails, the fraction of
  poorly diversified firms is $\pi(S)\sim S^{\alpha-\mu}$, as
  highlighted above. We may therefore write that

  \begin{equation}
    P(\sigma |S) = (1-\pi(S)) P(\sigma |S, \text{many subunits}) + \pi(S) P(\sigma |S, \text{few subunits}).
  \end{equation}
\end{proof}

Because poorly diversified firms have $K$ of order $1$ and are
concentrated in a single sub-unit, their Herfindahl is of order $1$
and so $P(\sigma|S, \text{poorly diversified}) = H(\sigma)$ is a
distribution peaked at $\sigma_0$ for all $S$. Regarding the
contribution of firms with $S\propto K$, their Herfindahl is
distributed as described in Eq.~\eqref{eq:Herf2}. It follows that

\begin{equation}
  P(\sigma|S, \text{well diversified}) = \frac{1}{\overline{\sigma}(S)}G\left(\frac{\sigma}{\overline{\sigma}(S)}\right)\left(1-\frac{\sigma}{\sigma_0}\right)^{\mu-1},\quad G(x) = 2x F(x^2),
 \end{equation} 
 where $F(\cdot)$ is the function defined in
 Eq.~\eqref{eq:Herf}. Since $F(x)\sim x^{-1-\mu/2}$ it follows that
 $G(x)\sim x^{-1-\mu}$ for large $x$. Proposition~\eqref{prop:vol_dist}
 follows.

\subsection{Proof of Proposition~\ref{prop:vol_moments}.}
\label{app:vol_moments}

\begin{proposition}
    For $1 \leq \alpha < \mu$, the integer moments of the growth rate
  volatilities conditional to size $S$ are asymptotically given, for
  large $S$, by:
  \begin{equation}
    \label{eq-app:vol_moments}
    \mathbb{E}[\sigma^q|S] = C_1 S^{1-\mu} + C_2 S^{q\frac{1-\mu}{\mu}} + C_3 S^{\alpha-\mu} +  \mathcal{O}\left(S^{\min (\alpha-\mu, 1-\mu, q\frac{1-\mu}{\mu})}\right)\;\;,
  \end{equation}
  where $C_1$, $C_1$ and $C_3$ are numerical constants.\footnote{Note
    that for $\alpha = -1$, corresponding to Gabaix' model,
    Eq. \eqref{eq-app:vol_moments} precisely recovers
    Eq. \eqref{eq:gen_moments_body}, as it should.}
\end{proposition}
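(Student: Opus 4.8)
The plan is to compute the conditional moments by integrating $\sigma^q$ against the explicit two-component form of $P(\sigma\vert S)$ already established in Proposition~\ref{prop:vol_dist}, evaluating each component separately and recombining at the end while keeping track only of the powers of $S$. Writing
\begin{equation}
\mathbb{E}[\sigma^q\vert S]=\int_0^{\sigma_0}\sigma^q\,P(\sigma\vert S)\,\dint\sigma,
\end{equation}
and inserting Eq.~\eqref{eq:cond_vol}, the integral splits into a ``many sub-units'' contribution of weight $1-\pi(S)$ (the term built from $G$ and the truncation factor $(1-\sigma/\sigma_0)^{\mu-1}$) and a ``few sub-units'' contribution of weight $\pi(S)$ (the term $H(\sigma)$). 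I would handle these two pieces independently.

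For the many-sub-units component, the key observation is that for these firms $\sigma=\sigma_0\sqrt{\mathcal{H}}$ and $K\approx S/\overline{s}$ is proportional to $S$ by Assumption~\ref{hp:su_n_gabaix}, so that $\mathbb{E}[\sigma^q\vert S,\text{many}]=\sigma_0^q\,\mathbb{E}[\mathcal{H}^{q/2}\vert K]$ with $K\propto S$. Applying Proposition~\ref{prop:herfindahl_gabaix_moments} (valid for all positive exponents) with $q/2$ in place of $q$ immediately yields the two contributions $C_1 S^{1-\mu}$ and $C_2 S^{q(1-\mu)/\mu}$, the former originating from the tail $\mathcal{H}\approx 1$ of the numerous-but-concentrated firms and the latter from the typical value $\mathcal{H}\approx\mathcal{H}_{\text{typ}}$. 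Equivalently, and as a cross-check, one can integrate the scaling form of $G$ directly: substituting $x=\sigma/\overline{\sigma}(S)$ with $G(x)\sim x^{-1-\mu}$ from Eq.~\eqref{eq:Herf2} and $\overline{\sigma}(S)\sim S^{-(\mu-1)/\mu}$ from Eq.~\eqref{eq:sigma_scal} reproduces the same split between the bulk of the master curve (giving $\overline{\sigma}(S)^q\sim S^{q(1-\mu)/\mu}$) and its power-law tail, cut off by the factor $(1-\sigma/\sigma_0)^{\mu-1}$ at $\sigma_0$ (giving the $q$-independent, $\sigma_0^q$-weighted term $S^{1-\mu}$). This is the step I expect to require the most care, since one must track which region of the integral dominates for each $q$ and confirm that the truncation at $\sigma_0$ produces precisely the $q$-independent $S^{1-\mu}$ term; invoking Proposition~\ref{prop:herfindahl_gabaix_moments} is exactly what lets me bypass redoing the incomplete-Beta asymptotics of Eq.~\eqref{eq_app:h_change_vars}.

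The few-sub-units component is straightforward. Since $H(\cdot)$ is peaked at $\sigma\approx\sigma_0$ independently of $S$, its moment $\int\sigma^q H(\sigma)\,\dint\sigma$ is a finite, $S$-independent constant of order $\sigma_0^q$, so this piece contributes $\pi(S)\times\text{const}\sim S^{\alpha-\mu}$, which is the new term $C_3 S^{\alpha-\mu}$ absent from the single-granularity model.

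Finally I would recombine the two pieces. Since $\pi(S)\sim S^{\alpha-\mu}\to 0$, the prefactor $1-\pi(S)$ may be replaced by $1$, and the two induced cross terms $\pi(S)\,S^{1-\mu}$ and $\pi(S)\,S^{q(1-\mu)/\mu}$ are each strictly smaller (by the vanishing factor $\pi(S)$) than the corresponding leading contributions $S^{1-\mu}$ and $S^{q(1-\mu)/\mu}$, and are therefore absorbed into the remainder. Collecting the three surviving powers gives Eq.~\eqref{eq-app:vol_moments}. As a sanity check on the dominant $q=2$ behaviour, Lemma~\ref{lemma:2} gives $-\partial_q^2\widehat{P}(\lambda,q)\vert_{q=0}\sim\lambda^{\mu-2}$, i.e. $P(S)\,\mathbb{E}[R^2\vert S]\sim S^{1-\mu}$; dividing by the dominant tail $P(S)\sim S^{-1-\alpha}$ of Proposition~\ref{prop:fsd} and by $S^2$ recovers $\mathbb{E}[\sigma^2\vert S]=\mathbb{E}[R^2\vert S]/S^2\sim S^{\alpha-\mu}$, consistent with the $C_3$ term dominating for $q\ge 2$. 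The formal limit $\alpha\to -1$, in which $\pi(S)\sim S^{-1-\mu}$ and the third term drops below the others, then recovers the single-granularity expression Eq.~\eqref{eq:gen_moments_body}, as noted in the footnote.
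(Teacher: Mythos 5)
Your proposal is correct and follows essentially the same route as the paper's own proof: the same split of $\mathbb{E}[\sigma^q|S]$ into the $(1-\pi(S))$ ``many sub-units'' piece (handled by applying the Herfindahl moment result with $q\to q/2$ and $K\propto S$, giving $C_1S^{1-\mu}+C_2S^{q(1-\mu)/\mu}$) and the $\pi(S)$ ``few sub-units'' piece peaked at $\sigma_0$ (giving $C_3S^{\alpha-\mu}$), followed by the same Fourier--Laplace/Lemma~\ref{lemma:2} cross-check that $\mathbb{E}[\sigma^2|S]\sim S^{\alpha-\mu}$. Nothing essential is missing.
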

\begin{proof}
  As before, we may write this as
  \begin{equation}
    \mathbb{E}\left[\sigma^q | S\right] = (1-\pi(s)) \int \dint \sigma ~ \sigma^q P(\sigma | S, \text{many subunits}) +  \pi(S)\int \dint \sigma~\sigma^q P(\sigma| S, \text{few subunits}), 
  \end{equation}
  and where the second contribution corresponds to firms with
  $\mathcal{H}\approx 1$. The second contribution will therefore
  always give
  \begin{equation}
    \pi(S)\int \dint \sigma~\sigma^q P(\sigma| S, \text{poorly diversified}) \approx \pi(S) \sim S^{\alpha-\mu}.
  \end{equation}
  The case of firms with many subunits is slightly different. Noting
  that $\sigma \propto \mathcal{H}^{1/2}$, we use the result given in
  Proposition~\ref{prop:herfindahl_gabaix}, substitute $q\to q/2$ and
  use $K\propto S$ to obtain
  \begin{equation}
    \label{eq:many_subunits_q}
    \begin{split}
      (1-\pi(s)) \int \dint \sigma ~ \sigma^q P(\sigma | S, \text{many subunits}) \approx& \int \dint \sigma ~ \sigma^q P(\sigma | S, \text{many subunits})\\
      \approx & C_1 S^{1-\mu} + C_2 S^{q\beta } +\mathcal{O}\left(S^{\min\left(1-\mu, q\beta\right)}\right).
    \end{split}
  \end{equation}
  Finally, we have that 
  \begin{equation}
    \mathbb{E}[\sigma^q|S] = C_1 S^{1-\mu} + C_2 S^{q\beta } + C_3 S^{\alpha-\mu} + \mathcal{O}\left(S^{\min (\alpha-\mu, 1-\mu, q\beta)}\right).
  \end{equation}
  The fact that the dominant term is $\propto S^{\alpha-\mu}$, as
  predicted by the equation above, can also be proven from
  Equation~\eqref{eq:FLt_deriv}. Consider the term
  $\int \dint R~ R^2 P(R|S)$, which represents the conditional
  variance of a firm's absolute growth, $\mathbb{V}[R|S]$. We will
  surmise that $\mathbb{V}[R|S]\propto S^b$ for some $b$ and find the
  value of $b$ such that this integral scales as $\lambda^{\mu-2}$ as
  stated in Lemma~\ref{lemma:2}. Doing this substitution in
  Equation~\eqref{eq:FLt_deriv} leads, when $b > \alpha$, to
  \begin{align*}
    -\left.\dfrac{\partial^2 \hat{P}(\lambda,q)}{\partial q^2} \right|_{q=0} &\propto
     \int_0^\infty \dint S~ S^{-(\alpha+1)}  ~S^b ~\exp(-\lambda S) = \Gamma\left[b-\alpha\right]\lambda^{\alpha-b}\;\;,
  \end{align*}
  which combined with Equation~\eqref{eq:deriv2} in Lemma~\ref{lemma:2} implies that
  \begin{align*}
    \lambda^{\mu-2} \propto \lambda^{\alpha-b} \qquad \Rightarrow \qquad b=\alpha-\mu+2\;\;,
  \end{align*}
  which is larger than $\alpha$ when $\mu < 2$.
  This proves that $\mathbb{V}[R|S] \propto S^{\alpha-\mu+2}$ and gives directly
  \begin{align*}
      \mathbb{E}[\sigma^2 |S] = \frac{\mathbb{E}\left[R^2|S\right]}{S^2} \propto S^{\alpha-\mu}\;\;,
  \end{align*}
as wanted.
\end{proof}

\subsection{Proof of Proposition~\ref{prop:gmm_levy}.}
\label{app:gmm_levy}
\begin{proposition}
  The distribution of growth rates conditional on size $S$ and on
  growth volatility $\sigma$ is given, for large $S$, by:
  \begin{equation}
    P_S(g \vert \sigma,S) \approx \left(1 - \pi(S)\right)\;\mathcal{N}(0,\sigma^2) +  \pi(S)\;Q_{\eta} \;\;,
  \end{equation}
  where $\mathcal{N}(0,\sigma^2)$ is a Gaussian distribution with variance
  $\sigma^2=\sigma^2_0\mathcal{H}$ and $Q_{\eta}$ a non universal
  distribution that depends on the distribution of the sub-unit growth
  shocks $\eta$. The weight $\pi(S)$ represents the probability of
  observing a large firm with only few sub-units vanishing when size
  grows larger as $S^{\alpha-\mu}$.

  Neglecting large firms with a small number of sub-units and
  integrating Eq.~\eqref{eq:WB_GMM} over the first term of the
  distribution $P(\sigma|S)$ in Eq.~\eqref{eq:cond_vol} gives
  \begin{equation}
    P\left(g|S\right) \sim
    S^{\beta}L_{\mu, 0}(gS^{\beta}),\;\;\; \text{when} \;\; g \ll 1,
  \end{equation}
  where $L_{\mu,0}(\cdot)$ is the symmetric L\'evy alpha-stable
  probability density with stability parameter
  $1 < \mu < 2$. Because of the cut-off in the distribution of $\sigma$, this distribution also has a cut-off, with $P(g|S)=0$ for $g\gtrapprox S$. The complete distribution $P(g)$ is obtained by integrating over $P(s)$, and behaves asymptotically as $P(g)\sim \vert g\vert^{-1-\mu}$.
\end{proposition}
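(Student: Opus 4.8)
The plan is to prove the three assertions in turn, leaning on the two-population structure of Proposition~\ref{prop:fsd} and on the Fourier--Laplace machinery of Corollaries~\ref{coro:1} and~\ref{coro:2}.

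First, for the conditional law $P_S(g\vert\sigma,S)$ in Eq.~\eqref{eq:WB_GMM}, I would start from the exact representation $g=\frac{1}{S}\sum_{j=1}^{K}s_j\eta_j$ of Eq.~\eqref{eq:growth-def}, so that, conditionally on the sub-unit sizes, $g$ is a weighted sum of i.i.d.\ unit-variance shocks with variance $\sigma_0^2\sum_j(s_j/S)^2=\sigma_0^2\mathcal H=\sigma^2$. Proposition~\ref{prop:fsd} splits firms of a given large size into a fraction $1-\pi(S)$ with $K\sim S/\bar s\to\infty$ and a fraction $\pi(S)\sim S^{\alpha-\mu}$ with $K=\mathcal O(1)$. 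For the first group I invoke a Lindeberg-type CLT: when the largest normalized weight $\max_j(s_j/S)^2$ is negligible relative to $\mathcal H$, the standardized sum $g/\sigma$ converges to $\mathcal N(0,1)$; the quality of this approximation is non-uniform and is governed by the HHi itself, which is why Eq.~\eqref{eq:WB_GMM} carries an ``$\approx$''. For the second group the sum has $\mathcal O(1)$ terms dominated by a single weight close to unity, the CLT fails, and $g$ retains the non-universal law $Q_\eta$ of the shocks (it equals $\eta$ exactly when $K=1$). Weighting the two regimes by their population fractions yields Eq.~\eqref{eq:WB_GMM}.

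Second, for the Lévy form Eq.~\eqref{eq:pred2_gr}, I would work with the absolute growth $R=gS$ and use Corollary~\ref{coro:2}. The natural ansatz is that in the scaling region $R\propto S^{1/\mu}$ the conditional law is a symmetric $\mu$-stable with scale $\propto S^{1/\mu}$, i.e.\ its characteristic function is $\phi(q\vert S)=\exp(-a\,S\,\vert q\vert^{\mu})$ for some constant $a$. To verify this I substitute $P(S,R)=P(S)P(R\vert S)$ into the definition of $\widehat P$ and note that at $q=\theta\lambda^{1/\mu}$ the ansatz gives $\widehat P(\lambda,\theta\lambda^{1/\mu})=\int_0^\infty\dint S\,e^{-\lambda S}P(S)e^{-aS\theta^{\mu}\lambda}=\widehat P\big(\lambda(1+a\theta^{\mu}),0\big)$; expanding the right-hand side with Corollary~\ref{coro:1} and matching the $\theta^{\mu}$ term to Corollary~\ref{coro:2} fixes $a=\mu s_0^{\mu}I(\mu)/\mathbb E[s]$. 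Since the Fourier--Laplace transform determines the law uniquely, $P(R\vert S)$ is the claimed $\mu$-stable; passing from $R$ to $g=R/S$ turns the scale $S^{1/\mu}$ into $S^{1/\mu-1}=S^{-\beta}$, i.e.\ $P(g\vert S)\sim S^{\beta}L_{\mu,0}(gS^{\beta})$. Equivalently, the same result follows by integrating $\mathcal N(0,\sigma^2)$ against the first term of $P(\sigma\vert S)$ in Eq.~\eqref{eq:Herf2}, using the classical subordination fact that a Gaussian whose variance is a one-sided $\mu/2$-stable variable is symmetric $\mu$-stable---consistent with Proposition~\ref{prop:herfindahl_gabaix}, where $\sum_j s_j^2\propto S^2\mathcal H$ is $\mu/2$-stable. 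The cut-off at $g\gtrsim S$ (equivalently $\sigma\approx\sigma_0$) comes directly from the truncation of $P(\sigma\vert S)$ at $\sigma_0$: once a single sub-unit dominates, $\mathcal H\to1$ and the stable approximation gives way to the non-universal $\mathcal O(\sigma_0)$ regime.

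Third, for the unconditional tail $P(g)\sim\vert g\vert^{-1-\mu}$, I would first integrate the conditional volatility law over $P(S)$ to show that $P(\sigma)=\int\dint S\,P(S)P(\sigma\vert S)$ retains the power-law tail $\sigma^{-1-\mu}$ inherited from $G(x)\sim x^{-1-\mu}$, and then treat $P(g)=\int\dint\sigma\,\mathcal N(0,\sigma)P(\sigma)$ as a Gaussian scale mixture. Setting $u=g/\sigma$ shows $\int\dint\sigma\,\sigma^{-2-\mu}e^{-g^2/2\sigma^2}\propto\vert g\vert^{-1-\mu}\int u^{\mu}e^{-u^2/2}\dint u$, so the heavy-tailed scale dominates the thin Gaussian and the mixture inherits the exponent $-1-\mu$. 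I expect the delicate step to be the second one: justifying that matching only the leading $\theta^{\mu}$ coefficient in Corollary~\ref{coro:2} pins down the \emph{entire} conditional law $P(R\vert S)$ as exactly $\mu$-stable, rather than merely its leading tail. This requires controlling the scaling limit uniformly so the inverse transform of the matched scaling function is legitimate, and carefully separating the universal stable core (valid for $g\ll1$) from both the truncation and the non-universal $Q_\eta$ contribution, whose weight $\pi(S)\sim S^{\alpha-\mu}$ must be shown negligible where Eq.~\eqref{eq:pred2_gr} is asserted. The CLT non-uniformity of step one, with convergence speed governed by the HHi, is the other point needing care.
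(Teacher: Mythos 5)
Your proposal is correct and, for the two substantive claims, takes the same route as the paper: the same dichotomy between the $1-\pi(S)$ fraction of firms with $K\gg 1$ sub-units (CLT, Gaussian with variance $\sigma_0^2\mathcal{H}$) and the $\pi(S)\sim S^{\alpha-\mu}$ fraction with $K=\mathcal{O}(1)$ (non-universal $Q_\eta$), and the same L\'evy ansatz for $P(R\vert S)$ verified through the Fourier--Laplace transform, i.e.\ observing that the ansatz collapses $\widehat P(\lambda,\theta\lambda^{1/\mu})$ into $\widehat P\bigl(\lambda(1+a\theta^{\mu}),0\bigr)$ and matching against Corollaries~\ref{coro:1} and~\ref{coro:2}. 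You in fact go slightly beyond the paper by pinning down the constant $a=\mu s_0^{\mu}I(\mu)/\mathbb{E}[s]$ (the paper leaves $A$ implicit), by stating the Lindeberg condition that governs when the CLT applies, and by correctly tying the cut-off to $\sigma\to\sigma_0$. The only genuine divergence is the unconditional tail: you obtain $P(g)\sim\vert g\vert^{-1-\mu}$ by first forming $P(\sigma)=\int\dint S\,P(S)P(\sigma\vert S)$ and then computing a Gaussian scale mixture, whereas the paper integrates the L\'evy conditional directly, writing $P_>(g)=\int_g^\infty\dint g_1\int\dint S\,P(S)S^{\beta}L_{0,\mu}(g_1S^{\beta})$ and using $L_{0,\mu}(x)\sim x^{-1-\mu}$ together with the convergence of $\int\dint S\,P(S)S^{-\beta\mu}$ (which holds since $\beta\mu=\mu-1<1$). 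The two routes are equivalent via the subordination identity you yourself invoke (a Gaussian whose squared scale has a $\mu/2$-stable tail is symmetric $\mu$-stable), and yours requires the same convergence check in disguise, since $\bar\sigma(S)^{\mu}\sim S^{-\beta\mu}$; if you keep your version, make that check explicit as the paper does. Finally, the delicate step you flag---that matching only the leading $\theta^{\mu}$ coefficient identifies the whole conditional law---is present in the paper's proof in exactly the same form (the stable form is surmised and verified to leading order, with uniqueness of the transform invoked informally), so it is a shared limitation rather than a gap relative to the paper.
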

\begin{proof}
  Eq.~\eqref{eq:WB_GMM} states simply that, conditional on the
  size $S$, a fraction $1-\pi(S)$ of firms will be such that
  $S\propto K$ and will be made up of $K\gg 1$ sub-units. These firms
  will have growth rates that are the sum of $K\gg1$ sub-unit growth
  rates, resulting in a Gaussian density of variance $\sigma^2$. The
  second contribution corresponds to firms with $K=\mathcal{O}(1)$,
  and therefore the central limit theorem does not apply and the
  distribution is closer to that of the individual sub-units' growth
  rates $\eta_{ijt}$.

Obtaining the distribution of $P(g|S)$ is in principle possible by integrating directly over the volatility $\sigma$ with its probability distribution $P(\sigma)$ which has been discussed above. We proceed instead by studying the distribution $P(S,R)$ introduced above, finding first the distribution of $R|S$, the absolute size change conditional on size, which will then allow to find the distribution $P(g|S)$.

Writing $P(S,R)=P(S)P(R|S)$, using the definitions from~\eqref{eq:PSR_def}, we have that
  \begin{align}
    \label{eq:arrow}
    \widehat{P}(\lambda,q=\theta\lambda^{1/\mu})&=\int_0^{+\infty}\dint S \int\dint R~P(S)P(R|S)\exp(i\theta\lambda^{1/\mu}R-\lambda S)\;\;.
  \end{align}
  Next, we surmise that when
  $P(R|S)=S^{-1/\mu}L_{0,\mu}(RS^{-1/\mu})$, i.e. when $R|S$ is distributed according to a symmetric Lévy distribution with parameter $\mu$,
  Equation~\eqref{eq:arrow} has the same form as that found in
  Equation~\eqref{eq:cor2} of Corollary~\ref{coro:2}.

  Indeed, setting $\upsilon=RS^{-1/\mu}$ in \eqref{eq:arrow} we obtain
  \begin{align*}
    \widehat{P}(\lambda,q=\theta\lambda^{1/\mu})&=\int_0^{+\infty}\dint
    S~P(S) \int\dint
    \upsilon~L_{0,\mu}(\upsilon)\exp(i\theta\lambda^{1/\mu}S^{1/\mu}\upsilon-\lambda
    S) \\ &=\int_0^{+\infty}\dint S~P(S) \exp(-\lambda S)\int\dint
    \upsilon~L_{0,\mu}(\upsilon)\exp(i\theta\lambda^{1/\mu}S^{1/\mu}\upsilon)
  \end{align*}
  where one notes that the integral in $\upsilon$ is the
  characteristic function or Fourier transform of a symmetric
  L\'evy alpha-Stable distribution of parameter $\mu$, $L_{0,\mu}$,
  evaluated at $\theta \lambda^{1/\mu}S^{1/\mu}$.  This characteristic
  function behaves as $\int \mathrm{d}\upsilon~e^{\iu t
    \upsilon}\propto \exp(-At^{1/\mu})$, where $A$ is a
  constant. Applying this to the equation above leads to:
  \begin{align*}
    \widehat{P}(\lambda,q=\theta\lambda^{1/\mu})&=\int_0^{+\infty}\dint S~P(S) \exp(-\lambda S)~\exp(-A\theta^{\mu}\lambda S).
  \end{align*}
  Then
   \begin{align*}
    \widehat{P}(\lambda,q=\theta\lambda^{1/\mu})&=\int_0^{+\infty}\dint S~P(S) \int\dint \upsilon~L_{0,\mu}(\upsilon)\exp(i\theta\lambda^{1/\mu}S^{1/\mu}\upsilon-\lambda S) \\
    &=\int_0^{+\infty}\dint S~P(S) \exp(-\lambda S)\int\dint \upsilon~L_{0,\mu}(\upsilon)\exp(i\theta\lambda^{1/\mu}S^{1/\mu}\upsilon) \\
    &\text{[the integral in $\upsilon$ is the Fourier transform of $L_{0,\mu}$ which has an exponential form]} \nonumber\\
    &=\int_0^{+\infty}\dint S~P(S) \exp(-\lambda S)~\exp(-A\theta^{\mu}\lambda S) \\
    &\text{[where A is a constant]} \nonumber\\
    &=\int_0^{+\infty}\dint S~P(S) \exp(-\lambda S (1+A\theta^{\mu})) \\
    &=1-\int_0^{+\infty}\dint S~P(S) \left( 1-\exp(-\lambda S (1+A\theta^{\mu}))\right) \\
    &\text{[using twice the linear approximation of $-e^{-x} \sim x-1$]}\\
    &=1-\lambda\mathbb{E}\left[S\right] - A\theta^{\mu}\lambda \mathbb{E}\left[S\right] + \mathcal{O}(\lambda) \\
    &=1-\lambda\mathbb{E}\left[S\right](1+A\theta^{\mu}) + \mathcal{O}(\lambda) \;\;,
  \end{align*}
  which indeed matches Equation~\eqref{eq:cor2}.

  Now, this proves that $RS^{-1/\mu}$ is distributed according to a Lévy distribution with parameter $\mu$, and therefore

\begin{equation}
P(R|S) \sim S^{-1/\mu}L_{\mu,0}(RS^{-1/\mu}).
\end{equation}
since $R=gS$ and since $\beta=1-1/\mu$, this proves the first part of the proposition. The cut-off value at $g\sim 1$ results from the fact that the Herfindahl has a cut-off at $\mathcal{H}=1$.

For the second part, we first write that $P(g)=\int \dint S~P(S)P(g|S)$, and then consider the cumulative density function, $P_>(g):=\int_{g}^\infty \dint g_1~P(g_1)$. This yields

\begin{equation}\label{eq:g_proof}
\begin{split}
P_>(g)=& \int_g^\infty \dint g_1 \int \dint S~ P(S) S^{\beta}L_{0,\mu}(g_1 S^\beta)\\
=& \int \dint S~ P(S) S^{\beta} \int_{g}^\infty \dint g_1 ~L_{0,\mu}(g_1 S^{\beta})\\
&\text{(using that asymptotically, $L_{0,\mu}(x)\sim x^{-1-\mu}$)}\\
\sim& \int \dint S~ P(S) S^{\beta} S^{-\beta(1+\mu)}\int_{g}^\infty \dint g_1 g_1^{-1-\mu}\\
\sim& g^{-\mu} \int \dint S~P(S)S^{-\beta\mu}\sim g^{-\mu},
\end{split}
\end{equation}
where we have used the fact that $\beta \mu = \mu - 1 < 1$ so that the integral over $S$ is convergent for both small and large $S$. This shows that $P(g)$ is a mixture of Lévy alpha-stable distributions with a power-law tail with exponent $\mu$.

\end{proof}

\subsection{Proof of Proposition~\ref{pred:agg}}
\label{app:agg}

\begin{proposition}
  Proposition~\ref{prop:fsd}, \ref{prop:vol_dist},
  \ref{prop:vol_moments} and \ref{prop:gmm_levy} are robust against
  the additive aggregation of firms into (possibly fictitious)
  supra-firms.
\end{proposition}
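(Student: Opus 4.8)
The plan is to reduce the aggregation statement to the behaviour of the Fourier--Laplace transform $\widehat{P}(\lambda,q)$ under convolution, exploiting the fact that all four propositions were derived \emph{solely} from the small-$\lambda$ expansions of $\widehat{P}$ recorded in Corollary~\ref{coro:1} (at $q=0$), Corollary~\ref{coro:2} (in the scaling limit $q=\theta\lambda^{1/\mu}$) and Lemma~\ref{lemma:2} (for the $q$-derivatives). First I would formalise aggregation: a supra-firm obtained by merging $M$ independent base firms has size $S=\sum_{m=1}^M S_m$ and absolute growth $R=\sum_{m=1}^M R_m$, while its sub-units are the pooled collection of those of its constituents, so that its sub-unit count is $K=\sum_{m=1}^M K_m$. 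Because the pairs $(S_m,R_m)$ are independent and the Fourier--Laplace transform turns additive combination into a product, one has $\widehat{P}_{\mathrm{supra}}(\lambda,q)=\widehat{P}(\lambda,q)^M$, and more generally $\widehat{P}_{\mathrm{supra}}=G_M(\widehat{P})$ when the multiplicity $M$ is itself random with probability generating function $G_M$.

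The core step is to show that exponentiation preserves the relevant singular structure. By Corollary~\ref{coro:1}, $\widehat{P}(\lambda,0)=1-a_1\lambda-a_\alpha\lambda^\alpha-a_\mu\lambda^\mu+\dots$; raising to the power $M$ and expanding for $\lambda\to 0$ yields $1-Ma_1\lambda-Ma_\alpha\lambda^\alpha-Ma_\mu\lambda^\mu+\dots$, i.e. exactly the same two singular exponents $\alpha$ and $\mu$ with prefactors merely rescaled by $M$. By the uniqueness of the inverse Laplace transform already invoked in the proof of Proposition~\ref{prop:fsd}, the supra-firm size distribution retains the two-Pareto form with unchanged exponents, so Proposition~\ref{prop:fsd} survives. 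Applying the same argument to Corollary~\ref{coro:2}, the expansion $\widehat{P}(\lambda,\theta\lambda^{1/\mu})^M=1-M\lambda\,\mathbb{E}[K]\bigl(\mathbb{E}[s]+\mu s_0^\mu\theta^\mu I(\mu)\bigr)+\dots$ keeps the functional form required by the matching argument in Proposition~\ref{prop:gmm_levy}, so $P(g|S)$ is again a truncated symmetric Lévy law of the same index $\mu$; through $\sigma=\sigma_0\sqrt{\mathcal{H}}$ this simultaneously transfers Proposition~\ref{prop:vol_dist}. For the moments, differentiating $\widehat{P}^M$ twice in $q$ at $q=0$ produces $M\,\widehat{P}(\lambda,0)^{M-1}\,\partial_q^2\widehat{P}|_{q=0}$, and since $\widehat{P}(\lambda,0)\to 1$ as $\lambda\to 0$ the leading $\lambda^{\mu-2}$ singularity of Lemma~\ref{lemma:2} is unchanged; the same holds for the higher even derivatives, so the $S^{\alpha-\mu}$ scaling of the conditional volatility moments in Proposition~\ref{prop:vol_moments} is preserved.

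A complementary and more transparent reading of the same fact is that the framework depends on $P(s)$ and $P(K)$ only through their tail exponents: pooling sub-units leaves the sub-unit-size law Pareto-$\mu$ untouched, while $K=\sum_m K_m$ is a sum of i.i.d.\ Pareto-$\alpha$ variables whose tail, by the single-big-jump principle for regularly varying laws, is again $\sim x^{-\alpha}$ with prefactor multiplied by $M$. Since the exponents $(\alpha,\mu,\beta)$ are exactly the quantities that enter all four statements, their invariance under aggregation delivers the result verbatim.

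The main obstacle is precisely the exponentiation step: I must rule out that taking the $M$-th power generates a new singular term heavier than $\lambda^{\min(\alpha,\mu)}$. This requires checking that the analytic $\lambda^{1}$ contribution cannot, through the multinomial expansion of $\widehat{P}^M$, contaminate the fractional powers $\lambda^\alpha$ and $\lambda^\mu$ at leading order, and — when $M$ is random — that its law is light enough (finite mean, so that $G_M$ is analytic at $1$) that it does not itself inject a tail heavier than $\alpha$. Granted these mild conditions, which hold for any fixed $M$ and for any reasonable aggregation scheme with light multiplicity tails, the four propositions follow.
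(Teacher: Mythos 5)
Your proposal is correct, and your ``complementary and more transparent reading'' is in fact exactly the paper's own proof: the paper argues in one paragraph that merging firms pools their sub-units, that the sub-unit sizes remain Pareto-$\mu$, and that $K=\sum_m K_m$ is a sum of i.i.d.\ Pareto-$\alpha$ variables and hence again asymptotically Pareto-$\alpha$, so the supra-firms satisfy Assumptions~\ref{hp:gibrat}--\ref{hp:3} verbatim and Propositions~\ref{prop:fsd}--\ref{prop:gmm_levy} apply unchanged. Your primary, transform-based route is a genuinely different (and heavier) verification of the same invariance: it buys an explicit check that exponentiation $\widehat{P}\mapsto\widehat{P}^M$ cannot create new singular terms (indeed, since $1<\alpha<\mu<2$, all cross terms in the multinomial expansion have exponents at least $2>\mu$), and it extends cleanly to random multiplicity $M$ with finite mean, which the paper does not discuss. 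Two caveats on your transform route: first, when differentiating $\widehat{P}^M$ twice in $q$ you silently drop the term $M(M-1)\widehat{P}^{M-2}\bigl(\partial_q\widehat{P}\vert_{q=0}\bigr)^2$; it does vanish, but only because the shocks $\eta$ have zero mean so that $\partial_q\widehat{P}\vert_{q=0}=\iu\,\mathbb{E}[R e^{-\lambda S}]=0$, and this should be said. Second, Proposition~\ref{prop:vol_dist} does not really follow ``through $\sigma=\sigma_0\sqrt{\mathcal{H}}$'' from the transform expansion alone: its proof in the paper rests on the structural split of large firms into the many-sub-unit and few-sub-unit populations, which is an assumption-level fact about the aggregated entity; so for that proposition your second reading (the paper's argument) is not merely complementary but necessary.
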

\begin{proof} Consider two firms, $i$ and $j$, with $K_i$ and $K_j$
  sub-units respectively. Since both are Pareto distributed with
  exponent $\alpha$, the additive aggregation of these two firms
  produces a new entity with $K_i+K_j$ sub-units, which is also
  asymptotically Pareto with the same exponent $\alpha$. The size
  distribution of the sub-units clearly remains Pareto distributed as
  well. As a consequence, starting with $N$ firms which are then
  merged into $N/n$ ``super-firms'' leads to the same statistical
  properties for size, growth and the size/growth-volatility
  relationship discussed above.
\end{proof}

\newpage

\section{Descriptive statistics and further investigations}
\label{app:f-inv}

\subsection{Main sample}
\label{app:main}

\noindent In this Appendix we report descriptive statistics on size
($\tilde{S}$), normalized size ($S$) and on their $(\log)$ growth
rates for the main sample used in the paper. We also discuss the
existence of the hump $H(\sigma)$ predicted by
Proposition~\ref{prop:vol_dist} and explore the consequences of using
the standard deviation to proxy for volatility.

\medskip

\noindent {\bf Descriptive statistics}. In Table~\ref{tab:sample-desc}
statistics are reported at the firm-year level. This sample contains
more thna 1.2 millions of firm-year observations, with an extreme
range of variation both for sizes and growth rates and for the three
variants (nominal, deflated and normalized). As expected different
variants of growth rates do not differ much and, less expected, their
range of variation is well centered around zero.

\begin{table}[H]
  \centering 
  \caption{Descriptive statistics on size and growth rates (main sample)} 
  \label{tab:sample-desc}
  \scalebox{0.6}{    
    \begin{threeparttable}
      \input{tables/desc.stats.tex}
      \begin{tablenotes}
        \footnotesize
      \item {\it Notes}: Growth rates are computed as logarithmic
        differences. Data source: Compustat. \par
      \end{tablenotes}
    \end{threeparttable}
  }
\end{table}

\noindent Table~\ref{tab-app:sample-desc-f} reports statistics at the
firm level. In this sample there are $24,233$ different firms. Each
firm has on average $46$ growth rates whose volatility, proxied by the
corresponding Mean Absolute Deviation (mad) adjusted by
$\sqrt{\pi/2}$, ranges from $0.002$ to $5.9$ with an average of
$0.48$.

\begin{table}[H]
  \centering 
  \caption{Descriptive statistics on size and growth volatility (main sample)} 
  \label{tab-app:sample-desc-f}
  \scalebox{0.6}{
    \begin{threeparttable}  
      \input{tables/desc.stats_firm.tex}
      \begin{tablenotes}
        \footnotesize
      \item {\it Notes}: ``mad'' and ``sd'' represent the mean absolute
        deviation adjusted by the factor $\sqrt{\pi/2}$ and the
        standard deviation. Data source: Compustat. \par
      \end{tablenotes}
    \end{threeparttable}
  }
\end{table}

\noindent Table~\ref{tab-app:gr-binned} reports statistics for
normalized growth rates $\tilde{g}$ grouped in the $25$ size bins
built on the average normalized size $\bar{S}_i$. By construction each
bin contains the same number of firms, but a significantly different
number of observations increasing with size. Looking at the range of
variation we observe few extreme realizations which we have checked do
not drive our main results.

\begin{table}[H]
  \centering 
  \caption{Rescaled growth rates by size bin (main sample)} 
  \label{tab-app:gr-binned}
  \scalebox{0.6}{
    \begin{threeparttable}
      \input{tables/desc.stats.gr.firm.bin25.tex}
      \begin{tablenotes}
        \footnotesize
      \item {\it Notes}: Descriptive statistics of heterogeneously
        rescaled growth rates $\hat{g}$ binned according to their
        average size $\bar{S}$. Data source: Compustat. \par
      \end{tablenotes}
    \end{threeparttable}
  }
\end{table}

\noindent{\bf Missing hump in
  $H(\sigma)$}. Figure~\ref{fig-app:pred1_hump} reports, on a double
log scale, the distributions of the growth rate volatility for various
size bins. We see no sign of an hump for large volatilities and we
confirm visual inspection with the non-parametric test of bi-modality
presented in \cite{ameijeiras_et_al_2019} which rejects the existence
of a second mode at any reasonable level of statistical significance
for all bins. Together with bin 1, 5, 10, 15, 20 and 25 we report bin
8 and 9 associated with the lower p-value for the null hypothesis that
the true number of modes is 1.

\begin{figure}[H]
  \caption{Growth volatility distribution by size bin (main sample)}
  \label{fig-app:pred1_hump}
  \begin{center}    
    \scalebox{1}{
      \input{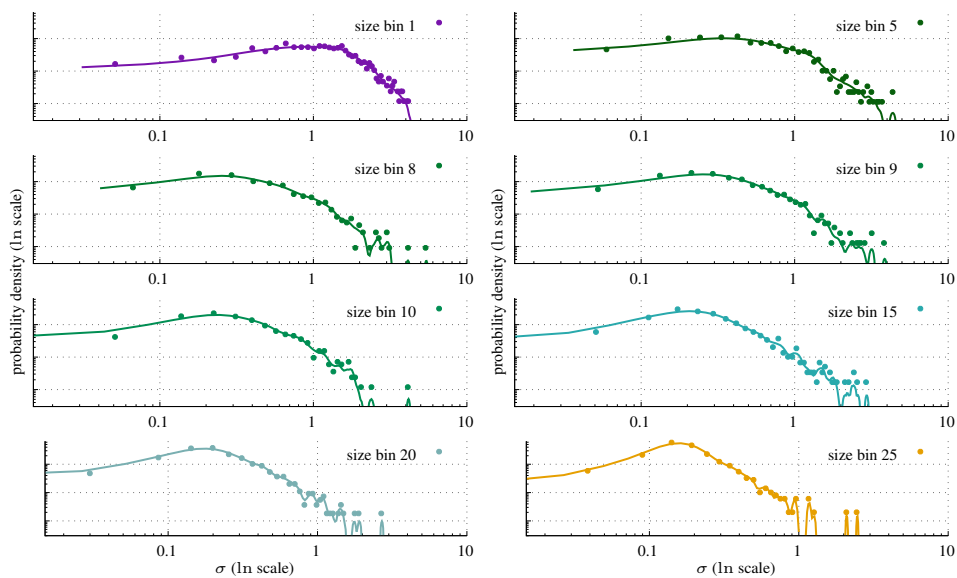}
      }
  \end{center}
  {\scriptsize {\it Notes}. Panels report for different bins defined
    in term of normalized size, the distribution of the growth rate
    volatility on a double log scale. Points represent the histogram
    while solid lines the kernel density estimates. In all the
    figures volatility is computed as the mean absolute deviation
    multiplied by the factor $\sqrt{\pi/2}$. Data source:
    Compustat. \par}
\end{figure}

\begin{table}[H] \centering 
  \caption{ACR non parametric test of bi-modality for the growth volatility distribution} 
  \label{tab-app:acr}
  \scalebox{0.6}{
    \input{tables/bimodality.test.tex}
  }
\end{table}

\noindent {\bf Growth volatility estimated using the standard
  deviation}. Figure~\ref{fig-app:pred1_sd} and
Figure~\ref{fig-app:scaling_sd} report the key empirical results on
growth volatility when we replace the mean absolute deviation (mad)
with the standard deviation (sd). The shape of its distribution, its
scaling with size and the quantitative fir of the Modified Inverse
Gamma distribution (MIG) remain qualitatively the same.

\begin{figure}[H]
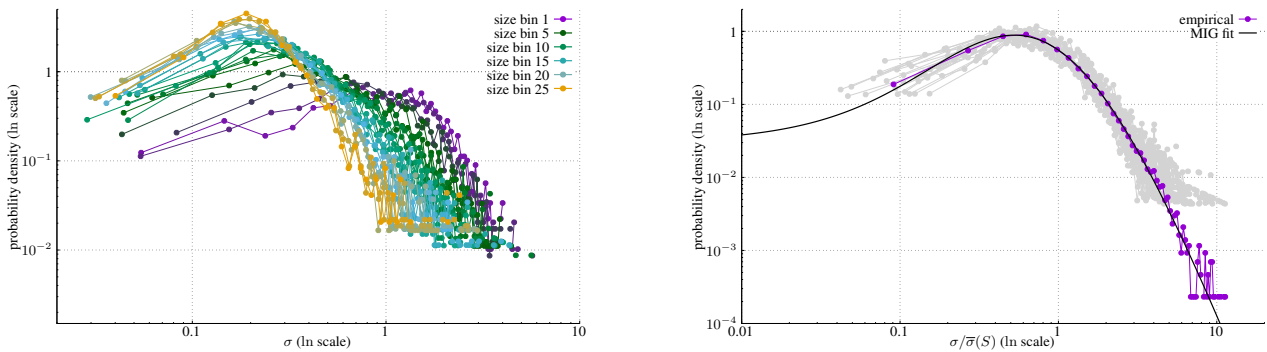

  \caption{Growth volatility distribution (volatility is computed as standard deviation)}
  \label{fig-app:pred1_sd}
  \begin{center}    
    \begin{minipage}[t]{0.47\linewidth}      
     \scalebox{0.6}{
      \input{./figures/fig_SD_binned.tex}
      }
    \end{minipage}
    \hfill
    \begin{minipage}[t]{0.47\linewidth}
     \scalebox{0.6}{
      \input{./figures/fig_SD_rescaled_binned.tex} 
      }
    \end{minipage}
  \end{center}
  {\scriptsize {\it Notes}. \underline{Left panel} reports on a double
    log scale and for 25 bins, defined in term of normalized size, the
    distribution of the growth rate volatility. \underline{Right
      panel} reports on a double log scale the distribution of the
    growth volatility rescaled by the average volatility observed in
    each bin together with an Inverse Gamma fit (solid line). The ML
    estimates of the scale, shape and location parameter are
    4.922(0.147), 4.668(0.089), and 0.344(0.010) respectively. In all
    the figures volatility is computed as the standard deviation. Data
    source: Compustat. \par}  
\end{figure}

\begin{figure}[H]
  \caption{Size and growth volatility (volatility is computed as
    standard deviation)}
   \label{fig-app:scaling_sd}
   \begin{center}
     \scalebox{1}{
       \input{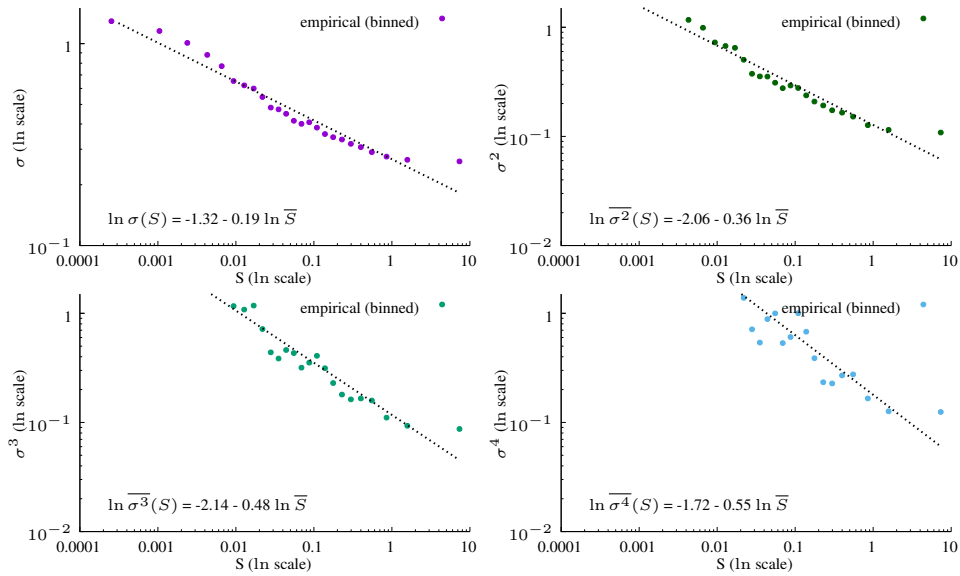}
     }
   \end{center}
   {\scriptsize {\it Notes}. The four panels report on a double log
     scale the binned relation between normalized size and the first
     four sample moments of growth volatility together with an OLS
     linear fit. In all panel the number of bins is set to $25$ and
     volatility is computed as the standard deviation. Data source:
     Compustat. \par}
 \end{figure}

\subsection{Sample with firms with 20 or more growth rates}
\label{app:20obs}

\noindent In this Appendix to alleviate concerns associated with
measurement errors in estimating growth volatilities we change the
sample and consider firms with $20$ or more growth rates only.

\medskip

\noindent {\bf Descriptive
  statistics}. Table~\ref{tab:sample-desc-20obs} reports statistics at
the firm-year level. The overall picture remains very similar to that
reported for the main sample.
\begin{table}[H]
  \centering 
  \caption{Descriptive statistics on size and growth rates (firms with 20 or more growth rates)} 
  \label{tab:sample-desc-20obs}
  \scalebox{0.6}{    
    \begin{threeparttable}
      \input{tables/desc.stats_20obs.tex}
      \begin{tablenotes}
        \footnotesize
      \item {\it Notes}: Growth rates are computed as logarithmic
        differences. Data source: Compustat. \par
      \end{tablenotes}
    \end{threeparttable}
  }
\end{table}

Table~\ref{tab-app:sample-desc-20obs-f} reports statistics firm
level. In this sample the number of firms reduces to $15,788$ but, a
part of that, growth rates display statistical properties similar to
those observed in the main sample.
\begin{table}[H]
  \centering 
  \caption{Descriptive statistics on size and growth volatility (firms with 20 or more growth rates)} 
  \label{tab-app:sample-desc-20obs-f}
  \scalebox{0.6}{
    \begin{threeparttable}  
      \input{tables/desc.stats_20obs_firm.tex}
      \begin{tablenotes}
        \footnotesize
      \item {\it Notes}: ``mad'' and ``sd'' represent the mean absolute
        deviation adjusted by the factor $\sqrt{\pi/2}$ and the
        standard deviation. Data source: Compustat. \par
      \end{tablenotes}
    \end{threeparttable}
  }
\end{table}

\noindent {\bf Growth volatility}. Figure~\ref{fig-app:pred1_20obs} and
Figure~\ref{fig-app:scaling_20obs} report the key empirical results on
growth volatility. The shape of its distribution, its scaling with
size and the quantitative fit of the Modified Inverse Gamma
distribution (MIG) remain qualitatively the same as in the main
sample.
\begin{figure}[H]
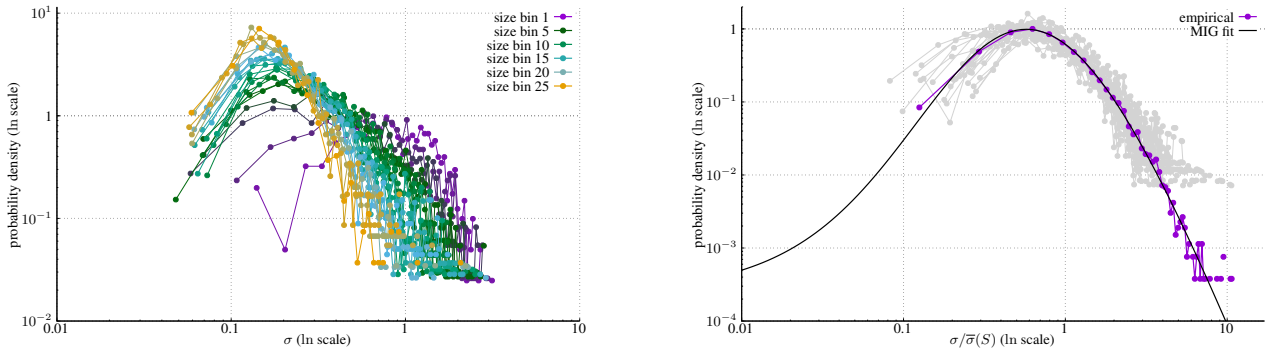

  \caption{Growth volatility distribution (firms with 20 or more growth rates)}
  \label{fig-app:pred1_20obs}
  \begin{center}    
    \begin{minipage}[t]{0.47\linewidth}      
        \scalebox{0.6}{
      \input{./figures/fig_MAD_binned_20obs.tex}
      }
    \end{minipage}
    \hfill
    \begin{minipage}[t]{0.47\linewidth}
     \scalebox{0.6}{
      \input{./figures/fig_MAD_rescaled_binned_20obs.tex} 
      }
    \end{minipage}
  \end{center}
  {\scriptsize {\it Notes}. \underline{Left panel} reports on a double
    log scale and for 25 bins, defined in term of normalized size, the
    distribution of the growth rate volatility. \underline{Right
      panel} reports on a double log scale the distribution of the
    growth volatility rescaled by the average volatility observed in
    each bin together with an Inverse Gamma fit (solid line). The ML
    estimates of the scale, shape and location parameter are
    4.381(0.151), 4.638(0.104), and 0.204(0.010) respectively. In all
    the figures volatility is computed as the mean absolute deviation
    multiplied by the factor $\sqrt{\pi/2}$. Data source:
    Compustat. \par}
\end{figure}
\begin{figure}[H]
  \caption{Size and growth volatility (firms with 20 or more growth rates)}
    \label{fig-app:scaling_20obs}
  \begin{center}
    \input{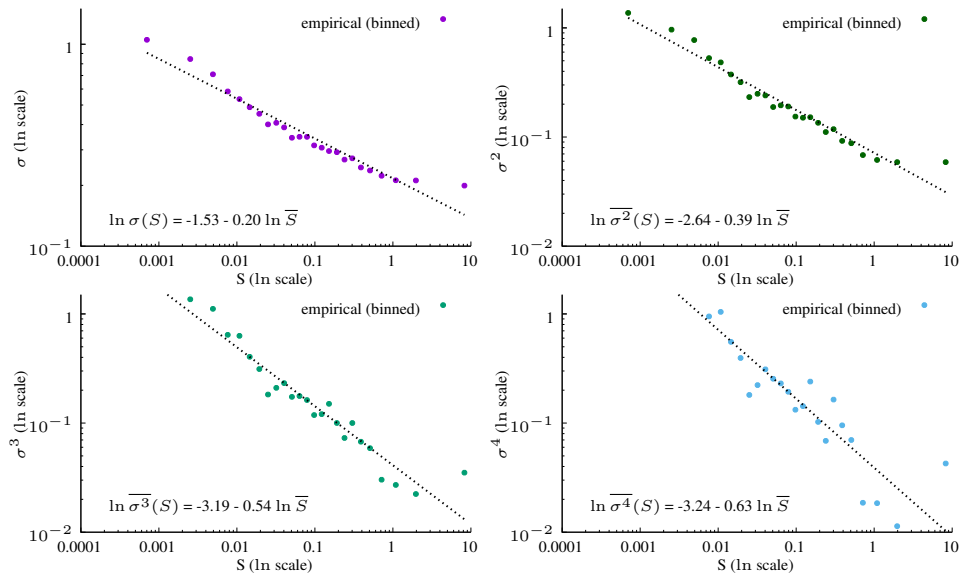} 
  \end{center}
  {\scriptsize {\it Notes}. The four panels report on a double log
    scale the binned relation between normalized size and the first
    four sample moments of growth volatility together with an OLS
    linear fit. In all panel the number of bins is set to $25$ and
    volatility is computed as the mean absolute deviation multiplied
    by the factor $\sqrt{\pi/2}$. Data source: Compustat. \par}
\end{figure}

\noindent {\bf GSE fit on rescaled growth rates}. Table~\ref{tab-app:gse_sym_20obs}
reports the GSE estimates for this sample. The main message discussed
in the paper looks qualitatively unaffected by this change in the
sample selection criterion.
\begin{table}[H]
  \centering 
  \caption{Symmetric GSE fit (firms with 20 or more growth rates)} 
  \label{tab-app:gse_sym_20obs}
  \scalebox{0.6}{
    \begin{threeparttable}
      \input{tables/gaussianization_nbins25_20obs_sym_firm.tex}
      \begin{tablenotes}
        \footnotesize
      \item {\it Notes}: estimates are obtained with a non lienar
        least square algorithm applied to kernel density estimates
        evaluated on a regular grid in the interval $[-8,8]$. This
        procedure is applied to homogeneously rescaled growth rates,
        and heterogeneously rescaled growth rates in different size
        bins and computed over different time windows.  Data source:
        Compustat. \par
      \end{tablenotes}
    \end{threeparttable}
    }
\end{table}

\subsection{Sample with firms whose fiscal year ends in December}
\label{app:endDec}

\noindent In this Appendix we explore if the presence in the sample of
firms with asynchronous fiscal years affects our results. To this aim
we change the sample and consider firms whose fiscal year ends in
December only.

\medskip

{\bf Descriptive statistics}. Table~\ref{tab-app:sample-desc-fyr-end-Dec}
reports statistics at the firm-year level. The overall picture remains
very similar to that reported for the main sample.

\begin{table}[H]
  \centering 
  \caption{Descriptive statistics on size and growth rates (firms whose fiscal year ends in December)} 
  \label{tab-app:sample-desc-fyr-end-Dec}
  \scalebox{0.6}{    
    \begin{threeparttable}
      \input{tables/desc.stats_fyr_end_Dec.tex}
      \begin{tablenotes}
        \footnotesize
      \item {\it Notes}: Growth rates are computed as logarithmic
        differences. Data source: Compustat. \par
      \end{tablenotes}
    \end{threeparttable}
  }
\end{table}

Table~\ref{tab-app:sample-desc-fyr-end-Dec-f} reports statistics firm
level. In this sample the number of firms reduces to $10,974$ but, a
part of that, growth rates display statistical properties similar to
those observed in the main sample.

\begin{table}[H]
  \centering 
  \caption{Descriptive statistics on size and growth volatility (firms whose fiscal year ends in December)} 
  \label{tab-app:sample-desc-fyr-end-Dec-f}
  \scalebox{0.6}{
    \begin{threeparttable}  
      \input{tables/desc.stats_fyr_end_Dec_firm.tex}
      \begin{tablenotes}
        \footnotesize
      \item {\it Notes}: ``mad'' and ``sd'' represent the mean absolute
        deviation adjusted by the factor $\sqrt{\pi/2}$ and the
        standard deviation. Data source: Compustat. \par
      \end{tablenotes}
    \end{threeparttable}
  }
\end{table}

\noindent {\bf Growth volatility}. Figure~\ref{fig-app:pred1_fyr_end_Dec}
and Figure~\ref{fig-app:scaling_fyr_end_Dec} report the key empirical
results on growth volatility. The shape of its distribution, its
scaling with size and the quantitative fit of the Modified Inverse
Gamma distribution (MIG) remain qualitatively the same as in the main
sample.

\begin{figure}[H]
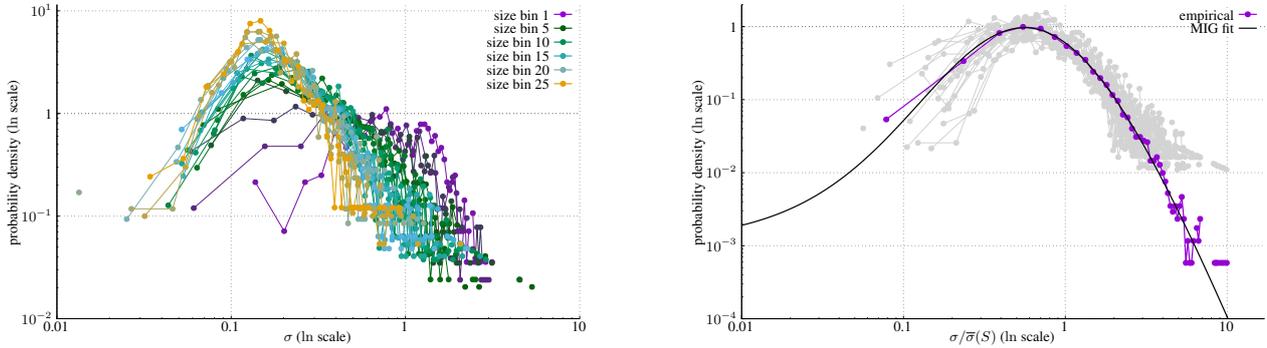

  \caption{Growth volatility distribution (firms whose fiscal year ends in December)}
  \label{fig-app:pred1_fyr_end_Dec}
  \begin{center}    
    \begin{minipage}[t]{0.47\linewidth}  
     \scalebox{0.6}{
          \input{./figures/fig_MAD_binned_fyr_end_Dec.tex}
          }
    \end{minipage}
    \hfill
    \begin{minipage}[t]{0.47\linewidth}
     \scalebox{0.6}{
      \input{./figures/fig_MAD_rescaled_binned_fyr_end_Dec.tex} 
      }
    \end{minipage}
  \end{center}
  {\scriptsize {\it Notes}. \underline{Left panel} reports on a double
    log scale and for 25 bins, defined in term of normalized size, the
    distribution of the growth rate volatility. \underline{Right
      panel} reports on a double log scale the distribution of the
    growth volatility rescaled by the average volatility observed in
    each bin together with an Inverse Gamma fit (solid line). The ML
    estimates of the scale, shape and location parameter are    
    4.339(0.174), 4.540(0.119), and 0.225(0.012) respectively. In all
    the figures volatility is computed as the mean absolute deviation
    multiplied by the factor $\sqrt{\pi/2}$. Data source:
    Compustat. \par}
\end{figure}
\begin{figure}[H]
  \caption{Size and growth volatility (only firms with end of fiscal year in December)}
  \label{fig-app:scaling_fyr_end_Dec}
  \begin{center}
    \input{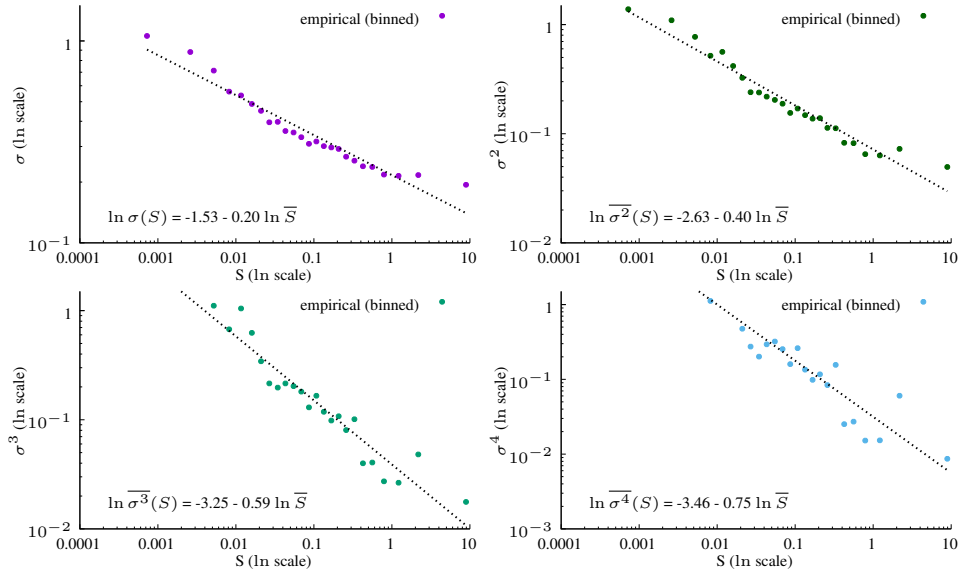} 
  \end{center}
  {\scriptsize {\it Notes}. The four panels report on a double log
    scale the binned relation between normalized size and the first
    four sample moments of growth volatility together with an OLS
    linear fit. In all panel the number of bins is set to $25$ and
    volatility is computed as the mean absolute deviation multiplied
    by the factor $\sqrt{\pi/2}$. Data source: Compustat. \par}
\end{figure}

\noindent {\bf GSE fit on rescaled growth
  rates}. Table~\ref{tab-app:gse_sym_fyr_end_Dec} reports the GSE estimates
for this sample. The main message discussed in the paper looks
qualitatively unaffected by this change in the sample selection
criterion.

\begin{table}[H]
  \centering 
  \caption{Symmetric GSE fit} 
  \label{tab-app:gse_sym_fyr_end_Dec}
  \scalebox{0.6}{
    \begin{threeparttable}
      \input{tables/gaussianization_nbins25_fyr_end_Dec_sym_firm.tex}
      \begin{tablenotes}
        \footnotesize
      \item {\it Notes}: estimates are obtained with a non lienar
        least square algorithm applied to kernel density estimates
        evaluated on a regular grid in the interval $[-8,8]$. This
        procedure is applied to homogeneously rescaled growth rates,
        and heterogeneously rescaled growth rates in different size
        bins and computed over different time windows.  Data source:
        Compustat. \par
      \end{tablenotes}
    \end{threeparttable}
    }
\end{table}

\end{document}